      \OR\ifentrytype{incollection}\OR\ifentrytype{inproceedings}%
      \OR\ifentrytype{inreference}\OR\ifentrytype{unpublished}}
  \renewcommand*{\AC@hyperlink}[2]{%
    \begingroup
      \hypersetup{hidelinks}%
      \hyperlink{#1}{#2}%
    \endgroup
  }%
\crefname{equation}{Equation}{Equations}
\Crefname{equation}{Equation}{Equations}
\newcommand{\NQMSA}{N_{\class{QMSA}}}
\newcommand{\length}[1]{\left|#1\right|}
\renewcommand{\H}{\mathcal{H}}
\renewcommand{\i}{\ensuremath\mathrm{i}}
\newcommand{\QMMWW}{\textrm{QUANTUM~MONOTONE~MINIMUM~SATISFYING~ASSIGNMENT}}
\newcommand{\gsb}{\ket{\mathrm{gs}_b}}
\newcommand{\Span}{\textup{Span}}
\newcommand{\norm}[1]{\left\lVert #1 \right\rVert}       
\newcommand{\enorm}[1]{\norm{#1}_{\mathrm{2}}}      
\newcommand{\snorm}[1]{\norm{#1}_{\mathrm {\infty}}}    
\newcommand{\psiyts}{\ket{\Psi_{y,t,s}}}
\let\Re\relax
\DeclareMathOperator{\Re}{Re}
\DeclareMathOperator{\tr}{tr}
\providecommand\given{}
\newcommand\SetSymbol[1][]{%
  \nonscript\:#1\vert
  \allowbreak
  \nonscript\:
  \mathopen{}}
\DeclarePairedDelimiterX\set[1]\{\}{%
  \renewcommand\given{\SetSymbol[\delimsize]}
  #1
}
\newcommand{\abs}[1]{\left\lvert #1 \right\rvert}
\newcommand{\complex}{{\mathbb C}}
\newcommand{\reals}{{\mathbb R}}
\newcommand{\nats}{{\mathbb N}}
\newcommand{\wt}[1]{\widetilde{#1}}
\newcommand{\wtt}{\wt{t}}
\newcommand{\wts}{\wt{s}}
\newcommand{\wto}{\wt{1}}
\newcommand{\RR}{\mathbb{R}}
\newcommand{\m}{m_V}
\newcommand{\n}{n_V}
\newcommand{\absA}{\abs{A}}
\newcommand{\absB}{\abs{B}}
\newcommand{\absC}{\abs{C}}
\newcommand{\absD}{\abs{D}}
\def\ket#1{ | #1 \rangle}
\def\bra#1{{\langle #1 | }}
\newcommand{\ketbra}[2]{\ket{#1}\!\bra{#2}}        
\newcommand{\braket}[2]{\mbox{$\langle #1  | #2 \rangle$}}
\newcommand{\sandwichb}[3]
  {\bigl\langle  #1 \bigr| #2 \bigl| #3 \bigr\rangle}
\newcommand{\SFGPQ}{S_{FGPQ}}
\newcommand{\etayst}{\ket{\eta_{y,s,t}}}
\newcommand{\etaoo}{\ket{\eta_{0,1,1}}}
\newcommand{\Tc}{\mathcal{T}_C}
\newcommand{\Td}{\mathcal{T}_D}
\newcommand{\VQA}{\textup{MIN-VQA}}
\newcommand{\QAOA}{\textup{MIN-QAOA}}
\newcommand{\opt}{\mathrm{opt}}
\newtheorem{theorem}{Theorem}
\newtheorem{lemma}{Lemma}
\newtheorem{cor}[theorem]{Corollary}
\newtheorem{obs}[theorem]{Observation}
\newtheorem{ass}[theorem]{Assumption}
\newtheorem{problem}{Problem}
\newtheorem{definition}{Definition}
\begin{document}
\date{}
\title{The optimal depth of variational quantum algorithms is QCMA-hard to approximate}

\author{Lennart Bittel\thanks{Institute for Theoretical Physics,
  Heinrich Heine University D{\"u}sseldorf, Germany. Email: lennart.bittel@uni-duesseldorf.de.} \and Sevag Gharibian\thanks{Department of Computer Science, and Institute for Photonic Quantum Systems, Paderborn University, Germany. Email: sevag.gharibian@upb.de.} \and Martin Kliesch\thanks{Institute for Theoretical Physics,
  Heinrich Heine University D{\"u}sseldorf, and Institute for Quantum-Inspired and Quantum Optimization, Hamburg University of Technology, Germany. Email: martin.kliesch@tuhh.de.}}

 \hypersetup{
       pdftitle = {The optimal depth of variational quantum algorithms is QCMA-hard to approximate},
       pdfauthor = {Lennart Bittel, Sevag Gharibian, Martin Kliesch},
       pdfsubject = {Quantum physics},
       pdfkeywords = {hybrid, quantum, algorithm,
              VQE, variational quantum eigensolver,
              VQA, variational quantum algorithms,
              QAOA, quantum approximate optimization algorithm,
              variational, circuit, depth,
              PQC, parameterized,
              barren, plateau,
              Hamiltonian, generator,
              optimization,
              QCMQ, QMA, NP, hard, complete, approximation,
              computational, complexity, theory
              }
      }

\maketitle

\begin{abstract}
    Variational Quantum Algorithms (VQAs), such as the Quantum Approximate Optimization Algorithm (QAOA) of [Farhi, Goldstone, Gutmann, 2014], have seen intense study towards near-term applications on quantum hardware. A crucial parameter for VQAs is the \emph{depth} of the variational ``ansatz'' used --- the smaller the depth, the more amenable the ansatz is to near-term quantum hardware in that it gives the circuit a chance to be fully executed before the system decoheres. In this work, we show that approximating the optimal depth for a given VQA ansatz is intractable. Formally, we show that for any constant $\epsilon>0$, it is QCMA-hard to approximate the optimal depth of a VQA ansatz within multiplicative factor $N^{1-\epsilon}$, for $N$ denoting the encoding size of the VQA instance. (Here, Quantum Classical Merlin-Arthur (QCMA) is a quantum generalization of NP.) We then show that this hardness persists
    in the even ``simpler'' QAOA-type settings. 
    To our knowledge, this yields the first natural QCMA-hard-to-approximate problems. 
\end{abstract}

\section{Introduction}

In the current era of Noisy Intermediate Scale Quantum (NISQ) devices, quantum hardware is (as the name suggests) limited in size and ability.
Thus, NISQ-era quantum algorithm design has largely focused on \emph{hybrid} classical-quantum setups, which ask: What types of computational problems can a classical supercomputer, paired with a \emph{low-depth} quantum computer, solve?
This approach, typically called Variational Quantum Algorithms (VQA), has been studied intensively in recent years (see, e.g. \cite{Cerezo20VariationalQuantumAlgorithms,Bharti21NoisyIntermediateScale} for reviews), with Farhi, Goldstone and Gutmann's \acf{QAOA} being a prominent example~\cite{FarGolGut14}.

More formally, VQAs roughly work as follows.
One first chooses a variational ansatz (i.e.\ parameterization) over a family of quantum circuits.
Then, one iterates the following two steps until a ``suitably good'' parameter setting is found:
\begin{enumerate}
    \item Use a classical computer to optimize the ansatz parameters variationally\footnote{In practice, this typically means heuristic optimization.}.
    \item Run the resulting parameterized quantum algorithm on a NISQ device to evaluate the ``quality'' of the chosen parameters (relative to the computational problem of interest).
\end{enumerate}

The essential advantage of this setup over more traditional quantum algorithm design techniques (such as full Trotterization of a desired Hamiltonian evolution) is that one can attempt to minimize the \emph{depth} of the ansatz used.
(A formal definition of ``depth'' is given in \Cref{def:VQA}; briefly, it is the number of Hamiltonian evolutions the ansatz utilizes.)
This possibility gives VQAs a potentially crucial advantage on near-term quantum hardware (i.e.\ noisy hardware without quantum error correction),
because a NISQ device can, in principle, execute a low-depth ansatz before the system decoheres,
i.e.\ before environmental noise destroys the ``quantumness'' of the computation. From an analytic perspective, low-depth ansatzes also have an important secondary benefit --- VQAs of superlogarithmic depth are exceedingly difficult to analyze via worst-case complexity. Sufficiently low-depth setups, however, sometimes \emph{can} be rigorously analyzed, with the groundbreaking QAOA work of~\cite{FarGolGut14} for MAX-CUT being a well-known example. Thus, estimating the optimal depth for a \ac{VQA} appears central to its use in near-term applications.

\subsection{Our results} In this work, we show that it is intractable to approximate the optimal depth for a given \ac{VQA} ansatz, even within large multiplicative factors.
Moreover, this hardness also holds for the restricted ``simpler'' case of the \ac{QAOA}.
To make our claim rigorous, we first define the \ac{VQA} optimization problem we study. (Intuition to follow.)

\begin{restatable}[VQA minimization (\VQA$(k,l)$)]{problem}{defvqa}\label{def:VQA}
    For an $n$-qubit system:
    \begin{itemize}
        \item Input:
        \begin{enumerate}
          \item Set $H=\set{H_i}$ of Hamiltonians\footnote{An $n$-qubit Hamiltonian $H$ is a $2^n\times 2^n$ Hermitian matrix. Any unitary operation $U$ on a quantum computer can be generated via an appropriate choice of Hamiltonian $H$ and evolution time $t\geq 0$, i.e. $U=e^{iHt}$.}, where $H_i$ acts non-trivially only on a subset\footnote{For \Cref{thm:QCMA}, it will suffice to take $k\in O(1)$. In principle, however, containment in QCMA holds for any $k\leq n$, so long as the $H_i$ are sparse in the standard Hamiltonian simulation sense~\cite{AT03}. By sparse, one means that each row $r$ of $H_i$ contains at most $r$ non-zero entries, which can be computed in poly-time given $r$.} $S_i\subseteq[n]$ of size $\abs{S_i}=k$.
          \item An $l$-local observable $M$ acting on a subset of $l$ qubits.
          \item Integers $0\leq m\leq m'$ representing circuit depth thresholds.
        \end{enumerate}
        \item Output:
        \begin{enumerate}
            \item YES if there exists a list of at most $m$ angles\footnote{Throughout \Cref{def:VQA}, for clarity we assume all angles are specified to $\poly(n)$ bits.
            } $(\theta_1,\ldots, \theta_m)\in \RR^m$ and a list $(G_1,\ldots, G_m)$
            of Hamiltonians from $H$ (repetitions permitted) such that
                \begin{align}\label{eqn:first}
                    \ket{\psi}\coloneqq e^{i\theta_mG_m}\cdots e^{i\theta_1G_1}\ket{0\cdots 0}
                \end{align}
                satisfies $\bra{\psi}M\ket{\psi}\leq 1/3$.

            \item NO if for all lists of at most $m'$ angles $(\theta_1,\dots,\theta_{m'})\in \RR^{m'}$ and all lists $(G_1,\ldots, G_{m'})$
            of Hamiltonians from $H$ (repetitions permitted),
                \begin{align}
                    \ket{\psi}\coloneqq e^{i\theta_{m'}G_{m'}}\cdots e^{i\theta_1G_1}\ket{0\cdots 0}
                \end{align}
                satisfies $\bra{\psi}M\ket{\psi}\geq 2/3$.
        \end{enumerate}
    \end{itemize}
\end{restatable}

\noindent For intuition, recall that a VQA ansatz is a parameterization over a family of quantum circuits. Above, the ansatz is parameterized by angles $\theta_j$, and the family of quantum circuits is generated by Hamiltonians $H_j$. The aim is to pick a \emph{minimum-length} sequence of Hamiltonian evolutions $e^{i\theta_jG_j}$, so that the generated state $\ket{\psi}$ has (say) low overlap with the target observable, $M$. For clarity, throughout this work, by ``depth'' of a VQA ansatz, we are referring to the standard VQA notion of the number of Hamiltonian evolutions $m$ applied\footnote{Alternatively, one could consider the \emph{circuit depth} of any simulation of the desired Hamiltonian sequence in \Cref{def:VQA}. The downside of this is that it would be much more difficult to analyze --- one would presumably first need to convert each $e^{i\theta_jG_j}$ to a circuit $U_j$ via a fixed choice of Hamiltonian simulation algorithm. One would then need to characterize the depth of the concatenated circuit $U_m\cdots U_1$.}.
(In the setting of QAOA, the ``depth'' is often referred to as the ``level'', up to a factor of $2$.)

We remark for \Cref{def:VQA} that we do not restrict the order in which Hamiltonians $H_i$ are applied, and any $H_i$ may be applied multiple times. Moreover, our results also hold if one defines the YES case to {maximize} overlap with $M$ (as opposed to minimize overlap).

Our first result is the following.

\begin{restatable}{theorem}{mainvqa}\label{thm:QCMA}
    \VQA$(k,l)$ is \QCMA-complete for $k\geq 4$, $l=2$, and $m\leq\poly(n)$.
    Moreover, for any $\epsilon>0$,
    it is \QCMA-hard to distinguish between the YES and NO cases of \VQA\ even if $m'/m\geq N^{1-\epsilon}$,
    where $N$ is the encoding size of the instance.
\end{restatable}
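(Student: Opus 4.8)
The plan is to prove QCMA-completeness and then amplify the gap between $m$ and $m'$. Containment in QCMA is the easy direction: a QCMA prover sends the classical witness consisting of the at most $m$ angles $(\theta_1,\ldots,\theta_m)$ (specified to $\poly(n)$ bits) together with the labels of the Hamiltonians $(G_1,\ldots,G_m)$; since each $G_j$ is $k$-local with $k\in O(1)$ (or more generally sparse), the verifier can simulate $e^{i\theta_j G_j}$ efficiently via standard Hamiltonian simulation~\cite{AT03}, prepare $\ket{\psi}$, and estimate $\bra{\psi}M\ket{\psi}$ by measuring the $l$-local observable $M$ and averaging. The promise gap $1/3$ vs.\ $2/3$ gives the required soundness/completeness separation after $\poly(n)$ repetitions.

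For hardness, I would reduce from a canonical QCMA-complete problem --- the natural candidate is a ``circuit-satisfiability'' style QCMA problem: given a poly-size quantum verifier circuit $V$, decide whether there exists a classical string $x$ such that $V$ accepts $\ket{x}\ket{0\cdots 0}$ with high probability, versus all $x$ being rejected with high probability. The key idea is to engineer the VQA instance so that a short sequence of Hamiltonian evolutions can do exactly two things: (i) write a classical bit-string $x$ onto a register, using single-qubit rotations generated by $1$-local (hence $\le k$-local) Hamiltonians like $X_i$ with a chosen angle to flip qubit $i$ or not; and (ii) run the fixed verifier circuit $V$ on $x$ and the ancillas, by decomposing $V$ into its $O(1)$-local gates, each of which is $e^{iH_g t_g}$ for an appropriate $H_g\in H$ and fixed angle $t_g$. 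The observable $M$ is taken to be (a $2$-local reformulation of) the projector onto the ``reject'' outcome of $V$; then a YES instance of the QCMA problem admits a length-$m$ ansatz (with $m = \abs{x} + (\text{gate count of }V)$) achieving $\bra{\psi}M\ket{\psi}\le 1/3$, whereas a NO instance forces $\bra{\psi}M\ket{\psi}\ge 2/3$ for \emph{every} choice of angles and generators of length up to $m'$ --- here one must argue that no sequence of the available Hamiltonian evolutions, regardless of order or repetition, can do anything useful beyond what a classical witness $x$ fed to $V$ could do. Getting $l=2$ and $k\ge 4$ will require a clause/qubit-register gadget construction reminiscent of circuit-to-Hamiltonian reductions, so that $M$ is genuinely $2$-local; this, together with controlling the NO case against \emph{arbitrary} generator sequences (not just the ``intended'' one), is the main obstacle.

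To obtain the inapproximability factor $N^{1-\epsilon}$, I would pad: take the hardness instance above with threshold $m$, and observe that in the NO case the bound $\bra{\psi}M\ket{\psi}\ge 2/3$ holds for \emph{all} sequence lengths, in particular up to any $m' = \poly(n)$ we like. So we may simply set $m' := \lceil m \cdot N^{1-\epsilon}\rceil$, provided this is still $\poly(n)$; since $m=\poly(n)$ and $N=\poly(n)$, the product is $\poly(n)$, and by inflating the instance size $N$ via harmless padding (e.g.\ adding dummy unused Hamiltonians to $H$, or dummy qubits) we can make $m' / m \ge N^{1-\epsilon}$ hold with respect to the padded encoding size. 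The YES case is unaffected (the short witness of length $m$ still works), and the NO case is unaffected (the $2/3$ bound is uniform in the sequence length), so the gap $[1/3,2/3]$ survives. The only care needed is to ensure the padding does not accidentally create new short ansatzes in the NO case that drive the expectation below $2/3$ --- which is why the dummy Hamiltonians should act on fresh ancilla qubits disjoint from the support of $M$, so they commute through and cannot affect $\bra{\psi}M\ket{\psi}$.
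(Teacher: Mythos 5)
Your containment argument and the broad shape of the reduction (write a classical witness with $1$-local flips, then simulate the verifier gate-by-gate) match the paper's starting point, but the proposal has two genuine gaps, one of which you flag yourself and neither of which is resolved.

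First, the soundness step you describe as ``the main obstacle'' --- arguing that no sequence of the available evolutions, in any order, with any angles and repetitions, can do better than feeding some classical $x$ to $V$ --- is not a detail but the core of the proof, and without an explicit mechanism it is simply false for the construction you sketch. If the generators are bare $X_i$'s plus Hamiltonians generating the gates of $V$, a dishonest prover can apply the gates of $V$ out of order, partially, or repeatedly; since a QCMA verifier's gate set is universal, this lets the prover synthesize essentially arbitrary unitaries on the workspace and drive the output qubit to $\ket{1}$ directly, independently of whether any witness is accepted. The paper enforces structure with a Feynman--Kitaev-style clock register: every generator is conditioned on a unary clock, \Cref{l:span} shows that \emph{every} evolution sequence stays in the span of the logical states $V_{s-\absA}\cdots V_1\ket{y}_A\ket{0}_B\ket{\wts}_C\ket{\wtt}_D$, and \Cref{lem:L} shows the observable cannot fire unless the full verifier has been applied in order. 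Some such mechanism is indispensable, and your proposal does not supply it.

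Second, your route to the $N^{1-\epsilon}$ ratio --- reduce from plain QCMA, then declare $m'$ huge because ``the $2/3$ bound is uniform in the sequence length'' --- rests entirely on that missing soundness argument extended to arbitrary lengths, and is not the paper's route. The paper instead reduces from QMSA (\Cref{def:qmmww}), which \Cref{thm:QMSAhard} already shows is \QCMA-hard to approximate within $N^{1-\epsilon}$, and the technical work is to make the reduction \emph{ratio-preserving}: a second clock register $D$ of length $\absD=\lceil L^{1+\delta}\rceil$ forces each proof-bit flip to cost $2\absD-1$ evolutions, so that $m'/m=(g'(2\absD-1)+\absA+L)/(g(2\absD-1)+\absA+L)\approx g'/g$ rather than collapsing to $1$ when $L\gg g$. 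Note that in that construction the NO-case bound is emphatically \emph{not} uniform in the sequence length --- the monotone set is non-empty, so a long enough sequence writes the all-ones proof and gets accepted --- which is why $m'$ must be bounded and why \Cref{lem:HW} (a length lower bound in terms of Hamming weight) is needed. Finally, your padding remark is backwards: padding the instance \emph{increases} $N$ and hence the target ratio $N^{1-\epsilon}$, so it makes the requirement harder to meet, not easier; the paper instead has to verify carefully (\Cref{eqn:goal})--(\Cref{eqn:goal2}) that the overhead $\absD$ added to the encoding size does not destroy the inherited ratio.
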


\noindent Here, Quantum-Classical Merlin-Arthur (QCMA) is a quantum generalization of NP with a classical proof and quantum verifier (formal definition in~\Cref{def:QCMA}).
For clarity, the \emph{encoding size} of the instance is the number of bits required to write down a \VQA\ instance, i.e. to encode $H=\set{H_i}$, $M$, $m$, $m'$ (see \Cref{def:VQA}). Note the encoding size is typically dominated by the encoding size of $H$, which may be assumed to scale as $\abs{H}$, i.e. with the number\footnote{Indeed, in the construction in the proof of \cref{thm:QCMA}, $N\in O(\abs{H})$.} of \emph{interaction terms} $H_i$, which can be asymptotically larger than the number of qubits, $n$. Thus, simple gap amplification strategies such as taking many parallel copies of all interaction terms do \emph{not} suffice to achieve our hardness ratio of $N^{1-\epsilon}$.

A direct consequence of \Cref{thm:QCMA} is that it is intractable (modulo the standard conjecture that $\BQP\neq\QCMA$, which also implies $\P\neq\QCMA$) to compute the optimum circuit depth within relative precision $N^{1-\epsilon}$ (proof given in \Cref{app:A} for completeness):

\begin{cor}[Depth minimization]\label{cor:consequence}
In \cref{def:VQA}, let $m_{\opt}$ denote the minimum depth $m$ such that $\bra{\psi}M\ket{\psi}\leq 1/3$. Then, for any constant $\epsilon>0$, computing estimate $m_{\textup{est}} \in [m_{\opt}, N^{1-\epsilon}m_{\opt}]$ is \QCMA-hard.
\end{cor}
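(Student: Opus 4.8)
The plan is to prove \Cref{cor:consequence} by a direct polynomial-time (Turing) reduction from the gapped promise problem \VQA\ of \Cref{thm:QCMA} to the task of computing an estimate $m_{\textup{est}}\in[m_{\opt},N^{1-\epsilon}m_{\opt}]$; \QCMA-hardness then transfers. Fix the constant $\epsilon>0$ in the corollary, and first invoke \Cref{thm:QCMA} with the strictly smaller constant $\epsilon'\coloneqq\epsilon/2$: this yields \VQA\ instances of encoding size $N$ that are \QCMA-hard to decide and in which $m'/m\ge N^{1-\epsilon'}$.

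Given such an instance, I would run the hypothesized estimator once to obtain $m_{\textup{est}}$, compute the threshold $T\coloneqq N^{1-\epsilon}m$ (elementary arithmetic on the input; for rational $\epsilon$ this is just a comparison of integers after clearing denominators), and output ``YES'' if $m_{\textup{est}}\le T$ and ``NO'' otherwise. Correctness follows from the two directions of the promise. In a YES instance there is a depth-$m$ ansatz with $\bra{\psi}M\ket{\psi}\le 1/3$, so $m_{\opt}\le m$ and hence $m_{\textup{est}}\le N^{1-\epsilon}m_{\opt}\le N^{1-\epsilon}m=T$. In a NO instance, no ansatz of depth at most $m'$ achieves $\bra{\psi}M\ket{\psi}\le 2/3$, a fortiori none achieves $\le 1/3$, so $m_{\opt}>m'$ and therefore $m_{\textup{est}}\ge m_{\opt}>m'\ge N^{1-\epsilon'}m\ge N^{1-\epsilon}m=T$ (using $\epsilon'<\epsilon$ and $N\ge 1$). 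Thus the reduction decides \VQA, so computing $m_{\textup{est}}$ is \QCMA-hard, and in particular intractable unless $\BQP=\QCMA$ (which would imply $\P=\QCMA$).

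I expect the only subtlety --- and it is minor --- to be two pieces of bookkeeping: first, that the estimator's guarantee is only meaningful when $m_{\opt}$ is well-defined, so one should check that the instances output by \Cref{thm:QCMA} always have a finite $m_{\opt}$ (which holds for the construction there; alternatively, reading an output of $+\infty$ on a NO instance as ``NO'' costs nothing); and second, that the slack between $N^{1-\epsilon'}$ and $N^{1-\epsilon}$ is positive, which is exactly why the reduction uses $\epsilon'=\epsilon/2$ rather than $\epsilon$ itself. There is no substantive mathematical obstacle.
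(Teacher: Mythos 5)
Your proposal is correct and follows essentially the same route as the paper: reduce from the gapped \VQA{} promise problem of \Cref{thm:QCMA} and threshold the estimator's output, using $m_{\opt}\leq m$ in the YES case and $m_{\opt}>m'$ in the NO case. The only cosmetic difference is that the paper thresholds directly at $m'$ and invokes \Cref{thm:QCMA} with the same $\epsilon$, which makes your extra slack $\epsilon'=\epsilon/2$ unnecessary (though harmless).
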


On the other hand, even if a desired depth $m=m'$ is specified in advance,
it is also \QCMA-hard to find the minimizing angle and Hamiltonian sequences $(\theta_1,\ldots, \theta_m)$ and $(G_1,\ldots, G_m)$, respectively, which follows directly from \cref{thm:QCMA}:

\begin{cor}[Parameter optimization]
Consider \cref{def:VQA} with input $m=m'$.
Then the problem of finding the angles $(\theta_1, \dots, \theta_m)$ that minimize the expectation value $\bra{\psi}M\ket{\psi}$ is \QCMA-hard.
\end{cor}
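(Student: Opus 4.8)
The plan is to derive this corollary from \Cref{thm:QCMA} by a reduction from the associated decision problem that uses a \BQP\ subroutine to check a candidate solution. First I would observe that \Cref{thm:QCMA} already implies \QCMA-hardness of the decision version of \Cref{def:VQA} in the special case $m=m'$: starting from a hard instance with thresholds $m\le m'$, reset \emph{both} thresholds to $m'$. A YES witness sequence of length at most $m\le m'$ is in particular a witness sequence of length at most $m'$, while in the NO case no sequence of length at most $m'$ drives $\bra{\psi}M\ket{\psi}$ below $2/3$; so the resulting $m=m'$ instance is still a valid hard instance. It therefore suffices to reduce this $m=m'$ decision problem to the parameter optimization problem.

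Next, suppose $\mathcal{A}$ is an efficient algorithm that, on such an instance with $m=m'$, returns angles $(\theta_1^\ast,\dots,\theta_m^\ast)\in\RR^m$ together with a Hamiltonian sequence $(G_1^\ast,\dots,G_m^\ast)$ from $H$ minimizing $\bra{\psi}M\ket{\psi}$ (if the infimum is not attained, the argument is unchanged with ``minimizing'' replaced by ``coming within, say, $1/12$ of the infimum''). Given a decision instance, I would run $\mathcal{A}$, prepare the resulting state $\ket{\psi^\ast}=e^{i\theta_m^\ast G_m^\ast}\cdots e^{i\theta_1^\ast G_1^\ast}\ket{0\cdots 0}$ on a quantum computer, estimate $v^\ast\coloneqq\bra{\psi^\ast}M\ket{\psi^\ast}$ to additive error $1/12$ by repeated measurement of the $2$-local observable $M$, and accept iff the estimate is below $1/2$. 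Correctness is immediate: in a YES instance the minimum of $\bra{\psi}M\ket{\psi}$ over all choices of angles and Hamiltonians from $H$ is at most $1/3$, so $v^\ast\le 1/3$ and we accept; in a NO instance \emph{every} such choice, including $\mathcal{A}$'s output, has value at least $2/3$, so $v^\ast\ge 2/3$ and we reject. This yields a \BQP\ algorithm making one call to $\mathcal{A}$ that decides a \QCMA-hard problem, so $\mathcal{A}$ cannot be efficient unless $\QCMA\subseteq\BQP$ (widely believed false, as it would imply $\NP\subseteq\BQP$); hence parameter optimization is \QCMA-hard.

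The one step requiring real care — the main ``obstacle'', such as it is — is confirming that the state-preparation substep runs in polynomial time: one applies $m\le\poly(n)$ factors $e^{i\theta_j^\ast G_j^\ast}$, each acting on only $k=O(1)$ qubits (or, more generally, sparse in the standard Hamiltonian-simulation sense), with each angle $\theta_j^\ast$ specified to $\poly(n)$ bits. This is precisely the computation carried out by the \QCMA\ verifier used to establish \VQA$(k,l)\in\QCMA$ in \Cref{thm:QCMA}, so it can simply be invoked; everything else is routine bookkeeping.
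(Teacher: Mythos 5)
Your argument is correct and is exactly the ``direct consequence'' the paper intends (the paper states this corollary without an explicit proof): collapse the thresholds to $m=m'$, which preserves YES/NO instances by \Cref{thm:QCMA}, then use the hypothetical optimizer together with a \BQP\ estimation of $\bra{\psi^\ast}M\ket{\psi^\ast}$ to decide the \QCMA-hard decision problem, so an efficient optimizer would give $\QCMA\subseteq\BQP$. The two points that need care --- that the $m=m'$ instance remains hard, and that preparing $\ket{\psi^\ast}$ and estimating the $2$-local observable $M$ to constant additive error is polynomial-time via the same Hamiltonian simulation used for the \QCMA\ containment --- are both handled correctly.
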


We next turn to the special case of \acp{QAOA}. As detailed shortly under ``Previous work'', the study of QAOA ansatzes was initiated by~\cite{FarGolGut14} in the context of \emph{quantum} approximation algorithms for MAX CUT. In that work, a QAOA is analogous to a VQA, except there are only \emph{two} Hamiltonians $H=\set{H_b,H_c}$ given as input and $M$ is one of those two observables (see \Cref{def:QAOA} for a formal definition). For clarity, here we work with a more general definition of QAOA than~\cite{FarGolGut14}, in which neither $H_b$ nor $H_c$ need be diagonal in the standard basis. (In this sense, our definition is closer to the more general Quantum Alternating Operator Ansatz, also with acronym QAOA~\cite{HWORVB19}.)
For our hardness results, it will suffice for $H_b$ and $H_c$ to be \emph{$k$-local} Hamiltonians\footnote{A $k$-local $n$-qubit Hamiltonian $H$ is a quantum analogue of a MAX-$k$-SAT instance, and can be written $H=\sum_iH_i$, with each ``quantum clause'' $H_i$ acting non-trivially on some subset of $k$ qubits. Strictly speaking, each $H_i$ is tensored with the identity matrix on $n-k$ qubits to ensure all operators in the sum have the correct dimension.}.
 For QAOA, we show a matching hardness result:

\begin{restatable}{theorem}{thmqaoa}\label{thm:QCMA_QAOA}
    \QAOA$(k)$ is \QCMA-complete for $k\geq 4$ and $m\leq\poly(n)$.
  Moreover, for any $\epsilon>0$, it is \QCMA-hard to distinguish between the YES and NO cases of \QAOA\ even if $m'/m\geq N^{1-\epsilon}$,
   where $N$ is the number of strictly $k$-local terms comprising $H_b$ and $H_c$.
\end{restatable}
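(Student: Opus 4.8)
The plan is to reduce from \Cref{thm:QCMA}, i.e., to take a general \VQA$(k,l)$ instance with Hamiltonian set $H=\set{H_1,\dots,H_r}$, observable $M$, and thresholds $m\leq m'$, and compress the multi-Hamiltonian family into just \emph{two} Hamiltonians $H_b$ and $H_c$ while preserving (up to the required polynomial slack and the $N^{1-\epsilon}$ gap) the YES/NO promise. Containment of \QAOA$(k)$ in \QCMA\ is immediate from the \QCMA\ containment already established for \VQA\ in \Cref{thm:QCMA} (a QAOA instance is a special case), so the content is the hardness direction. The natural idea is a ``clock-and-select'' gadget: introduce an auxiliary register (a few extra qubits beyond the $n$ system qubits) whose state indexes which $H_i$ is currently ``active,'' let $H_c$ be the controlled application $\sum_i \ketbra{i}{i}_{\mathrm{aux}} \otimes H_i$ (made $k$-local by choosing a $O(\log r)$-qubit unary or binary pointer and padding each $H_i$ appropriately), and let $H_b$ be a fixed mixing/stepping Hamiltonian on the auxiliary register that advances the pointer. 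Then a single logical Hamiltonian evolution $e^{i\theta_j G_j}$ of the original VQA is simulated by an $H_b$-evolution to position the pointer at the desired index $i$, followed by an $H_c$-evolution $e^{i\theta_j H_c}$, which acts as $e^{i\theta_j H_i}$ on the system register. This blows up the depth by only a constant (or $O(\log r)$) factor, so the ratio $m'/m$ is essentially preserved, and since $N$ for the QAOA instance counts the $k$-local terms of $H_b,H_c$ — which is $\Theta(r)=\Theta(N_{\mathrm{VQA}})$ — the $N^{1-\epsilon}$ hardness ratio carries over.

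The key steps, in order, are: (i) fix the auxiliary pointer encoding and define $H_b$ (a stepping Hamiltonian, e.g.\ a sum of local terms implementing a cyclic shift or a path-graph adjacency operator) and $H_c=\sum_i \Pi_i^{\mathrm{aux}}\otimes H_i$, verifying both are $k$-local once $k\geq 4$ (using that the original $H_i$ are $O(1)$-local with $k\geq 4$ and the pointer terms can be made $\leq 4$-local by a suitable gadget); (ii) show completeness (YES $\to$ YES): given a good VQA sequence of length $\leq m$, interleave $H_b$-evolutions to route the pointer and $H_c$-evolutions to apply the corresponding $H_i$, obtaining a QAOA sequence of length $\leq O(m)$ producing the same reduced state on the system qubits, hence $\langle M\rangle \leq 1/3$ — here one must be careful that the pointer starts and is returned to a fixed reference value, and that the observable $M$ (which becomes $H_b$ or $H_c$ in the QAOA formulation, so we must also embed $M$ appropriately, perhaps as one more term or by a final routing step) reads out correctly; (iii) show soundness (NO $\to$ NO): argue that \emph{any} QAOA sequence of length $\leq m'\cdot O(1)$ alternating $H_b,H_c$ can be ``decoded'' into a VQA sequence of comparable length over $H$ applied to $\ket{0\cdots0}$, so a QAOA strategy beating the NO bound would yield a VQA strategy beating the VQA NO bound, contradiction. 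Finally (iv) bookkeep the encoding size to confirm $N=\Theta(N_{\mathrm{VQA}})$ and that the depth overhead is polynomial and multiplicatively negligible, so the $N^{1-\epsilon}$ factor survives (possibly after renaming $\epsilon$).

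The main obstacle I expect is soundness: in the QAOA setting the adversary is \emph{not} obligated to use $H_b$ and $H_c$ in the disciplined ``route-then-apply'' pattern we intend — it may apply $e^{i\theta H_c}$ while the pointer is in a superposition over many indices, entangling the auxiliary register with the system register in ways that have no clean VQA analogue. Controlling this requires either (a) engineering $H_b,H_c$ so that coherences between distinct pointer values are dynamically suppressed or irrelevant to $\langle M\rangle$ (e.g.\ by making $H_c$ act nontrivially on the pointer as well, so that off-diagonal pointer terms dephase, or by appending a pointer-measurement-like penalty), or (b) a more robust simulation argument showing that the best achievable $\langle M\rangle$ over all length-$\leq m'$ QAOA sequences is still $\geq 2/3 - o(1)$ whenever the VQA instance is NO. A secondary subtlety is making $H_b$ and $H_c$ genuinely $k$-local with small $k$ while still able to address $r=\poly(n)$ distinct Hamiltonians — a binary pointer makes the controlled-selection terms $(\log r)$-local, which is too big, so one likely wants a unary/one-hot pointer together with a locality-reduction gadget (perturbative or Kitaev-style mediator qubits) to bring everything down to $4$-local, and one must check this gadgetization does not spoil the depth accounting.
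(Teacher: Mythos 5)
Your proposal takes a genuinely different route from the paper --- a generic ``clock-and-select'' compilation of an arbitrary \VQA\ instance into two Hamiltonians --- but it has gaps that I think are fatal as stated, not merely technical. The most serious one is completeness under the strict alternation constraint of \Cref{def:QAOA}. In a QAOA layer you get exactly one evolution $e^{i\theta H_b}$ with a \emph{fixed} $H_b$ and a single free parameter $\theta$; you cannot choose which routing operation to perform. A fixed stepping/adjacency Hamiltonian on the pointer register generates a continuous-time quantum walk, which (i) does not map a localized pointer state $\ket{i}$ to another localized state $\ket{i'}$ for generic $i,i'$ at any single time $\theta$, and (ii) even when some transfer is possible, cannot implement a $t$-dependent choice of target index, since the only per-layer freedom is the scalar $\theta$. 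So the honest prover cannot ``position the pointer at the desired index $i$'' in one $H_b$ layer, and unrolling the routing over many layers forces intervening $H_c$ layers that apply whichever $H_i$ the (now delocalized) pointer selects, corrupting the system register. A second structural gap: \Cref{def:QAOA} requires the initial state to be the ground state of $H_b$, and the ground state of a hopping Hamiltonian on the pointer is a delocalized superposition with no constraint on the system qubits --- your construction never arranges for $\gsb$ to be the intended start state. The soundness issue you flag (superposed pointers entangling with the system) is real too, but it is downstream of these.

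The paper avoids all of this by not compiling a generic \VQA\ instance at all. It reworks the QMSA-to-\VQA\ construction so that the generators split into two families $\mathcal{G}_1,\mathcal{G}_2$ of \emph{pairwise commuting} terms, sets $H_b=\sum_{\Gamma\in\mathcal{G}_1}\Gamma+H_0$ and $H_c=\kappa\sum_{\Gamma\in\mathcal{G}_2}\Gamma+M$, and proves (\Cref{lem:nontrivial}) that on each logical basis state all but at most one generator act as scalars --- so a single $e^{i\theta H_b}$ or $e^{i\theta H_c}$ performs exactly one clean logical step, with no pointer and no routing. The remaining ingredients are: diagonal terms ($H_0$ and the $-2\ketbra{00}{00}$ shifts) pinning the unique ground state of $H_b$ to the intended start state; a $3$-cyclic gadget merging two logical actions into one generator so the alternation pattern works out; the observable $M$ absorbed into $H_c$ with the other terms scaled by a small $\kappa$ so that $\langle H_c\rangle\approx\langle M\rangle$; and a rounding lemma (\Cref{lem:gate_decay}) handling the fact that $e^{i\theta M}$ leaks out of the logical space in the NO case only by an amount controlled by the amplified soundness error $\epsilon_Q$. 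If you want to salvage your approach, you would essentially have to rediscover these commuting-family and ground-state-pinning ideas, at which point the pointer register becomes unnecessary.
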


\noindent Note that in contrast to \VQA, which is parameterized by $k$ (the Hamiltonians' locality) and $l$ (the observable's locality), \QAOA\ is only parameterized by $k$. This is because in QAOA, the ``cost'' Hamiltonian $H_c$ \emph{itself} acts as the observable (in addition to helping drive the computation), which will be one of the obstacles we will need to overcome. For context, typically in applications of QAOA, $H_c$ encodes (for example~\cite{FarGolGut14}) a MAX CUT instance.

To the best of our knowledge, \Cref{thm:QCMA} and \Cref{thm:QCMA_QAOA} yield the first natural \QCMA-hard to approximate problems.

\subsection{Previous work}
Generally speaking, it is well-known that VQA parameters are ``hard to optimize'', both numerically and from a theoretical perspective. We now discuss selected works from the (vast) VQA literature, and clarify how these differ from our work.\\

\noindent\emph{1.\ Theoretical studies.} As previously mentioned, in 2014, Farhi, Goldstone and Gutmann proposed the \acf{QAOA}, a special case of VQA with only two local Hamiltonians $H=\set{H_b,H_c}$ (acting on $n$ qubits each). They showed that level-$1$ of the \ac{QAOA} (what we call ``depth 2'' in \Cref{def:VQA}) achieves a $0.6924$-factor approximation for the NP-complete MAX CUT problem. Unfortunately, worst-case analysis of higher levels has in general proven difficult, but Bravyi, Kliesch, Koenig and Tang~\cite{BKKT20} have shown an interesting negative result --- \ac{QAOA} to any \emph{constant} level/depth cannot outperform the classical Goemans-Williams algorithm for MAX CUT~\cite{GW95}. Thus, superconstant depth is \emph{necessary} if \ac{QAOA} is to have a hope of outperforming the best classical algorithms for MAX CUT. In terms of complexity theoretic hardness, Farhi and Harrow~\cite{Farhi2016QuantumSupremacy} showed that even level-$1$ QAOA's output distribution cannot be efficiently simulated by a classical computer.

Most relevant to this paper, however, is the work of Bittel and Kliesch~\cite{Bittel21TrainingVariationalQuantum}, which roughly shows that finding the optimal set of rotation angles (the $\theta_j$ in \Cref{def:VQA} and \Cref{def:QAOA}) is NP-hard. Let us clearly state how the present work differs from~\cite{Bittel21TrainingVariationalQuantum}:
\begin{enumerate}
     \item \cite{Bittel21TrainingVariationalQuantum} fixes both the depth of the VQA and the precise sequence of Hamiltonians $H_i$ to be applied as part of the input. It then asks: What is the complexity of computing the optimal rotation angles $\theta_i$ so as to minimize overlap with a given observable?

         In contrast, our aim here is to study the complexity of optimizing the \emph{depth} itself. Thus, \Cref{def:VQA} does not fix the depth $m$, nor the order/multiplicity of application of any of the Hamiltonian terms.

     \item \cite{Bittel21TrainingVariationalQuantum} shows that optimizing the rotation angles in \ac{QAOA} is \NP-hard, \emph{even if} one is allowed to work in time polynomial in the \emph{dimension} of the system. (Formally, this is obtained by reducing a MAX CUT instance of encoding size $N$ to \ac{QAOA} acting on $\log(N)$ qubits.)

         In contrast, we work in the standard setting of allowing only poly-time computations in the number of qubits, $n$, not the dimension. In return, we obtain stronger hardness results, both in that $\NP\subseteq\QCMA$ (and thus $\QCMA$-hardness is a stronger statement than NP-hardness\footnote{Note that for $\log(N)$-size instances of \ac{QAOA} as in \cite{Bittel21TrainingVariationalQuantum}, one cannot hope for more than NP-hardness, since both Hamiltonians $H_b$ and $H_c$ have polynomial dimension, and thus can be classically simulated efficiently. Thus, such instances are verifiable in NP.}), and in that we show hardness of approximation up to any multiplicative factor $N^{1-\epsilon}$.
\end{enumerate}

\noindent \emph{2.\ Practical/numerical studies.} For clarity, numerical studies are not directly related to our work. However, due to the intense practical interest in VQA for the NISQ era, for completeness we next survey some of the difficulties encountered when optimizing VQAs on the numerical side. For this, note that \acp{VQA} are typically used to solve problems which can be phrased as energy optimization problems (such as NP-complete problems like MAX CUT~\cite{FarGolGut14}).

In this direction, two crucial problems can arise in the classical optimization part of the standard VQA setup:
(i) barren plateaus \cite{McClean2018BarrenPlateausIn}, 
which lead to vanishing gradients, and (ii) local minima \cite{Bittel21TrainingVariationalQuantum}, many of which can be highly non-optimal.
Such unwanted local minima are also called \emph{traps}.
In order to counterbalance these challenges, heuristic optimization strategies have led to promising results in relevant cases but with not too many qubits.
Initialization-dependent barren plateaus \cite{McClean2018BarrenPlateausIn} can be avoided by tailored initialization \cite{Zhou2018QuantumApproximate}, and there are indications that barren plateaus are a less significant challenge than traps \cite{Anschuetz22BeyondBarrenPlateaus}.
In general, the optimization can be improved using
    natural gradients \cite{Wierichs2020AvoidingLocalMinima},
    multitask learning type approach \cite{Zhang20CollectiveOptimizationFor},
    optimization based on trigonometric model functions \cite{Koczor2020QuantumAnalyticDescent},
    neural network-based optimization methods \cite{Rivera-Dean21AvoidingLocalMinima},
    brick-layer structures of generic unitaries \cite{Slattery22UnitaryBlockOptimization},
    and operator pool-based methods \cite{Grimsley19AnAdaptiveVariational,Boyd22TrainingVariationalQuantum}.
ADAPT-VQEs \cite{Grimsley19AnAdaptiveVariational} iteratively grow the \ac{VQA}'s \ac{PQC} by adding operators from a pool that have led to the largest derivative in the previous step.
This strategy allows one to avoid barren plateaus and even ``burrow'' out of some traps \cite{Grimsley22ADAPT-VQE}.
CoVar \cite{Boyd22TrainingVariationalQuantum} is based on similar ideas complemented with estimating several properties of the variational state in parallel using classical shadows \cite{Huang2020Predicting}.
The optimization strategies are of a heuristic nature, and analytic results are scarce. Finally, it has been numerically observed \cite{ToussiKiani20LearningUnitariesBy,Wiersema20ExploringEntanglementAnd} and analytically shown \cite{Larocca21TheoryOfOverparametrization}
that \acs{VQA}-type ansätze become almost free from traps when the ansatz is overparameterized.
{Our work implies that these practical approaches cannot work for all instances and, therefore, provides a justification to resort to such heuristics.}

\subsection{Techniques} \label{scn:techniques}
We focus on techniques for showing \QCMA-hardness of approximation, as containment in \QCMA\ is straightforward\footnote{The prover sends angles $\theta_j$, and the verifier simulates each $e^{i\theta_jH_j}$ via known Hamiltonian simulation algorithms~\cite{LC17}.} for both \VQA\ and \QAOA.

To begin, recall that in a \QCMA\ proof system (\Cref{def:QCMA}), given a YES input, there exists a poly-length \emph{classical} proof $y$ causing a quantum poly-size circuit $V$ to accept, and for a NO input, all poly-length proofs $y$ cause $V$ to reject. Our goal is to embed such proof systems into instances of \Cref{def:VQA} and \Cref{def:QAOA}, while maintaining a large promise gap ratio $m'/m$. To do so, we face three main challenges: (1) Where will hardness of approximation come from? Typically, one requires a PCP theorem~\cite{AS98,ALMSS98} for such results, which remains a notorious open question for both \QCMA\ and \QMA\footnote{Quantum Merlin-Arthur (\QMA) is \QCMA\ but with a quantum proof.}~\cite{AAV13}. (2) \Cref{def:VQA} places no restrictions on which Hamiltonians are applied, in which order, and with which rotation angles. How can one enforce computational structure given such flexibility? In addition, \QAOA\ presents a third challenge: (3) How to overcome the previous two challenges when we are only permitted two Hamiltonians, $H_b$ and $H_c$, the latter of which must also act as the observable?

To address the first challenge, we appeal to the hardness of approximation work of Umans~\cite{U99}. The latter showed how to use a graph-theoretical construct, known as a \emph{disperser}, to obtain strong hardness of approximation results for $\Sigma_2^p$ (the second level of the Polynomial-Time Hierarchy). Hiding at the end of that paper is Theorem 9, which showed that the techniques therein also apply to yield hardness of approximation within factor $N^{1/5-\epsilon}$ for a rather artificial NP-complete problem. Gharibian and Kempe~\cite{GK12} then showed that~\cite{U99} can be extended to obtain hardness of approximation results for a quantum analogue of $\Sigma_2^p$, and also obtained \QCMA-hardness of approximation within $N^{1-\epsilon}$ for an even more artificial problem, Quantum Monotone Minimum Satisfying Assignment (QMSA, \Cref{def:qmmww}). Roughly, QMSA asks --- given a quantum circuit $V$ accepting a monotone set (\Cref{def:monotone}) of strings, what is the smallest Hamming weight string accepted by $V$?
Here, our approach will be to construct many-one reductions from QMSA to \VQA\ and \QAOA, where we remark that maintaining the $N^{1-\epsilon}$ hardness ratio (i.e.\ making the reduction approximation-ratio-preserving) will require special attention.\\

\noindent\emph{1. The reduction for \VQA.} To reduce a given QMSA circuit $V=V_L\cdots V_1$ to a VQA instance $(\set{H_i},M,m,m')$, we utilize a ``hybrid Cook-Levin $+$ Kitaev'' circuit-to-Hamiltonian construction, coupled with a \emph{pair} of clocks (whereas Kitaev~\cite{KitSheVya02} requires only one clock). Here, a \emph{non-hybrid} (i.e. standard) circuit-to-Hamiltonian construction is a quantum analogue of the Cook-Levin theorem, i.e. a map from quantum circuits $V$ to local Hamiltonians $H_V$, so that there exists a proof $\ket{\psi}$ accepted by $V$ if and only if $H_V$ has a low-energy\footnote{By ``energy'' of a state $\ket{\psi}$ against Hamiltonian $H$, one means the expectation $\bra{\psi}H\ket{\psi}$, whose minimum possible value is precisely $\lambda_{\min}(H)$, i.e. the smallest eigenvalue of $H$.} ``history state'', $\ket{\psi_{\textup{hist}}}$.
A history state, in turn, is a quantum analogue of a Cook-Levin tableau, except that each time step of the computation is encoded in superposition via a clock construction of Feynman~\cite{FeynmanComputers}.
In contrast, our construction is ``hybrid'' in that it uses a clock register like Kitaev, but does not produce a history state in superposition over all time steps, like Cook-Levin.
A bit more formally, the Hamiltonians $\set{H_i}$ of our VQA instance act on four registers, $ABCD$, denoting proof ($A$), workspace ($B$), clock 1 ($C$), and clock 2 ($D$). To an honest prover, these Hamiltonians $\set{H_i}$ may be viewed as being partitioned into two sets: Hamiltonians for ``setting proof bits'', denoted $P$, and Hamiltonians for simulating gates from $V$, denoted $Q$. An example of a Hamiltonian in $P$ is
   \begin{align}
                P_j&\coloneqq X_{A_j}\otimes \ketbra{1}{1}_{C_{j}}\otimes \ketbra{1}{1}_{D_{\absD}}\label{eqn:flip}
   \end{align}
which says: If clock 1 (register $C$) is at time $j$ \emph{and} clock 2 (register $D$) is at time $\abs{D}$ (more on clock 2 shortly), then flip the $j$th qubit of register $A$ via a Pauli $X$ gate. An example of a Hamiltonian in $Q$ is
                \begin{align}
                    Q_j &\coloneqq (V_j)_{AB}\otimes\ketbra{01}{10}_{C_{\absA+j,\absA+j+1}}+ (V_j^\dagger)_{AB}\otimes\ketbra{10}{01}_{C_{\absA+j,\absA+j+1}},
                \end{align}
which allows the prover to apply gate $V_j$ of $V$ to registers $AB$, while updating clock 1 from time $\absA+j$ to $\absA+j+1$. In this first (insufficient) attempt at a reduction, the honest prover for \VQA\ acts as follows: First, apply a subset of the $P$ Hamiltonians to prepare the desired input $y$ to the QMSA verifier $V$ in register $A$, and then evolve Hamiltonians $Q_1$ through $Q_L$ to simulate gates $V_1$ through $V_L$ on registers $A$ and $B$. The observable $M$ is then defined to measure the designated output qubit of $B$ in the standard basis, conditioned on $C$ being at time $T$.

The crux of this (honest prover) setup is that if we start with a YES (respectively, NO) instance of QMSA, then the Hamming weight of the optimal $y$ is at most $g$ (respectively, at least $g'$), for $g'/g\geq \NQMSA^{1-\epsilon}$ and $\NQMSA$ the encoding size of the QMSA instance. This, in turn, means that the VQA prover applies at most $g$ Hamiltonians from $P$ (YES case), or at least $g'$ Hamiltonians from $P$ (NO case). The problem is that the prover must \emph{also} apply Hamiltonians $Q_1$ through $Q_L$ in order to simulate the verifier, $V$, and so we have hardness ratio $m'/m=(g'+L)/(g+L)\rightarrow 1$ if $L\in\omega(g)$, as opposed to $N^{1-\epsilon}$!

To overcome this, we make flipping each bit of $P$ ``more costly'' by utilizing a \emph{2D clock setup}.
This, in turn, will ensure the hardness ratio $(g'+L)/(g+L)$ becomes (roughly)
\begin{eqnarray}
    \frac{g'\absD+L}{g\absD+L}\approx \frac{g'}{g} \text{ for }\absD\in\omega(L),
\end{eqnarray}
as desired.
Specifically, to flip bit $A_j$ for any $j$, we force the prover to first sequentially increment the \emph{second} clock, $D$, from $1$ to $\absD$. By \Cref{eqn:flip}, $P_j$ can now flip the value of $A_j$ --- but it cannot increment time in $C$ (i.e.\ we remain in time step $j$ on clock 1). This next forces the prover to decrement $D$ from $\absD$ back to $1$, at which point a separate Hamiltonian (not displayed here) can increment clock $C$ from $j$ to $j+1$. The entire process then repeats itself to flip bit $A_{j+1}$. What is crucial for our desired approximation ratio is that we only have a single copy of register $D$, i.e.\ we re-use it to flip each bit $A_j$, thus effectively making $CD$ act as a 2D clock. This ensures the added overhead to the encoding size of the VQA instance scales as $\absD$, not $\absA\absD$, which is what one would obtain if $CD$ encoded a 1D clock (i.e. if each $A_j$ had a \emph{separate} copy of $D$).

Finally, to show soundness against provers deviating from the honest strategy above, we first establish that any sequence of evolutions from $\set{H_i}$ keeps us in a desired logical computation space, i.e.\ the span of vectors of form
            \begin{align}\label{eqn:logical}
                S&\coloneqq \set*{V_{s-\absA}\cdots V_{1}\ket{y}_A\ket{0\cdots 0}_B\ket{\wts}_C\ket{\wtt}_D \given y\in\set{0,1}^{\absA}, s\in\set{1,\ldots, \absC}, t\in\set{1,\ldots, \absD}},
            \end{align}
for $\ket{y}_A$ the ``proof string'' prepared via $P$-gates and $\wts$ and $\wtt$ the unary representations of time steps $s$ and $t$ in clocks 1 and 2, respectively.
We then show that applying too few Hamiltonian evolutions from $\set{H_i}$ results in a state with either no support on large Hamming weight strings $y$ (meaning the verifier $V$ must reject in the NO case), or no support on states with a fully executed verification circuit $V=V_L\cdots V_1$ (in which case we design $V$ to reject).\\

\noindent\emph{2. The reduction for \QAOA.} At a high level, our goal is to mimic the reduction to \VQA\ above. However, the fact that we have only {two} Hamiltonians at our disposal, $H_b$ (driving Hamiltonian) and $H_c$ (cost Hamiltonian), and no separate observable $M$, complicates matters. Very roughly, our aim is to \emph{alternate} even and odd steps of the honest prover's actions from \VQA, so that $H_b$ simulates the even steps, and $H_c$ the odd ones. To achieve this requires several steps:
\begin{enumerate}
    \item First, we modify the \VQA\ setup so that all the odd (respectively, even) local terms $H_i$ pairwise commute. This ensures that the actions of $\exp(i\theta H_b)$ and $\exp(i\theta H_c)$ can be analyzed, since $H_b$ and $H_c$ will consist of sums of (now commuting) $H_i$ terms.

    \item In \VQA, all Hamiltonians satisfied $H_i^2=I$, which intuitively means an honest prover could use $H_i$ to either act trivially ($\theta_i=0$) or perform some desired action ($\theta_i=\pi$). For \QAOA, we instead require a trick inspired by~\cite{Bittel21TrainingVariationalQuantum} --- we introduce certain local terms $G_j$ (\Cref{eqn:Ge}) with $3$-cyclic behavior. In words, the honest prover can induce \emph{three} logical actions from such $G_j$, obtained via angles $\theta_j\in\set{0,\pi/3,2\pi/3}$, respectively.

    \item We next add additional constraints to $H_b$ to ensure its unique ground state encodes the correct start state (see \Cref{eqn:psi} of \Cref{def:QAOA}). This is in contrast to \VQA, where the initial state $\ket{0\cdots 0}$ is fixed and independent of the $H_i$.
    \item Finally, the observable $M$ is added as a local term to $H_c$, but scaled larger than all other terms in $H_c$. This ensures that for any state $\ket{\psi}$, $\abs{\bra{\psi}H_c-M\ket{\psi}}$ is ``small'', so that measuring cost Hamiltonian $H_c$ once the \ac{QAOA} circuit finishes executing is ``close'' to measuring $M$.
\end{enumerate}

As for soundness, the high-level approach is similar to \VQA, in that we analyze a logical space of computation steps, akin to \Cref{eqn:logical}, and track Hamming weights of prepared proofs in this space. The analysis, however, is more involved, as the construction itself is more intricate than for \VQA. For example, a new challenge for our \QAOA\ construction is that evolving by a Hamiltonian (specifically, $H_c$) does \emph{not} necessarily preserve the logical computation space. We thus need to prove that we may ``round'' each intermediate state in the analysis back to the logical computation space, in which we can then track the Hamming weight of the proof $y$ (\Cref{lem:gate_decay}).

\subsection{Open questions}
We have shown that the optimal depth of a VQA or \ac{QAOA} ansatz is hard to approximate, even up to large multiplicative factors. A natural question is whether similar \emph{\NP-hardness} of approximation results for depth can be shown when (e.g.) the cost Hamiltonian in \ac{QAOA} is classical, such as in \cite{FarGolGut14}?
Since we aimed here to capture the strongest possible hardness result, i.e.\ for \QCMA, our Hamiltonians were necessarily not classical/diagonal.
Second, although our results are theoretical worst-case results, VQAs are of immense practical interest in the NISQ community.
Can one design good heuristics for optimal depth approximation which often work well in practice?
Third, can one approximate the optimal depth for QAOA on \emph{random} instances of a computational problem?
Here, for example, recent progress has been made by Basso, Gamarnik, Mei and Zhou~\cite{BGMZ22}, Boulebnane and Montanaro \cite{BA22}, and Anshu and Metger \cite{AM22}, which give analytical bounds on the success probability of QAOA at various levels and on random instances of various constraint satisfaction problems, for instance size $n$ going to infinity.
The bounds of \cite{AM22}, for example, show that even \emph{superconstant} depth (i.e. scaling as $o(\log\log n)$) is insufficient for QAOA to succeed with non-negligible probability  for a random spin model.
On a positive note, we remark that \cite{BA22} give numerical evidence (based on their underlying analytical bounds) that at around level 14, QAOA begins to surpass existing classical SAT solvers for the case of random 8-SAT.
Fourth, we have given the first natural QCMA-hard to approximate problems.
What other QCMA-complete problems can be shown hard to approximate?
A natural candidate here is the \emph{Ground State Connectivity} problem~\cite{GS14,GMV17,WBG20}, whose hardness of approximation we leave as an open question.
Finally, along these lines, can a PCP theorem for QCMA be shown as a first stepping stone towards a PCP theorem for QMA?

\subsection{Organization}
This paper is organized as follows. We begin with basic definitions and notation in \Cref{scn:def}. In \Cref{scn:VQA}, we show \Cref{thm:QCMA}. \Cref{scn:QAOA} shows \Cref{thm:QCMA_QAOA}.

\section{Basic definitions and notation}\label{scn:def}

We begin with notation, and subsequently define QCMA.

\subsection{Notation}
Throughout, the relation $\coloneqq $ denotes a definition, and $[n] \coloneqq \set{1,2,\dots,n}$.
We use $\length{x}$ to specify the length of a vector or string or the cardinality of set $x$.
The term $I_A$ denotes the identity operator/matrix on qubits with indices in register $A$.
By $\snorm{H}$ we denote the spectral norm of an operator $H$ acting on $\complex^d$, i.e.\ $\max_{\ket{\psi}\in\complex^d}\frac{\enorm{H\ket{\psi}}}{\enorm{\ket{\psi}}}$, for $\enorm{\,\cdot\,}$ the standard Euclidean norm.
The trace norm of an operator is denoted by $\|\cdot\|_{\mathrm{tr}}$. $e_i$ refers to a computational basis state.

\subsection{Complexity classes}

\begin{definition}[Quantum-classical Merlin-Arthur (\QCMA)]\label{def:QCMA}
Let $\Pi= (\Pi_{\mathrm{yes}}, \Pi_{\mathrm{no}})$ be a promise problem.
Then $\Pi\in \QCMA$ if and only if
there is a polynomial $p$ such that for any $x\in \Pi$ there exists a quantum circuit $V_x$ of size $p(\length{x})$ with one designated output qubit satisfying:
\begin{compactenum}[(i)]
\item 
If $x\in \Pi_{\mathrm{yes}}$ there exists a string $y\in \set{0,1}^{p(\length{x})}$ such that $\Pr[V_x \text{ accepts } y]\geq 2/3$
and
\item 
if $x\in \Pi_{\mathrm{no}}$ and all strings $y\in \set{0,1}^{p(\length{x})}$ it holds that $\Pr[V_x \text{ accepts } y]\leq 1/3$.
\end{compactenum}
\end{definition}

\noindent Often, it is helpful to separate the qubits into an a \emph{proof register} $A$, which contains the classical proof $\ket y$, and an \emph{ancilla/work register} $B$, which is initialized in the $\ket 0$ state.
Then the acceptance probability can be expressed as
\begin{equation}
    \Pr[V_x \text{ accepts } (x,y)]
    =
    \sandwichb{y;0}{V^{(n)\dagger}_x M^{(B_1)} V^{(n)}_x}{y;0}\, ,
\end{equation}
where the measurement is given by an operator $M^{(B_1)}$ acting on the first qubit of the work register $B$.

\QCMA\ was first defined in \cite{Aharonov02QuantumNP}, and satisfies $\NP\subseteq\QCMA\subseteq\QMA$. QCMA-complete problems include Identity Check on Basis States (i.e.\ ``does a quantum circuit act almost as the identity on all computational basis states?'')~\cite{WJB03} and Ground State Connectivity (GSCON) (i.e.\ is the ground space of a local Hamiltonian ``connected''?)~\cite{GS14}. The latter remains hard (specifically, $\QCMA_{\textup{EXP}}$-hard) in the $1$D translation-invariant setting~\cite{WBG20}.

\section{\texorpdfstring{\QCMA}{QCMA}-hardness of approximation for \texorpdfstring{\acp{VQA}}{VQAs}}\label{scn:VQA}

In this section, we show \Cref{thm:QCMA}. We begin in \Cref{sscn:vqadefs} with relevant definitions and lemmas. \Cref{sscn:VQA} proves \Cref{thm:QCMA}.

\subsection{Definitions and required facts}\label{sscn:vqadefs}

For convenience, we first restate \Cref{def:VQA}.
\defvqa*

We next require definitions and a theorem from \cite{GK12}.

\begin{definition}[Monotone set]\label{def:monotone}
    A set $S\subseteq\set{0,1}^n$ is called \emph{monotone} if for any $x\in S$, any string obtained from $x$ by flipping one or more zeroes in $x$ to one is also in $S$.
\end{definition}

\begin{definition}[Quantum circuit accepting monotone set]
    Let $V$ be a quantum circuit consisting of $1$- and $2$-qubit gates, which takes in an $n$-bit classical input register, $m$-qubit ancilla register initialized to all zeroes, and outputs a single qubit, $q$. For any input $x\in\set{0,1}^n$, we say $V$ \emph{accepts} (respectively, \emph{rejects}) $x$ if measuring $q$ in the standard basis yields $1$ (respectively, $0$) with probability at least $1-\epsilon_Q$ (If not specified, $\epsilon_Q=1/3$). We say $V$ accepts a \emph{monotone set} if the set $S\subseteq\set{0,1}^n$ of all strings accepted by $V$ is monotone (\Cref{def:monotone}).
\end{definition}
\begin{problem}[\QMMWW~(QMSA)]\label{def:qmmww}
    Given a quantum circuit $V$
    accepting
    a non-empty monotone set $S\subseteq\set{0,1}^n$, and integer thresholds $0\leq g\leq g'\leq n$, output:
    \begin{itemize}
        \item YES if there exists an $x\in\set{0,1}^n$ of Hamming weight at most $g$ accepted by $V$.
        \item NO if all $x\in\set{0,1}^n$ of Hamming weight at most $g'$ are rejected by $V$.
    \end{itemize}
\end{problem}

\begin{theorem}[Gharibian and Kempe \cite{GK12}]\label{thm:QMSAhard}
        QMSA is QCMA-complete, and moreover it is QCMA-hard to
        decide whether, given an instance of QMSA, the minimum Hamming weight string accepted by $V$ is at most $g$ or at least $g'$ for $g'/g\in O(N^{1-\epsilon})$ (where $g'\geq g$).
\end{theorem}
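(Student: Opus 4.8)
\emph{Proof plan.} The containment $\text{QMSA}\in\QCMA$ is the easy direction and does not need monotonicity: on an instance $(V,g,g')$ the prover sends a classical string $x$, and the verifier accepts iff $\length x\le g$ and the error-reduced circuit $V$ accepts $x$; a YES instance supplies such an $x$ by promise, while in a NO instance every string of weight at most $g'\ge g$ is rejected by $V$, so no admissible $x$ can fool the verifier. The hardness direction subsumes plain \QCMA-hardness, so I focus on the approximation version.

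For the hardness of approximation I would reduce from an arbitrary \QCMA\ language with verifier $W$ on classical proof $y\in\{0,1\}^p$. Since no PCP theorem is available for \QCMA, the promise gap $g'/g$ cannot be inherited and must instead be \emph{manufactured} inside the reduction, for which I would use the disperser gadget of Umans~\cite{U99}. First I would amplify $W$ so that in a NO instance every $y$ is rejected with probability $1-2^{-\mathrm{poly}(p)}$ while keeping $W$ of polynomial size (this slack pays for a union bound later). Then I would fix an explicit bipartite disperser with polylogarithmic left-degree, a large (to-be-chosen) left side $L$, and a right side polynomially related to $L$, with the property that every $S\subseteq L$ of size at least $g'$ has neighborhood covering all but a small fraction of the right vertices. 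The QMSA instance would have one variable per left vertex; each right vertex is a ``slot'' whose neighborhood is supposed to carry a redundantly encoded candidate proof for $W$. The verifier $V'$, on input $z$, would decode from $z$ restricted to each slot a candidate proof, run the amplified $W$ on each of the polynomially many candidates, OR the one-qubit outcomes, and accept iff some run accepts (and accept $1^L$ outright so the accepted set is nonempty).

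I would then need to establish three things. \emph{Monotonicity}: the accepted set of $V'$ must be monotone in \emph{both} the YES and NO cases, so that the output is a legal QMSA instance; this forces the decoding to be designed, together with the covering property of the disperser, so that enlarging $z$ can only keep a good reading unchanged or replace it by another reading that $W$ still accepts. \emph{Completeness}: if $W$ has an accepting proof $y^\ast$, planting it thinly into a single slot gives an accepted $z$ of Hamming weight $g\in O(p)$. \emph{Soundness}: if $W$ rejects every proof, the disperser property forces every $z$ of weight less than $g'$ to present each slot with only a minor variant of a single low-weight proof, which $W$ rejects, so a union bound over the polynomially many slots shows $V'$ rejects $z$; hence the minimum accepting weight is at least $g'$. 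With near-optimal explicit disperser parameters, and padding $L$ so that the encoding size $N$ is dominated by $\length L$, this gives $g'/g\ge N^{1-\epsilon}$.

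The hard part will be the monotonicity requirement together with the tightness of the gap: getting a \emph{provably monotone} accepted set while the \emph{manufactured} gap is as large as $N^{1-\epsilon}$. This is exactly where Umans' original NP construction stopped at $N^{1/5-\epsilon}$; reaching $N^{1-\epsilon}$ needs the (admittedly artificial) QMSA problem to be tailored to the disperser rather than the reverse, plus careful bookkeeping of encoding sizes. A secondary, genuinely quantum complication is that $W$ has two-sided error, which I would absorb via the soundness amplification above so that OR-ing polynomially many imperfect runs still leaves the soundness error negligible.
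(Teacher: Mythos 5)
First, note that the paper does not prove this statement at all: \Cref{thm:QMSAhard} is imported verbatim from Gharibian and Kempe \cite{GK12} and used as a black box (it is the source of the $N^{1-\epsilon}$ gap that the reductions in \Cref{scn:VQA,scn:QAOA} then preserve). So there is no in-paper proof to compare against; what you have written is a reconstruction of the \cite{GK12} argument. Your high-level route is the correct one and matches theirs: containment via the obvious weight-checking verifier (fine as written, modulo reading $\length{x}\le g$ as Hamming weight), and hardness of approximation via a reduction from a generic \QCMA\ verifier $W$ in which the gap is \emph{manufactured} by Umans' disperser machinery \cite{U99}, with two-sided error absorbed by amplification.

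However, as a proof this has genuine gaps, and they sit exactly where the mathematical content of the theorem lives. (1) \emph{Monotonicity.} Your proposed mechanism --- that enlarging $z$ ``can only keep a good reading unchanged or replace it by another reading that $W$ still accepts'' --- cannot work as stated, because the set of proofs accepted by $W$ is arbitrary: flipping bits of a decoded candidate proof can move it from accepted to rejected, so a per-slot decoding that tracks $z$ pointwise gives no monotonicity guarantee. Monotonicity has to be built into the \emph{outer} circuit $V'$ by an existential/covering structure (accept $z$ iff \emph{some} admissible sub-reading or witness derivable from the set $\{i: z_i=1\}$ yields a proof accepted by $W$), so that the accepted set is upward closed by construction; you gesture at this but never pin down a decoding for which both monotonicity and soundness can be proved simultaneously. (2) \emph{The ratio.} The choice of disperser parameters and the encoding-size bookkeeping that yield $g'/g\ge N^{1-\epsilon}$ --- rather than the $N^{1/5-\epsilon}$ Umans obtained for his \NP-complete variant --- are precisely the point of \cite{GK12}'s contribution, and you explicitly defer them (``the hard part will be\dots''). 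A plan that names its two central obligations but discharges neither is not yet a proof; to complete it you would need the explicit construction of $V'$ with a monotonicity proof, and the quantitative disperser/padding analysis, both of which are carried out in \cite{GK12}.
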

\noindent In words, QMSA is QCMA-hard to
approximate within $N^{1-\epsilon}$ for any constant $\epsilon>0$, where $N$ is the encoding size of the QMSA instance.

\subsection{QCMA-completeness}\label{sscn:VQA}

\mainvqa*

\noindent In words, it is QCMA-hard to decide whether, given an instance of \VQA, the variational circuit can prepare a ``good'' ansatz state with at most $m$ evolutions, or if all sequences of $m'$ evolutions fail to prepare a ``good'' ansatz state, for $m'/m\in O(N^{1-\epsilon})$ (where $m'\geq m$).

\begin{proof}
       Containment in QCMA is straightforward; the prover sends the angles $\theta_i$ and indices of Hamiltonians $H_i$ to evolve, which the verifier then completes using standard Hamiltonian simulation techniques~\cite{Llo96,LC17}. We now show QCMA-hardness of approximation. Let $\Pi'=(V',g,g')$ be an instance of QMSA, for $V'=V'_{L'}\cdots V'_1$ a sequence of $L'$ $2$-qubit gates taking in $\n'$ input bits and $\m'$ ancilla qubits.

       \paragraph{Preprocessing $V'$.} To ease our soundness analysis, we would like to make two assumptions about $V'$ without loss of generality; these can be simply ensured as follows. Suppose $V'$ takes in $\n'$ input qubits in register $A'$ and $\m'$ ancilla qubits in register $B'$. Apply each of the following modifications in the order listed.

\begin{ass}\label{ass:1}
            $V'$ only \emph{reads} register $A'$, but does not write to it. To achieve this, add $\n'$ ancilla qubits (initialized to $\ket{0}$) to $B'$, and prepend $V'$ with $\n'$ CNOT gates applied transversally to copy input $x$ from $A'$ to the added ancilla qubits in $B'$. Update any subsequent gate which acts on the original input $x$ to instead act on its copied version in $B'$.
\end{ass}
\begin{ass}\label{ass:2}
The output qubit of $V'$ is set to $\ket{0}$ until $V'_L$ is applied. To achieve this, add a single ancilla qubit to $B'$ initialized to $\ket{0}$, and treat this as the new designated output qubit. Append to the end of $V'$ a CNOT gate from its original output wire to the new output wire.
\end{ass}
        \noindent Call the new circuit with all modifications $V$. $V$ acts on $\n\coloneqq \n'$ input qubits, $\m\coloneqq \m'+\n'+1$ ancilla qubits, and consists of $L\coloneqq L'+\n'+1$ gates.

\paragraph{Proof organization.} The remainder of the proof is organized as follows. Section \ref{sscn:VQAinstance} constructs the \VQA\ instance. Section \ref{sscn:VQAlemmas} proves observations and lemmas required for the completeness and soundness analyses. Sections \ref{sscn:VQAcompleteness} and \ref{sscn:VQAsoundness} show completeness and soundness, respectively. Finally, Section \ref{sscn:VQAratio} analyzes the hardness ratio achieved by the reduction.

        \subsubsection{The \VQA\ instance} \label{sscn:VQAinstance}

\begin{figure}
	\centering
	\includegraphics[width=120mm]{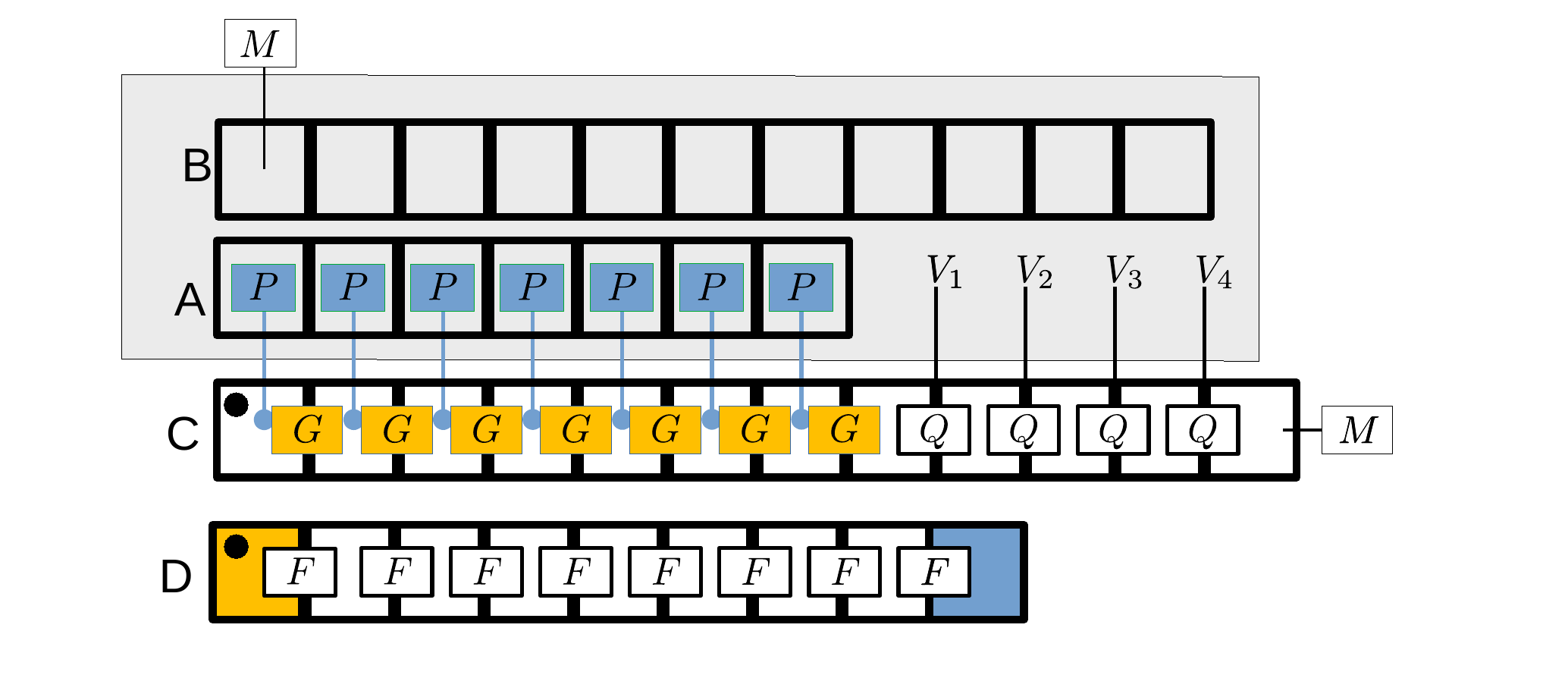}
	\caption{Sketch describing the VQA instance.  A colored square (say, blue) at index $j$ of a register means that register's $j$th qubit must be in $\ket{1}$ for any blue gates to act non-trivially. So, for example, the $G$ gates increment the first clock register $C$, but only if the $D$ register is in the state $\ket{1}_{D_1}$. For the initial state, $C_1$ and $D_1$ are in the $\ket{1}$ state, marked by a black dot. The gates $F$ increment the second clock register $D$. The $P$ gates are controlled operations on the $C$ register, which perform $X$ operations on the $A$ register, but only if $D$ is in the state $\ket{1}_{D_{\absD}	}$. The $Q$ gates increment the clock register $C$, while also applying the circuit $V_1,\dots,V_L$ on the $AB$ registers. The measurement operator $M$ acts on the $B_1$ and $C_{\absC}$ qubit.
    }
	\label{fig:sketch}
\end{figure}

        We now construct our instance $\Pi$ of \VQA\ as follows. $\Pi$ acts on a total of $n$ qubits, which we partition into $4$ registers: $A$ (proof), $B$ (workspace), $C$ (clock 1), and $D$ (clock 2). Register $A$ consists of $\n$ qubits, $B$ of $\m$ qubits, $C$ of $L+\n+1$ qubits, and $D$ of $\lceil L^{1+\delta}\rceil$ qubits for some fixed $0<\delta<1$ to be chosen later.\ Throughout, we use shorthand (e.g.) $\absA$ for the number of qubits in a register $A$.

        Our construction will ensure $C$ (respectively, $D$) always remains in the span of logical time steps, $\Tc\coloneqq \set{\ket{\wts}}_{s=1}^{\absC}$ (respectively, $\Td\coloneqq \set{\ket{\wtt}}_{t=1}^{\absD}$), defined as:
        \begin{align}
               \ket{\wts}\coloneqq \ket{0}^{\otimes s-1}\ket{1}\ket{0}^{\otimes \absC-s}\quad\text{for }1\leq s\leq \absC \label{eqn:tildes}\\
               \ket{\wtt}= \ket{0}^{\otimes t-1}\ket{1}\ket{0}^{\otimes \absD-t}\quad\text{for }1\leq t\leq \absD. \label{eqn:tildet}
        \end{align}

        For example for $C$, $\ket{\wto}=\ket{1}\ket{0}^{\otimes {\absC-1}}$, $\ket{\wt{2}}=\ket{0}\ket{1}\ket{0}^{\otimes {\absC-2}}$, $\ket{\wt{3}}=\ket{00}\ket{1}\ket{0}^{\otimes {\absC-3}}$, and so forth. Note this differs from the usual Kitaev unary clock construction, which encodes time $t$ via $\ket{1}^{\otimes t}\ket{0}^{\otimes N-t}$~\cite{KitSheVya02}. This allows us to reduce the locality of our Hamiltonian.

        Throughout, we use (e.g.) $C_j$ to refer to qubit $j$ and $C_{i,j}$  and qubits $i$ and $j$ of register $C$. All qubits not explicitly mentioned are assumed to be acted on by the identity.
        Define four families of Hamiltonians as follows:

        \begin{itemize}
            \item ($F$) For propagation of the second clock, $D$, define $2$-local Hamiltonians as
            \begin{align}
                F_j&\coloneqq \ketbra{01}{10}_{D_{j,j+1}}+\ketbra{10}{01}_{D_{j,j+1}} \text{ for all }j\in \set{1,\ldots, \absD-1}.\label{eqn:F}
            \end{align}
            \item ($G$) For propagation of the first clock, $C$, define $3$-local Hamiltonians as
            \begin{align}
                G_j&\coloneqq \left(\ketbra{01}{10}_{C_{j,j+1}}+\ketbra{10}{01}_{C_{j,j+1}}\right)\otimes \ketbra{1}{1}_{D_1} \text{ for all }j\in \set{1,\ldots, \absA}.\label{eqn:G}
            \end{align}
            \item ($P$) For each qubit $j\in\set{1,\ldots, \absA}$ of $A$, define $3$-local Hamiltonian as
            \begin{align}
                P_j&\coloneqq  X_{A_j}\otimes \ketbra{1}{1}_{C_{j}}\otimes \ketbra{1}{1}_{D_{\absD}}.
            \end{align}
            \item ($Q$) For each gate $V_k$ for $k\in\set{1,\ldots, L}$, let $R_k$ denote the two qubits of $AB$ which $V_k$ acts on.
            Define $4$-local Hamiltonians as
                \begin{align}
                    Q_k &\coloneqq  (V_k)_{R_k}\otimes\ketbra{01}{10}_{C_{\absA+k,\absA+k+1}}+ (V_k^\dagger)_{R_k}\otimes\ketbra{10}{01}_{C_{\absA+k,\absA+k+1}}.\label{eqn:Qt}
                \end{align}
        \end{itemize}
        Denote the union of these four sets of Hamiltonians as $\SFGPQ\coloneqq F\cup G\cup P\cup Q$.
        Set a $2$-local observable
        \begin{equation}\label{eq:def:M}
            M\coloneqq I-\ketbra{1}{1}_{B_1}\otimes \ketbra{1}{1}_{C_{\absC}}
        \end{equation}
        where we assume without loss of generality that $V$ outputs its answer on qubit $B_1$.
        Set $m=g\cdot (2\absD-1)+\absA+L$, $m'=g'\cdot (2\absD-1)+\absA+L$.
        To aid the reader in the remainder of the proof, all definitions above are summarized in Figure~\ref{fig:defs1}.
        \begin{figure}[t]\label{fig:defs1}
        \centering
            \begin{tabular}{|c|l|l|}
            \hline
              Term & Description & Properties \\
            \hline
              $V'$ & Input QMSA instance's verification circuit & $V'=V'_{L'}\cdots V'_1$\\
              \hline
              $L'$ & Number of $1$- and $2$-qubit gates in $V'$&\\
              \hline
              $\n'$ & Number of proof qubits taken in by $V'$ &\\
              \hline
              $\m'$ & Number of ancilla qubits taken in by $V'$ &\\
              \hline
              $g$,$g'$ & YES/NO thresholds for QMSA instance, resp. &\\
              \hline
              $V$ & QMSA verifier obtained from $V'$ via Assump. \ref{ass:1} and \ref{ass:2}&$V=V_{L}\cdots V_1$\\
              \hline
              $L$ & Number of $1$- and $2$-qubit gates in $V$& $L=L'+\n'+1$\\
              \hline
              $\n$ & Number of proof qubits taken in by $V$ & $\n=\n'$\\
              \hline
              $\m$ & Number of ancilla qubits taken in by $V$ & $\m=\m'+\n'+1$\\
              \hline
              $A$ & Proof register & $\abs{A}=\n$ \\
              \hline
              $B$ & Workspace register & $\abs{B}=\m$ \\
              \hline
              $C$ & Clock 1 register &  $\abs{C}=L+\n+1$\\
              \hline
              $D$ & Clock 2 register &  $\abs{D}=\lceil L^{1+\delta}\rceil$ for $\delta$ chosen in\\
              && (\ref{eqn:goal2}) to satisfy (\ref{eqn:goal}) \\
              \hline
              $F$ & Propagation terms for clock 2 & Act on register $D$, \\
               &  & $\abs{F}=\abs{D}-1$ \\
               \hline
              $G$ & Propagation terms for clock 1 & Act on registers $C,D$, \\
               &  & $\abs{g}=\abs{A}$ \\
               \hline
              $P$ & Hamiltonian terms for setting proof bits & Act on registers $A,C,D$, \\
                             &  & $\abs{P}=\abs{A}$ \\
               \hline
              $Q$ & Hamiltonian terms for simulating verifier gates, $V_k$ & Act on registers $A,B,C$, \\
               &  & $\abs{Q}=L$ \\
               \hline
               $M$ & Observable for \VQA\ instance  & $M\coloneqq I-\ketbra{1}{1}_{B_1}\otimes \ketbra{1}{1}_{C_{\absC}}$\\
                \hline
                 $m$,$m'$ & YES/NO thresholds for \VQA\ instance, resp. &
                $m=g\cdot (2\absD-1)+\absA+L$,\\
                &&$m'=g'\cdot (2\absD-1)+\absA+L$.\\
                \hline
            \end{tabular}
            \caption{Terms used in the proof of \Cref{thm:QCMA}.}
        \end{figure}

        It remains to choose our initial state. Strictly speaking, \Cref{def:VQA} mandates initial state $\ket{0\cdots 0}_{ABCD}$. However, to keep notation simple, it will be convenient to instead choose
        \begin{align}\label{eq:def:phi}
            \ket{\phi}\coloneqq \ket{0\cdots 0}_{AB}\ket{10^{\absC-1}}_C\ket{10^{\absD-1}}_D=\ket{0\cdots 0}_{AB}\ket{\wto}_C\ket{\wto}_D,
        \end{align}
        i.e.\ with the two clock registers $C$ and $D$ initialized to their starting clock state, $\ket{\wto}$. This is without loss of generality --- we may, in fact, start with \emph{any} standard basis state as our initial state without requiring major structural changes to our construction, as the following observation states.
        \begin{obs}\label{obs:startstate}
            Fix any standard basis state $\ket{x}_{ABCD}=\overline{X}\ket{0\cdots 0}_{ABCD}$, for
            \begin{align}
                \overline{X}\coloneqq X_1^{x_1}\otimes\cdots\otimes X_{N}^{x_{N}}
            \end{align}
            with $N\coloneqq \absA+\absB+\absC+\absD$.
            Consider the updated set $\SFGPQ'\coloneqq \set{\overline{X}H\overline{X}\mid H\in\SFGPQ}$, where for simplicity we match $H\in\SFGPQ$ with $H'\coloneqq \overline{X}H\overline{X}\in\SFGPQ'$. Then, for any $m\in\nats$, and any sequence $(H_t)_{t=1}^m$ of Hamiltonians drawn from $\SFGPQ$,
            \begin{align}
                 e^{i\theta_mH_m}\cdots e^{i\theta_2H_2}e^{i\theta_1H_1}\ket{x}_{ABCD} = e^{i\theta_mH'_m}\cdots e^{i\theta_2H'_2}e^{i\theta_1H'_1}\ket{0\cdots 0}_{ABCD}.
            \end{align}
            Moreover, each $H$ and $H'$ are the same locality.
        \end{obs}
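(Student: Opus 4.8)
The plan is to reduce everything to two elementary facts about the Pauli string $\overline{X}$, and then telescope. First, $\overline{X}$ is a self-inverse unitary: each factor $X_i^{x_i}$ is Hermitian with square $I$, so $\overline{X}=\overline{X}^\dagger=\overline{X}^{-1}$. Second, conjugation by a unitary commutes with the matrix exponential: for any Hermitian $H$ and $\theta\in\RR$, expanding the power series of $e^{i\theta\overline{X}H\overline{X}}$ and using $(\overline{X}H\overline{X})^k=\overline{X}H^k\overline{X}$ gives $e^{i\theta\,\overline{X}H\overline{X}}=\overline{X}\,e^{i\theta H}\,\overline{X}$; in particular $e^{i\theta_t H'_t}=\overline{X}\,e^{i\theta_t H_t}\,\overline{X}$ for every $H'_t=\overline{X}H_t\overline{X}\in\SFGPQ'$.

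Next I would substitute this into the product of exponentials and cancel the interior pairs $\overline{X}\,\overline{X}=I$, so that the whole product telescopes:
\begin{align*}
  e^{i\theta_m H'_m}\cdots e^{i\theta_1 H'_1}
  &=\bigl(\overline{X}\,e^{i\theta_m H_m}\,\overline{X}\bigr)\cdots\bigl(\overline{X}\,e^{i\theta_1 H_1}\,\overline{X}\bigr)\\
  &=\overline{X}\,\bigl(e^{i\theta_m H_m}\cdots e^{i\theta_1 H_1}\bigr)\,\overline{X}.
\end{align*}
Applying both sides to $\ket{0\cdots0}_{ABCD}$ and recalling $\overline{X}\ket{0\cdots0}_{ABCD}=\ket{x}_{ABCD}$ then shows that the two evolutions in the statement coincide up to the fixed, input-independent Pauli correction $\overline{X}$. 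This correction is immaterial for \Cref{def:VQA}: one absorbs it into the observable by replacing $M$ with $\overline{X}M\overline{X}$, which leaves every expectation value $\bra{\psi}M\ket{\psi}$ (hence the YES/NO verdict) unchanged; and in the construction above, where $\ket{\phi}$ differs from $\ket{0\cdots0}_{ABCD}$ only on the clock qubits $C_1,D_1$ whereas $M$ is supported on $B_1,C_{\absC}$, $\overline{X}$ even commutes with $M$, so the replacement is vacuous. Keeping track of this residual $\overline{X}$ is the only point that needs any care; the rest is the short telescoping above.

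For the locality statement, $\overline{X}=\bigotimes_i X_i^{x_i}$ is a tensor product of single-qubit operators, so conjugating any $H\in\SFGPQ$ acts qubit by qubit: on a qubit outside the support $S$ of $H$ we have $X_i^{x_i}\,I\,X_i^{x_i}=I$, and on a qubit of $S$ a single-qubit operator is mapped to another single-qubit operator on the same qubit. Hence $\overline{X}H\overline{X}$ has support exactly $S$, and in particular the same locality as $H$; running through $F_j,G_j,P_j,Q_k$ in turn completes the proof.
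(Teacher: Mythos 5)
Your proposal is correct and follows essentially the same route as the paper: conjugation by the self-inverse Pauli string $\overline{X}$ commutes with the matrix exponential, the interior $\overline{X}\,\overline{X}=I$ pairs telescope, and the locality claim follows because $\overline{X}$ is a tensor product of single-qubit operators from $\set{I,X}$. The one substantive difference is that you are more careful than the paper about what the telescoping actually yields: $e^{i\theta_mH'_m}\cdots e^{i\theta_1H'_1}\ket{0\cdots 0}=\overline{X}\,e^{i\theta_mH_m}\cdots e^{i\theta_1H_1}\ket{x}$, i.e.\ the two sides of the displayed equation in the observation agree only up to the residual conjugation $\overline{X}$, not as vectors. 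You correctly flag this and dispose of it by conjugating the observable (or, in the concrete construction, by noting that $\overline{X}$ is supported on $C_1,D_1$ while $M$ is supported on $B_1,C_{\absC}$, so they commute and the expectation values are untouched). The paper's own proof compresses all of this into the remark that everything ``follows since $X^2=I$'' plus a footnote conceding that conjugation ``will alter the clock encoding'' but is ``logically irrelevant'' as a simultaneous local change of basis --- which is exactly the point you make explicitly. So your write-up is, if anything, a more honest account of what the observation proves; no gap.
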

        \begin{proof}
            The first claim follows since $X^2=I$, by \Cref{obs:closedform}\footnote{Note that conjugation by $\overline{X}$ will alter the clock encoding in Equations (\ref{eqn:closedformF})-(\ref{eqn:closedformQ}), but this alternation is logically irrelevant since it is equivalent to a local change of basis applied simultaneously to all $H\in\SFGPQ$ and to $\ket{0\cdots 0}_{ABCD}$.}, and since $\ket{x}_{ABCD}=\overline{X}\ket{0\cdots 0}_{ABCD}$. The second claim also follows from $X^2=I$ and the fact that $\overline{X}$ is a tensor product of operators from set $\set{I,X}$.
        \end{proof}
        This concludes the construction.

        \subsubsection{Helpful observations and lemmas}\label{sscn:VQAlemmas}
        We next state and prove all observations and technical lemmas for the later correctness analysis of our construction.

        \begin{obs}\label{obs:closedform}
            For all $\theta\in \reals$, and all $F_j\in F$, $G_j\in G$, $P_j\in P$ and $Q_k\in Q$,
            \begin{align}
                e^{i\theta F_j} &= & &\cos(\theta)(\ketbra{01}{01}+\ketbra{10}{10})_{D_{j,j+1}} + i\sin(\theta)F_j+ (I-\ketbra{01}{01}-\ketbra{10}{10})_{D_{j,j+1}}\label{eqn:closedformF}\\
                e^{i\theta G_j} &= & &\cos(\theta)\left(\ketbra{01}{10}_{C_{j,j+1}}+\ketbra{10}{01}_{C_{j,j+1}}\right)\otimes \ketbra{1}{1}_{D_1}  + i\sin(\theta)G_j+\nonumber\\ &&&\left(I-\left(\ketbra{01}{10}_{C_{j,j+1}}+\ketbra{10}{01}_{C_{j,j+1}}\right)\otimes \ketbra{1}{1}_{D_1} \right)\label{eqn:closedformG}\\
                e^{i\theta P_j} &= & &\left(\cos(\theta) I+ i \sin(\theta) X\right)_{A_j}\otimes \ketbra{1}{1}_{C_{j}}\otimes\ketbra{1}{1}_{D_{\absD}} + (I-\ketbra{1}{1}_{C_{j}}\otimes\ketbra{1}{1}_{D_{\absD}})\label{eqn:closedformP}\\
                e^{i\theta Q_k} &= & &\cos(\theta) I_{AB}\otimes (\ketbra{01}{01}+\ketbra{10}{10})_{C_{\absA+k,\absA+k+1}}+i\sin(\theta)Q_{k}+\nonumber\\
                                & & &I_{AB}\otimes (I-\ketbra{01}{01}-\ketbra{10}{10})_{C_{\absA+k,\absA+k+1}}.\label{eqn:closedformQ}
            \end{align}
            For clarity, any register not explicitly listed in equations above is assumed to be acted on by identity.
        \end{obs}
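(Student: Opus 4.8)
The plan is to exploit a feature shared by all four families: each $H\in\SFGPQ$ is Hermitian and is a \emph{partial reflection}, meaning $H^3=H$ (equivalently, $H$ has all eigenvalues in $\set{-1,0,1}$, equivalently $H^2$ is an orthogonal projector acting as the identity on the support of $H$). Granting this, the four closed forms follow uniformly. Indeed, $H^3=H$ gives $H^{2m}=H^2$ for all $m\geq1$ and $H^{2m+1}=H$ for all $m\geq0$, so expanding the matrix exponential as a Taylor series and collecting even and odd powers,
\begin{align*}
 e^{i\theta H}=I+(\cos\theta-1)H^2+i\sin\theta\,H=\cos\theta\,H^2+i\sin\theta\,H+(I-H^2),
\end{align*}
where in the last step we used $I=H^2+(I-H^2)$. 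It then only remains to identify $H^2$ in each case and tidy up the algebra.

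To see the partial-reflection property and to read off $H^2$: $F_j$ acts as a Pauli $X$ on the two-dimensional space $\Span\set{\ket{01}_{D_{j,j+1}},\ket{10}_{D_{j,j+1}}}$ and as $0$ on its orthogonal complement, so $F_j^3=F_j$ and $F_j^2=(\ketbra{01}{01}+\ketbra{10}{10})_{D_{j,j+1}}$. The generator $G_j$ is the same kind of object tensored with the projector $\ketbra11_{D_1}$, which commutes with the $C_{j,j+1}$ part (being supported on a disjoint qubit), so $G_j^3=G_j$ and $G_j^2=(\ketbra{01}{01}+\ketbra{10}{10})_{C_{j,j+1}}\otimes\ketbra11_{D_1}$. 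For $P_j$, the three tensor factors act on disjoint qubits, $X_{A_j}^3=X_{A_j}$, and the two control projectors are idempotent, whence $P_j^3=P_j$ and $P_j^2=I_{A_j}\otimes\ketbra11_{C_j}\otimes\ketbra11_{D_{\absD}}$. Finally, $Q_k$ is Hermitian because $(V_k^\dagger)^\dagger=V_k$, and
\begin{align*}
 Q_k^2&=(V_kV_k^\dagger)_{R_k}\otimes\ketbra{01}{01}_{C_{\absA+k,\absA+k+1}}+(V_k^\dagger V_k)_{R_k}\otimes\ketbra{10}{10}_{C_{\absA+k,\absA+k+1}}\\
 &=I_{AB}\otimes(\ketbra{01}{01}+\ketbra{10}{10})_{C_{\absA+k,\absA+k+1}},
\end{align*}
using the unitarity of the circuit gate $V_k$; one checks this operator is idempotent and absorbs $Q_k$ on both sides, so $Q_k^3=Q_k$.

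Substituting each $H^2$ into the displayed identity gives \eqref{eqn:closedformF}--\eqref{eqn:closedformQ} immediately; the only cosmetic rewriting needed is, for $P_j$, to recombine $\cos\theta\,P_j^2+i\sin\theta\,P_j=(\cos\theta\,I+i\sin\theta\,X)_{A_j}\otimes\ketbra11_{C_j}\otimes\ketbra11_{D_{\absD}}$, and, for $Q_k$, to pull the $I_{AB}$ factor out front. I do not expect a genuine obstacle here: the whole content is the observation that every generator in $\SFGPQ$ is a partial reflection. The one spot that deserves a moment of care is the $Q_k$ computation, since the $V_k$ are general (possibly non-Hermitian) unitaries, and it is precisely their unitarity that makes $Q_k^2$ collapse onto a clock-register-only projector.
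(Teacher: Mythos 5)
Your proposal is correct and takes essentially the same route as the paper: the paper's proof simply records $F_j^2$, $G_j^2$, $P_j^2$, $Q_k^2$ (the same projectors you compute) and invokes the Taylor expansion, which is exactly your identity $e^{i\theta H}=\cos\theta\,H^2+i\sin\theta\,H+(I-H^2)$ made explicit via $H^3=H$. Note that your derivation yields $\cos(\theta)\left(\ketbra{01}{01}+\ketbra{10}{10}\right)_{C_{j,j+1}}\otimes\ketbra{1}{1}_{D_1}$ in the $G_j$ case, i.e.\ $G_j^2$ rather than the off-diagonal operator displayed in \eqref{eqn:closedformG}, which appears to be a typo in the statement (the paper's own proof computes $G_j^2$ as you do).
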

        \begin{proof}
            Follows straightforwardly via Taylor series expansion since
            \begin{align}
                F_j^2&=(\ketbra{01}{01}+\ketbra{10}{10})_{D_{j,j+1}}
                    &\text{for all }j\in \set{1,\ldots, \absD-1},\\
                G_j^2&=(\ketbra{01}{01}+\ketbra{10}{10})_{C_{j,j+1}}\otimes \ketbra{1}{1}_{D_1}
                    &\text{for all }j\in \set{1,\ldots, \absA},\\
                P_j^2&=I_{A_j}\otimes \ketbra{1}{1}_{C_{j}}\otimes \ketbra{1}{1}_{D_{\absD}}
                    &\text{for all } j\in \set{1,\dots, \abs A},
                \\
                Q_k^2&=I_{AB}\otimes (\ketbra{01}{01}+\ketbra{10}{10})_{C_{\absA+k,\absA+k+1}}
                    & \text{for all } k\in \set{1,\dots, L}\, .
            \end{align}
        \end{proof}
        \begin{definition}[Support only on logical time steps]\label{def:supported}
            We say state $\ket{\psi}_{ABCD}$ is \emph{supported only on logical time steps} if it can be written
            \begin{equation}
                \ket{\psi}_{ABCD}=\sum_{s=1}^{\absC}\sum_{t=1}^{\absD}\alpha_{st}\ket{\eta_{st}}_{AB}\ket{\wts}_C\ket{\wtt}_D
            \end{equation}
            for unit vectors $\ket{\eta_{st}}$ and $\sum_{st}\abs{\alpha_{st}}^2=1$, and $\ket{\wts}\in \Tc$ and $\ket{\wtt}\in\Td$ defined as in \Cref{eqn:tildes} and \Cref{eqn:tildet}, respectively.
        \end{definition}
        \begin{obs}\label{obs:time}
            Recall that the initial state $\ket{\phi}= \ket{0\cdots 0}_{AB}\ket{\wto}_C\ket{\wto}_D$ is supported only on logical time steps. 
            Then, for any $m\in\nats$ and sequence of evolutions $\exp(i\theta_jH_j)$ for $\theta_j\in\reals$ and $H_j\in\SFGPQ$,
            \begin{equation}
                 e^{i\theta_mH_m}\cdots e^{i\theta_2H_2}e^{i\theta_1H_1}\ket{\phi}
            \end{equation}
             is supported only on logical time steps.
        \end{obs}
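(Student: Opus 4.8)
The plan is to deduce \Cref{obs:time} from a ``static'' invariance statement via a one-line induction. Let $\mathcal{V}$ denote the subspace spanned by all vectors of the form $\ket{\eta}_{AB}\ket{\wts}_C\ket{\wtt}_D$, where $\ket{\eta}$ ranges over states of $AB$, $s$ over $\set{1,\dots,\absC}$, and $t$ over $\set{1,\dots,\absD}$; by \Cref{def:supported}, a unit vector is supported only on logical time steps exactly when it lies in $\mathcal{V}$. The main step is to show that every $H\in\SFGPQ$ leaves $\mathcal{V}$ invariant, i.e.\ $H\mathcal{V}\subseteq\mathcal{V}$. Granting this, since $\mathcal{V}$ is finite-dimensional and $H$-invariant, the unitary $e^{i\theta H}$ (unitary because $H$ is Hermitian) restricts to a unitary of $\mathcal{V}$, so $e^{i\theta H}\mathcal{V}=\mathcal{V}$ for every $\theta\in\reals$; one may also read this directly off the closed forms in \Cref{obs:closedform}, each of which is a linear combination of $I$, the relevant generator $H_j$, and operators that map $\Span(\Tc)\otimes\Span(\Td)$ into itself on the clock registers. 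Since the initial state $\ket{\phi}=\ket{0\cdots 0}_{AB}\ket{\wto}_C\ket{\wto}_D$ lies in $\mathcal{V}$, induction on $m$ (base case $m=0$) then yields $e^{i\theta_m H_m}\cdots e^{i\theta_1 H_1}\ket{\phi}\in\mathcal{V}$, which is the assertion.

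To establish $H\mathcal{V}\subseteq\mathcal{V}$, I would run a short case analysis over the four families $F$, $G$, $P$, $Q$, in each case fixing a spanning vector $\ket{\eta}_{AB}\ket{\wts}_C\ket{\wtt}_D$ and inspecting the action on the clock registers. The key structural fact --- and exactly what the single-excitation unary encoding of \Cref{eqn:tildes} and \Cref{eqn:tildet} buys --- is that in $\ket{\wts}_C$ (respectively $\ket{\wtt}_D$) any pair of \emph{adjacent} clock qubits not straddling the unique $\ket{1}$ is in the state $\ket{00}$, while the propagation blocks $\ketbra{01}{10}+\ketbra{10}{01}$ annihilate $\ket{00}$. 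Concretely: $F_j$ leaves $ABC$ unchanged and sends $\ket{\wtt}_D$ to $\ket{\wt{t+1}}_D$ if $t=j$, to $\ket{\wt{t-1}}_D$ if $t=j+1$, and to $0$ otherwise; $G_j$ leaves $AB$ and $D$ unchanged (reading $D$ only via $\ketbra{1}{1}_{D_1}$) and on $C$ sends $\ket{\wts}_C$ to $\ket{\wt{s+1}}_C$ if $s=j$, to $\ket{\wt{s-1}}_C$ if $s=j+1$, and to $0$ otherwise; $P_j$ leaves $C$ and $D$ unchanged (reading them only via $\ketbra{1}{1}_{C_j}$ and $\ketbra{1}{1}_{D_{\absD}}$) and applies $X_{A_j}$ on $AB$; and $Q_k$ leaves $D$ unchanged, on $C$ sends $\ket{\wts}_C$ to $\ket{\wt{s+1}}_C$ if $s=\absA+k$, to $\ket{\wt{s-1}}_C$ if $s=\absA+k+1$, and to $0$ otherwise, correspondingly applying $V_k$ or $V_k^\dagger$ on $AB$. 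In every case the image of a spanning vector of $\mathcal{V}$ is again a combination of spanning vectors, hence $H\mathcal{V}\subseteq\mathcal{V}$.

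The only point that needs genuine care is precisely this annihilation property: one must confirm that whenever a clock register is ``out of position'' relative to a given propagation term, that term maps the state to $0$ rather than outside the logical span. This is immediate from the shape $\ketbra{01}{10}+\ketbra{10}{01}$ of the propagation blocks together with the unary encoding, but conceptually it is the crux --- it is exactly what confines every reachable state to the logical computation space, and it is the property that the rest of the construction relies on. Beyond this, the proof is a routine finite case analysis over the four Hamiltonian families, and the matrix exponentials need never be expanded beyond the closed forms already recorded in \Cref{obs:closedform}.
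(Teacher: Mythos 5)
Your proposal is correct and follows essentially the same route as the paper: a case analysis over the four families $F,G,P,Q$ using the closed forms of \Cref{obs:closedform}, with the unary single-excitation clock encoding guaranteeing that out-of-position propagation terms annihilate the state rather than leaving the logical span. The only cosmetic difference is that you phrase the invariance at the level of the generators $H$ and then exponentiate, whereas the paper reads the action directly off the exponentials; both reduce to the same finite check.
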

        \begin{proof}
            Consider any logical time step $\ket{\wts}_C\ket{\wtt}_D$. By Equations (\ref{eqn:closedformF})-(\ref{eqn:closedformQ}), the set of possible evolutions act as follows\footnote{Without loss of generality, we focus on the \emph{non-trivial} (i.e.\ non-identity) action of each evolution, as any trivial action immediately preserves logical time steps.}:
             \begin{itemize}
                \item $e^{i\theta F_j}$: can map from $\ket{\wt{j}}_D$ to $\ket{\wt{j+1}}_D$ or vice versa for $j\in\set{1,\ldots, \absD-1}$.
                \item $e^{i\theta G_j}$: if $\ket{\wtt}_D=\ket{\wto}_D$, can map from $\ket{\wt{j}}_C$ to $\ket{\wt{j+1}}_C$ or vice versa for $j\in\set{1,\ldots, \absA}$.
                \item $e^{i\theta P_j}$: acts invariantly (i.e.\ as identity) on $C$ and $D$ for all $j\in\set{1,\ldots, \absA}$.
                \item $e^{i\theta Q_j}$: can map from $\ket{\wt{\absA+j}}_C$ to $\ket{\wt{\absA+j+1}}_C$ (while applying $V_j$ to $AB$) or vice versa (for $V_j^\dagger$) for $j\in\set{1,\ldots, L}$.\qedhere
             \end{itemize}
        \end{proof}

The following lemma tells us that any sequence of Hamiltonian evolutions $\exp(i\theta_u H_u)$ on initial state $\ket{\phi}$ remains in a certain logical computation space.

        \begin{lemma}\label{l:span}
            Define
            \begin{align}
                S&\coloneqq \set*{V_{s-\absA}\cdots V_{1}\ket{y}_A\ket{0\cdots 0}_B\ket{\wts}_C\ket{\wtt}_D \given y\in\set{0,1}^{\absA}, s\in\set{1,\ldots, \absC}, t\in\set{1,\ldots, \absD}},\label{eqn:7}
            \end{align}
            where we adopt the convention that the $V$ gates are present only when $s> \absA$.
            Then, for any $m\in\nats$,
            \begin{align}
                \Pi_{u=1}^me^{i\theta_u H_u}\ket{\phi}\in \Span(S)
            \end{align}
            for any angles $\theta_u\in \reals$ and sequence of Hamiltonians $H_u\in \SFGPQ$.
        \end{lemma}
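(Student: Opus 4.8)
The plan is a straightforward induction on the number $m$ of evolutions, using the fact that $\Span(S)$ is a linear subspace: it is enough to show that each generator $e^{i\theta H}$ with $H\in\SFGPQ$ and $\theta\in\reals$ sends every spanning vector of $S$ into $\Span(S)$, and then to apply this step-by-step along the sequence $(H_u)_{u=1}^m$. The base case $m=0$ is immediate, since $\ket{\phi}=\ket{0\cdots0}_{AB}\ket{\wto}_C\ket{\wto}_D$ is precisely the element of $S$ with proof string $y=0^{\absA}$ and clock positions $s=t=1$ (and no $V$-gates, as $s\le\absA$).

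For the inductive step, fix a spanning vector $\ket{v_{y,s,t}}\coloneqq V_{s-\absA}\cdots V_1\ket{y}_A\ket{0\cdots0}_B\ket{\wts}_C\ket{\wtt}_D\in S$ and expand $e^{i\theta H}$ via the closed forms of \Cref{obs:closedform}. In every case, $e^{i\theta H}$ is a linear combination of the identity, the generator $H$, and diagonal projectors supported only on clock qubits; since the clock registers of $\ket{v_{y,s,t}}$ are always in unary logical states, each such projector either fixes $\ket{v_{y,s,t}}$ or annihilates it, so the identity and projector terms all stay in $\Span(S)$, and it only remains to check the action of $H$ itself. I would split this into the four families: $F_j$ acts only on $D_{j,j+1}$ and merely shifts the second clock between $\ket{\wt j}_D$ and $\ket{\wt{j+1}}_D$, leaving $A,B,C$ untouched, hence stays in $S$; $G_j$ is supported on $C_{j,j+1}$ and controlled on $D_1$, and for $j\le\absA$ it only moves the first clock within the initial segment of positions carrying an empty $V$-prefix, so $\ket y_A\ket 0_B$ is unchanged and we remain in $S$; $Q_k$ shifts the first clock between positions $\absA+k$ and $\absA+k+1$ while applying $V_k$ (forward) or $V_k^\dagger$ (backward), which is exactly the change in the $V$-prefix attached to those positions in the definition of $S$, so we again stay in $S$.

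The one case that carries the real content is $P_j$. Since $P_j=X_{A_j}\otimes\ketbra{1}{1}_{C_j}\otimes\ketbra{1}{1}_{D_{\absD}}$, it acts nontrivially on $\ket{v_{y,s,t}}$ only when $s=j$ and $t=\absD$; but $j\le\absA$, so at clock position $s=j$ no verification gate $V_k$ has yet been applied to $AB$, i.e.\ the $AB$-content is exactly $\ket y_A\ket{0\cdots0}_B$. Flipping $A_j$ therefore yields $\ket{y\oplus e_j}_A\ket{0\cdots0}_B$, again a spanning vector of $S$ (same clock positions $s=j$, $t=\absD$, new proof string $y\oplus e_j$), and so $e^{i\theta P_j}\ket{v_{y,s,t}}$ is a combination of $\ket{v_{y,j,\absD}}$ and $\ket{v_{y\oplus e_j,j,\absD}}\in S$. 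Having handled all four families, each generator maps $S$ into $\Span(S)$; by linearity and induction, so does every finite product $\Pi_{u=1}^m e^{i\theta_u H_u}$, which is the claim.

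I expect the main difficulty to be bookkeeping rather than conceptual: one must check that the $V$-prefix recorded at each clock position in \Cref{eqn:7} is consistent with the only transitions ($G$- and $Q$-type) that reach that position, so that $Q_k$ can never introduce a spurious $AB$-content, and that $P_j$ genuinely cannot fire once verification has begun. Both facts rely on clock $1$ being unary and on the index ranges of the $G$- and $Q$-transitions meeting only at the boundary near position $\absA+1$. It may also streamline the writeup to first quote \Cref{obs:time} to reduce to states supported only on logical time steps, and then strengthen that to the present statement about the $AB$-register.
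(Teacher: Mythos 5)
Your proposal is correct and follows essentially the same route as the paper's proof: both reduce the claim to showing that $\Span(S)$ is closed under each $e^{i\theta H}$ via the closed forms of \Cref{obs:closedform}, and then run the same four-case analysis ($F$, $G$, $P$, $Q$), with $P_j$ flipping the proof bit to $y\oplus e_j$ and $Q_k$ extending or retracting the $V$-prefix. The only cosmetic difference is that the paper phrases the $Q$-case bookkeeping via \Cref{ass:1} explicitly, which you handle implicitly but correctly.
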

        \begin{proof}
    For convenience, define
    \begin{equation}\label{eqn:etays}
        \etayst\coloneqq V_{s-\absA}\cdots V_1\ket{y}_A\ket{0}_B\ket{\wts}_C\ket{\wtt}_D
    \end{equation}
    for $s\in\set{1,\ldots, \absC}$, $t\in\set{1,\ldots, \absD}$, and $y\in\set{0,1}^{\absA}$. Observe first that $\ket{\phi}=\ket{0\cdots 0}_{AB}\ket{\wto}_C\ket{\wto}_D=\etaoo\in S$. Thus, it suffices to prove $\Span(S)$ is closed under application of $e^{i\theta H}$ for any $\theta\in \reals$ and $H\in \SFGPQ$.

\paragraph{Case 1: $H=H_j\in F$ for $j\in\set{1,\ldots, \absD-1}$.} Equations (\ref{eqn:F}) and~(\ref{eqn:closedformF}) immediately yield $e^{i\theta H}\etayst=\etayst$ for $t\not\in\set{j,j+1}$. Consider thus  $t\in\set{j,j+1}$. Restricted to this space, $e^{i\theta H}$ acts logically as
    \begin{equation}
        e^{i\theta H} =\cos(\theta) I_{ABCD}+ i\sin(\theta)I_{ABC}\otimes\left(\ketbra{\wt{j+1}}{\wt{j}}+\ketbra{\wt{j}}{\wt{j+1}}\right)_D.
    \end{equation}
        Thus, $e^{i\theta H}$ maps
    \begin{eqnarray}
        \ket{\eta_{y,s,j}}&\mapsto&\cos(\theta)\ket{\eta_{y,s,j}}+i\sin(\theta)\ket{\eta_{y,s,j+1}},\\
        \ket{\eta_{y,s,j+1}}&\mapsto&i\sin(\theta)\ket{\eta_{y,s,j}}+\cos(\theta)\ket{\eta_{y,s,j+1}}.
    \end{eqnarray}

\paragraph{Case 2: $H=H_j\in G$ for $j\in\set{1,\ldots, \absA}$.} Equations (\ref{eqn:G}) and~(\ref{eqn:closedformG}) immediately yield $e^{i\theta H}\etayst=\etayst$ unless $s\in\set{j,j+1}$ and $t=1$. Consider thus  $s\in\set{j,j+1}$ and $t=1$. Restricted to this space, $e^{i\theta H}$ acts logically as
    \begin{equation}
        e^{i\theta H} =\cos(\theta) I_{ABCD}+ i\sin(\theta)I_{AB}\otimes\left(\ketbra{\wt{j+1}}{\wt{j}}+\ketbra{\wt{j}}{\wt{j+1}}\right)_C\otimes I_D.
    \end{equation}
        Thus, $e^{i\theta H}$ maps
    \begin{eqnarray}
        \ket{\eta_{y,j,1}}&\mapsto&\cos(\theta)\ket{\eta_{y,j,1}}+i\sin(\theta)\ket{\eta_{y,j+1,1}},\\
        \ket{\eta_{y,j+1,1}}&\mapsto&i\sin(\theta)\ket{\eta_{y,j,1}}+\cos(\theta)\ket{\eta_{y,j+1,1}}.
    \end{eqnarray}
    \paragraph{Case 3: $H=H_j\in P$ for $j\in\set{1,\ldots, \absA}$.} By Equation (\ref{eqn:closedformP}), $e^{i\theta H}\etayst=\etayst$ unless $s=j$ and $t=\absD$. Consider thus  $s=j$ and $t=\absD$. Restricted to this space, $e^{i\theta H}$ maps
    \begin{equation}\label{eqn:HWup}
        \ket{\eta_{y,j,\absD}}\mapsto \cos(\theta)\ket{\eta_{y,j,\absD}}+i\sin(\theta)\ket{\eta_{y',j,\absD}}
    \end{equation}
    for $y'$ defined as $y$ with its $j$th bit flipped. Since $y$ in \Cref{eqn:7} is not fixed, we conclude $e^{i\theta H}\etayst\in\Span(S)$, as claimed.

    \paragraph{Case 4: $H=H_j\in Q$.} Equations (\ref{eqn:Qt}) and~(\ref{eqn:closedformQ}) immediately yield $e^{i\theta H}\etayst=\etayst$ for $s\not\in\set{\absA+j,\absA+j+1}$. Consider thus  $s\in\set{\absA+j,\absA+j+1}$. Restricted to this space, $\exp(i\theta H)$ acts logically as
    \begin{equation}
        e^{i\theta H} =\cos(\theta) I_{ABCD}+i\sin(\theta)Q_j.
    \end{equation}
    Thus, $e^{i\theta H}$ maps
    \begin{align}
        \ket{\eta_{y,\absA+j,t}}&\mapsto\cos(\theta)\ket{\eta_{y,\absA+j,t}}+i\sin(\theta)V_j\ket{\eta_{y,\absA+j,t}}\nonumber\\
        &=\cos(\theta)\ket{\eta_{y,\absA+j,t}}+i\sin(\theta)\ket{\eta_{y,\absA+j+1,t}},\label{eqn:10}\\
        \ket{\eta_{y,\absA+j+1,t}}&\mapsto i\sin(\theta)V_j^\dagger\ket{\eta_{y,\absA+j+1,t}}+\cos(\theta)\ket{\eta_{y,\absA+j+1,t}}\nonumber\\
        &=i\sin(\theta)\ket{\eta_{y,\absA+j,t}}+\cos(\theta)\ket{\eta_{y,\absA+j+1,t}}\label{eqn:11},
    \end{align}
    where we have used \Cref{ass:1} that gates $V_j$ never write to the proof register $A$ (and thus $y$ remains unchanged under application of $V_k$). This yields the claim.
\end{proof}

{Next, we relate the circuit depth of a state generated by our VQA to the Hamming weight of the proof string $y$.}
\begin{lemma}\label{lem:HW}
    Let $(H_u)_{u=1}^m$ be a sequence of Hamiltonians drawn from $\SFGPQ$ which maps the initial state \eqref{eq:def:phi} to
    \begin{align}
        \ket{\phi_m}\coloneqq \Pi_{u=1}^me^{i\theta_u H_u}\ket{\phi}.
    \end{align}
    Suppose $\ket{\phi_m}$ has non-zero overlap with some $\etayst$ with $y$ of Hamming weight at least $w$ and $s=\abs{A}+1$. Then, $m\geq w(2\absD-1)+\absA$ with at least $ w(2\absD-1)+\absA$ of the $H_u$ drawn from $F\cup G\cup P$.

\end{lemma}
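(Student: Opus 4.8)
The plan is to reduce the statement to a combinatorial path-counting argument inside the logical basis $S$ of \Cref{l:span}. Recall from the proof of \Cref{l:span} that $S$ is an orthonormal family (distinct clock configurations $\ket{\wts}_C\ket{\wtt}_D$ are orthogonal, and by \Cref{ass:1} the $V_k$ never write to register $A$, so distinct proof strings $y$ stay orthogonal), that $\ket\phi=\etaoo\in S$, and that every $e^{i\theta H}$ with $H\in\SFGPQ$ maps $\Span(S)$ into itself. Inserting $\sum_{b\in S}\ketbra{b}{b}$ (the identity on $\Span(S)$) between consecutive evolutions expresses the overlap $\sandwich{\etayst}{\prod_{u=1}^m e^{i\theta_uH_u}}{\phi}$ (with $s=\absA+1$ and $\mathrm{HW}(y)\ge w$) as a sum over sequences $(b_0,\dots,b_m)$ in $S$, $b_0=\etaoo$ and $b_m=\etayst$, of the products $\prod_u\sandwich{b_u}{e^{i\theta_uH_u}}{b_{u-1}}$. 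Nonvanishing of the overlap forces at least one such \emph{path} to have all factors nonzero; fix one. By the four cases in the proof of \Cref{l:span}, a step $u$ along this path that acts non-trivially (i.e.\ $b_u\ne b_{u-1}$) does the following according to the family of $H_u$: an $F_j$-step changes only the clock~$2$ reading (between $j$ and $j{+}1$); a $G_j$-step changes only the clock~$1$ reading (between $j$ and $j{+}1$, and only while clock~$2$ reads~$1$); a $P_j$-step flips exactly bit $j$ of $y$ (only while clock~$1$ reads~$j$ and clock~$2$ reads~$\absD$); and a $Q_k$-step changes only the clock~$1$ reading (between $\absA{+}k$ and $\absA{+}k{+}1$). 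In particular only $F$-steps move clock~$2$, and only $G$- and $Q$-steps move clock~$1$.

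Let $N_F,N_G,N_P$ be the numbers of non-trivially-acting steps along this path whose Hamiltonian lies in $F$, $G$, $P$, respectively. Since the families $F,G,P$ are disjoint, these count disjoint subsets of $[m]$, so $m\ge N_F+N_G+N_P$ and at least $N_F+N_G+N_P$ of the $H_u$ lie in $F\cup G\cup P$; it therefore suffices to prove $N_G\ge\absA$, $N_P\ge w$, and $N_F\ge 2w(\absD{-}1)$. \emph{Bound on $N_G$.} For each $j\in[\absA]$, among all Hamiltonians only $G_j$ moves clock~$1$ across the cut separating readings $\le j$ from readings $\ge j{+}1$ ($F,P$ fix clock~$1$, and each $Q_k$ crosses only at reading $\absA{+}k>\absA\ge j$). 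The path runs from clock~$1$ reading~$1$ to reading~$\absA{+}1$, hence crosses each of these $\absA$ cuts, using each $G_j$ at least once: $N_G\ge\absA$. \emph{Bound on $N_P$.} Each non-trivial $P$-step flips one bit of $y$, and $0^{\absA}$ reaches a string of Hamming weight $\ge w$ only after $\ge w$ flips, so $N_P\ge w$; moreover, writing $p_1,\dots,p_r$ ($r=N_P\ge w$) for the clock~$1$ readings at the successive $P$-steps, every bit set in the final $y$ was flipped an odd, hence positive, number of times, so at least $w$ distinct values occur among the $p_i$, whence the sequence $p_1,\dots,p_r$ has at least $w$ maximal constant runs, i.e.\ at least $w{-}1$ indices $k$ with $p_k\ne p_{k-1}$.

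\emph{Bound on $N_F$} (the tight part). Every $P$-step occurs while clock~$2$ reads~$\absD$, but the path starts with clock~$2$ at reading~$1$, so the stretch before the first $P$-step raises clock~$2$ from $1$ to $\absD$: $\ge\absD{-}1$ $F$-steps. Between two consecutive $P$-steps with $p_{k-1}\ne p_k$, the path changes clock~$1$ between two readings $\le\absA$, hence (by the cut argument) uses some $G_j$ with $j\le\absA$, which requires clock~$2$ to read~$1$; as clock~$2$ reads~$\absD$ at both ends of this stretch it must first descend to $1$ and then re-ascend to $\absD$, costing $\ge 2(\absD{-}1)$ $F$-steps. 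Finally, after the last $P$-step clock~$2$ reads~$\absD$ while clock~$1$ reads $p_r\le\absA$, yet clock~$1$ must finish at $\absA{+}1$; crossing the cut at $j=\absA$ again requires clock~$2$ to read~$1$, costing a further $\ge\absD{-}1$ $F$-steps. These stretches of the path are disjoint, so $N_F\ge(\absD{-}1)+(w{-}1)\cdot 2(\absD{-}1)+(\absD{-}1)=2w(\absD{-}1)$. Combining the three bounds, $m\ge N_F+N_G+N_P\ge 2w(\absD{-}1)+w+\absA=w(2\absD{-}1)+\absA$ with this many $H_u$ in $F\cup G\cup P$; the case $w=0$ follows from the bound on $N_G$ alone.

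The step I expect to be the main obstacle is pinning the $F$-count down to exactly $2w(\absD{-}1)$, rather than losing an additive $\absD{-}1$: the naive accounting (one ascent of clock~$2$ to $\absD$, then one descent/re-ascent per distinct bit flipped) gives only $(2w{-}1)(\absD{-}1)$, and the extra $\absD{-}1$ comes precisely from the \emph{forced final descent of clock~$2$ to reading~$1$}, which is unavoidable because clock~$1$ must end at $\absA{+}1$ while no $P_j$ can advance clock~$1$ beyond $\absA$. The remaining points needing care are routine: the resolution of identity over $S$ is valid only because $\Span(S)$ is evolution-invariant (\Cref{l:span}); ``crossing a cut'' genuinely requires the corresponding $G_j$ to act non-trivially, so such a step exists; and the counts $N_F,N_G,N_P$ really do add because $F,G,P$ are disjoint families and each counted step is a distinct index in $[m]$.
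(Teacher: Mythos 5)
Your proof is correct and follows essentially the same route as the paper's: the paper also tracks, for each bit of $y$ that must be flipped, the forced ascent of clock~$2$ to $\absD$, the $P_j$ application, the forced descent back to $\wt 1$ before any $G$-gate can advance clock~$1$, and the residual $\absA-w$ clock-$1$ increments, arriving at the identical count $w(2\absD-1)+\absA$. Your path-sum decomposition over the orthonormal logical basis $S$ simply formalizes what the paper phrases informally as ``mapped to a state with non-zero overlap''.
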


\begin{proof}
    By \Cref{l:span}, $\ket{\phi_m}\in S$.
    Repeating the following argument for any bit of $y$ set to $1$ will yield a lower bound on the number of gates of $2w\absD$, which is almost what we want.

    Consider any index $j\in\set{1,\ldots,\absA}$ such that $y_j=1$ (equivalently, in state $\etayst$ the qubit $A_j$ is set to $1$).
    Since the qubit $A_j$ of the
    initial state $\ket{\phi}$ is set to $\ket{0}$, there must be an evolution step $u\in\set{1,\ldots, m}$ at which $A_j$ is mapped from\footnote{For clarity, throughout this proof, by ``mapped from $\ket{k}$ to $\ket{k'}$'', we mean $\ket{k}$ is mapped to a state with non-zero overlap with $\ket{k'}$. This suffices for \Cref{lem:HW}, since it only cares about non-zero overlap with some $\etayst$.} $\ket{0}$ to $\ket{1}$.
    By \Cref{obs:closedform}, only the Hamiltonian $P_j\in P$ can induce this mapping, and $P_j$ requires $C$ and $D$ to be set to $\ket{\wt{j}}_C\ket{\wt{\absD}}_D$ in order to act non-trivially.
    Let us analyze each of these two requirements in order.

    First, to obtain $\ket{\wt{j}}$ in $C$, there are only two possibilities:
    \begin{itemize}
        \item (Case a) We are in the initial state $\ket{\phi}$ with no Hamiltonian evolutions applied yet and $j=1$ (recall $\ket{\phi}$ has $C$ and $D$ set to $\ket{\wto}_C\ket{\wto}_D$ by definition), or
        \item (Case b) we are at a later evolution step at which $C$ was updated to $\ket{\wt{j}}$ from either $j-1$ or $j+1$. Since $1\leq j\leq\absA$, by \Cref{obs:closedform}, only operators $G_{j-1}$ (for $j>1$) and $G_{j}$ can effect this map. Both of these operators require $D$ set to $\ket{\wto}$.
    \end{itemize}
    Thus, in both cases, $D$ is also set to $\ket{\wto}$. So, assume one of the two cases has just occurred to update $C$ to $\ket{\wt{j}}$.

    The second requirement for $P_j$ to act non-trivially was that $D$ be set to $\ket{\wt{\absD}}$. But since $D$ is
    currently
    set to $\ket{\wto}$ in the initial state, and since only operators in $F$ can change clock $D$ (by precisely one time step per operator), we must apply at least $\absD-1$ operators in $F$ to obtain a state with $\ket{\wt{j}}_C\ket{\wt{\absD}}_D$. (To see that $C$ must still be set to $\ket{\wt{j}}_C$ at this point, observe that all operators in $\SFGPQ\setminus F$ act invariantly unless $D$ equals $\ket{\wto}$ or $C$ is at least $\ket{\wt{\absA+1}}$.) Applying $P_j$ is now necessary to flip $A_j$ from $\ket{0}$ to $\ket{1}$. We have thus reached an intermediate state at which $A_j$ is $\ket{1}$ and $C$ and $D$ are $\ket{\wt{j}}_C\ket{\wt{\absD}}_D$.

    Finally, either all bits of $y$ are now set correctly and $C$ must be updated to $\ket{\wt{\absA+1}}$ (due to the condition
    $s=\absA+1$), or we wish to repeat the argument above for the next index $j'\neq j$ for which we wish to map $A_{j'}$ from $\ket{0}$ to $\ket{1}$. In both cases, $D$ must first be reset back to $\ket{\wto}$ (otherwise operators in $G$ act invariantly, and these are precisely the operators which can update $C$ to either $j'$ or to $\absA+1$ as needed). Running the argument above regarding $F$ in reverse, we obtain that at least a number $\absD-1$ of $F$-gates are needed to return $D$ back to $\ket{\wto}$, and at least one $G$-gate is needed to update $C$ from $j$ to $j'$ or to $\absA+1$.

    Summing all gate costs together, for each $A_j$ to be flipped from $\ket{0}$ to $\ket{1}$ and for $C$ to be updated to the next value of $j'$, we require at least $2\absD$ gates. Thus, if $\etayst$ has $y$ with Hamming weight at least $w$, at least $2w\absD$ gates from $F\cup G\cup P$ are required. This is almost what we want.

    The final $\absA-w$ gates required for the claim arise because one requires at least $\absA$ $G$-gates to map $C$ from its initial value of $1$ to $\absA+1$, and above we have only accounted for $w$ of these $G$-gates (i.e.\ corresponding to all $j$ with $y_j=1$).
\end{proof}

Finally, the next lemma ensures that any prover applying fewer than $L$ Hamiltonians from $Q$ cannot satisfy the YES case's requirements for \VQA.

\begin{lemma}\label{lem:L}
    For any $m\in\nats$, let $(H_u)_{u=1}^m$ be any sequence of Hamiltonians drawn from $\SFGPQ$ and containing strictly fewer than $L$ Hamiltonians from $Q$. Then, for observable $M=I-\ketbra{1}{1}_{B_1}\otimes \ketbra{1}{1}_{C_{\absC}}$, the state $\ket{\phi_m}\coloneqq \Pi_{u=1}^me^{i\theta_u H_u}\ket{0\cdots 0}_{ABC}$ satisfies
\begin{equation}\label{eqn:zero}
    \bra{\phi_m}M\ket{\phi_m}=1.
\end{equation}
\end{lemma}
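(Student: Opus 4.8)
The plan is to reduce the statement to a short combinatorial fact about how far the first clock register $C$ can advance. Since $\ketbra{1}{1}_{B_1}\otimes\ketbra{1}{1}_{C_{\absC}}$ is a projector and $\ket{\phi_m}$ is a unit vector (a product of unitaries applied to one), we have $\bra{\phi_m}M\ket{\phi_m} = 1 - \norm{(\ketbra{1}{1}_{B_1}\otimes\ketbra{1}{1}_{C_{\absC}})\ket{\phi_m}}^2$, so it suffices to show $\ketbra{1}{1}_{C_{\absC}}\ket{\phi_m}=0$. First I would note that we may work with the construction's initial state $\ket{\phi}$ of \eqref{eq:def:phi}, the general standard-basis case reducing to this via \Cref{obs:startstate}. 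By \Cref{obs:time}, $\ket{\phi_m}$ is supported only on logical time steps, so its overlap with $\ket{\wt{s}}_C$ vanishes unless $s\in\set{1,\ldots,\absC}$, and among these only $\ket{\wt{\absC}}$ has $C_{\absC}=1$. Hence it is enough to prove: if the sequence $(H_u)_{u=1}^m$ contains $q<L$ Hamiltonians drawn from $Q$ (counted with multiplicity), then $\ket{\phi_m}$ has zero overlap with $\ket{\wt{\absC}}_C$.

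To this end, for a state $\ket{\psi}$ supported only on logical time steps, let $p(\ket{\psi})$ be the largest $s$ for which $\ket{\psi}$ has non-zero overlap with $\ket{\wt{s}}_C$ (so $p(\ket{\phi})=1$). I would then prove by induction on the length of the evolution sequence that, after a sequence containing $q$ Hamiltonians from $Q$, the resulting state satisfies $p\le\absA+1+q$. The base case is $p(\ket{\phi})=1\le\absA+1$. For the inductive step, I would read off the action on register $C$ of each evolution from \Cref{obs:closedform} (equivalently, from the four-case analysis in the proof of \Cref{l:span}): evolutions generated by $F_j$ or $P_j$ act as the identity on $C$ and so leave $p$ unchanged; an evolution generated by $G_j$ with $j\le\absA$ only couples clock positions $j$ and $j+1\le\absA+1$, hence cannot raise $p$ above $\max(p,\absA+1)$; and an evolution generated by $Q_k$ only couples clock positions $\absA+k$ and $\absA+k+1$, so it raises $p$ by at most one, since new support at position $\absA+k+1$ can only be created from existing support at position $\absA+k$. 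Combined with the inductive hypothesis $p\le\absA+1+q'$ (with $q'\ge 0$ the running count before the current step), each case gives $p\le\absA+1+q$ afterwards.

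Finally, applying this to the whole sequence: if $q<L$, i.e.\ $q\le L-1$, then $p(\ket{\phi_m})\le\absA+1+(L-1)=\absA+L=\absC-1<\absC$, so $\ket{\phi_m}$ has no overlap with $\ket{\wt{\absC}}_C$; therefore $\ketbra{1}{1}_{C_{\absC}}\ket{\phi_m}=0$ and $\bra{\phi_m}M\ket{\phi_m}=1$. The step I expect to need the most care — the real (and only) subtlety — is the \emph{single-step} property of the $Q$-evolutions together with the fact that the $G$-evolutions cannot advance clock $C$ past position $\absA+1$; these two facts are exactly what prevent the clock from ``skipping ahead'' and are why the number of $Q$-evolutions is a genuine bottleneck. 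Everything else is routine bookkeeping over the four Hamiltonian families $F$, $G$, $P$, $Q$.
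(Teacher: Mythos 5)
Your proposal is correct and follows essentially the same route as the paper: both arguments observe that the state stays supported on logical clock values, that $F$ and $P$ evolutions fix register $C$, that $G$ evolutions cannot push $C$ past position $\absA+1$, and that each $Q$ evolution advances $C$ by at most one step, so fewer than $L$ of them cannot reach position $\absC=\absA+L+1$. Your explicit induction on the maximal supported clock position $p$ just formalizes the counting the paper does informally (and you correctly note that the clock condition in $M$ alone already forces $\bra{\phi_m}M\ket{\phi_m}=1$, making the paper's closing appeal to \Cref{ass:2} unnecessary here).
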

\begin{proof}
    By \Cref{l:span}, $\ket{\phi_m}\in S$ for $S$ from Equation~(\ref{eqn:7}). Next, by \Cref{obs:closedform}, Hamiltonians from $F\cup P$ act invariantly on clock $C$, and Hamiltonians from $G$ can only increment $C$ from $1$ (i.e.\ its initial value in $\ket{\phi}$) to $\absA+1$. The observable $M$, however, acts non-trivially only when $C$ is set to $\absC=\absA+L+1$. The only Hamiltonians which can increment $C$ from $\absA+1$ to $\absA+L+1$ are those from $Q$. Each such $H_s\in Q$ can map $C$ from time $\absA+s$ to $\absA+s+1$ or vice versa, for $s\in\set{1,\ldots, L}$. Thus, since strictly fewer than $L$ of the $H_u$ chosen are from $Q$, it follows that $\ket{\phi_m}$ has no support on time step $\absC=\absA+L+1$, i.e.\ $(I_{AB}\otimes \ketbra{1}{1}_{C_{\absC}})\ket{\phi_m}=0$. The claim now follows since we \Cref{ass:2} says verifier $V=V_L\cdots V_1$ has its output qubit, denoted $B_1$, set to $\ket{0}$ until its final gate $V_L$ is applied.
\end{proof}

        \subsubsection{Completeness} \label{sscn:VQAcompleteness}
        With all observations and lemmas of Section \ref{sscn:VQAlemmas} in hand, we are ready to prove completeness of the construction. Specifically, in the YES case, there exists an input $y\in\set{0,1}^{\absA}$ of Hamming weight at most $g$ accepted with probability at least $2/3$ by $V$. The honest prover proceeds as follows.
        \begin{itemize}
            \item (Prepare classical proof) Prepare state (up to global phase) $\ket{\psi_0}\coloneqq  \ket{y}_A\ket{0}_B\ket{\wt{\absA+1}}_C\ket{\wto}_D$ as follows. Starting with $\ket{\phi}=\ket{0\cdots 0}_{AB}\ket{\wto}_C\ket{\wto}_D$:
                \begin{enumerate}
                    \item Set $j=1$.
                    \item If $y_j=1$ then
                    \begin{itemize}
                        \item Apply, in order, unitaries $\exp(i(\pi/2)F_1)$, $\exp(i(\pi/2)F_2)$,\ldots, $\exp(i(\pi/2)F_{\absD-1})$. This maps registers $C$ and $D$ to $1$ and $\absD$, respectively.\vspace{1mm}
                        \item Apply $\exp(i(\pi/2)P_j)$, which maps $A_j$ from $0$ to $1$.\vspace{1mm}
                        \item Apply, in order, unitaries $\exp(i(\pi/2)F_{\absD})$, $\exp(i(\pi/2)F_{\absD-1})$,\ldots, $\exp(i(\pi/2)F_1)$. This maps registers $C$ and $D$ back to $1$ and $1$, respectively.\vspace{1mm}
                    \end{itemize}
                    \item Apply unitary $\exp(i(\pi/2)G_j)$, which maps $C$ from $j$ to $j+1$.
                    \item Set $j=j+1$.
                    \item If $j<\abs{A}+1$, return to line 2 above.
                \end{enumerate}
            This process applies $g(2\absD-1)+ \absA$ gates.

            \item (Simulate verifier) Prepare the sequence of states $\ket{\psi_j}=e^{i\frac{\pi}{2} Q_j}\cdots e^{i\frac{\pi}{2} Q_1}\ket{\psi_0}$ by applying, in order, unitaries $\exp(i(\pi/2) Q_1)$, $\exp(i(\pi/2) Q_2)$,\ldots, $\exp(i(\pi/2) Q_L)$.
            Since the $j$th step of this process applies $\exp(i(\pi/2) Q_j)$, and since the state $\ket{\psi_0}$ has clock $C$ set to $\absA+1$, \Cref{obs:closedform} and \Cref{eqn:Qt} imply that
                \begin{equation}
                    e^{i\frac{\pi}{2} Q_j}\ket{\psi_{j-1}}=\left((V_j)_{R_j}\otimes\ket{\wt{\absA+j+1}}
                \,\bra{\wt{\absA+j}}_C\right) \ket{\psi_{j-1}},
            \end{equation}
            i.e.\ we increment the clock from $\absA +j$ to $\absA+j+1$ and apply the $j$th gate $V_j$. The final state obtained is thus
            $\ket{\psi_L}=\left(V_L\cdots V_1 \ket{y}_A\ket{0}_B\right) \otimes \ket{\wt{\absA+L+1}}_C\ket{\wto}_D$. This process applies $L$ gates.
        \end{itemize}
        Since $V$ accepts $y$ with probability at least $2/3$, we conclude $\bra{\psi_L}M\ket{\psi_L}\leq 1/3$, as desired. The number of Hamiltonians from $\SFGPQ$ we needed to simulate in this case is $m=g(2\absD-1)+\absA+L$, as desired.
		
        \subsubsection{Soundness} \label{sscn:VQAsoundness}
        We next show soundness. Specifically, in the NO case, for all inputs $y\in\set{0,1}^{\absA}$ of Hamming weight at most $g'$, $V$ accepts with probability at most $1/3$. So, consider any sequence of $m'=g'(2\absD-1)+\absA+L$ Hamiltonian evolutions producing state $\ket{\phi_{m'}}\coloneqq \Pi_{t=1}^{m'}e^{i\theta_u H_u}\ket{0\cdots 0}_{AB}\ket{\wto}_C\ket{\wto}_D$ for arbitrary $\theta_u\in \reals$ and Hamiltonians $H_u\in \SFGPQ$. \Cref{l:span} says we may write
        \begin{equation}\label{eqn:9}
            \ket{\phi_{m'}}=\sum_{y\in\set{0,1}^{\absA}}\sum_{s=1}^{\absC}\sum_{t=1}^{\absD}\alpha_{y,s,t} \etayst \in\Span(S)
        \end{equation}
        with $\sum_{y,s,t}\abs{\alpha_{y,s,t}}^2=1$.
        Now, for the observable \eqref{eq:def:M} follows that
        \begin{align}
            \bra{\phi_{m'}}M\ket{\phi_{m'}}&=1-\bra{\phi_{m'}}\left(\ketbra{1}{1}_{B_1}\otimes \ketbra{1}{1}_{C_{\absC}}\right)\ket{\phi_{m'}}=1-\bra{\eta}\ketbra{1}{1}_{B_1}\ket{\eta}\text{ for}\label{eqn:12}\\
            \ket{\eta}&\coloneqq \sum_{y\in\set{0,1}^{\absA}}\sum_{t=1}^{\absD}\alpha_{y,\absA+L+1,t}V_L\cdots V_1\ket{y}_A\ket{0}_B\ket{\wt{\absA+L+1}}_C\ket{\wtt}_D,
        \end{align}
        where we have used \Cref{eqn:9} and the fact that $M$ projects onto
        time step $\absC$ in register $C$.
        Now, if we applied strictly less than $L$ evolutions from $Q$, \Cref{lem:L} says we have no weight on time step $\absC$, so that $\bra{\phi_{m'}}M\ket{\phi_{m'}}=1\geq 2/3$, as required in the NO case. If, on the other hand, we applied at least $L$ evolutions from $Q$, then we must have applied at most $g'(2\absD-1)+\absA$ evolutions from $F\cup G\cup P$ (otherwise, we have a contradiction since $m'=g'(2\absD-1)+\absA+L$). \Cref{lem:HW} hence implies the right hand side of \Cref{eqn:12} equals $1-\bra{\eta_{g'}}\ketbra{1}{1}_{B_1}\ket{\eta_{g'}}$ for\footnote{Below, $\textup{HW}(y)$ denotes the Hamming weight of string $y$.}
        \begin{align}
            \ket{\eta_{g'}}\coloneqq \sum_{y\text{ s.t. }\textup{HW}(y)\leq g'}\sum_{t=1}^{\absD}\alpha_{y,\absA+L+1,t}V_L\cdots V_1\ket{y}_A\ket{0}_B\ket{\wt{\absA+L+1}}_C\ket{\wtt}_D,
        \end{align}
        where $\operatorname{HW}(y)$ denotes the Hamming weight of the bitstring $y$.
         But since any input $y$ of Hamming weight at most $g'$ is accepted with probability at most $1/3$, we conclude $\bra{\phi_{m'}}M\ket{\phi_{m'}}\geq 2/3$, as claimed.

         \subsubsection{Hardness ratio} \label{sscn:VQAratio}
         Finally, we show our reduction has the desired approximation ratio. Observe
         \begin{equation}
            \frac{m'}{m}=\frac{g'(2\absD-1)+\absA+L}{g(2\absD-1)+\absA+L}=\frac{g'(2\lceil L^{1+\delta}\rceil-1)+\absA+L}{g(2\lceil L^{1+\delta}\rceil-1)+\absA+L}.
         \end{equation}
         Since $\abs{A}\leq L$ by definition, and since we will choose $\delta>0$ as a small constant, this ratio scales asymptotically as $g'/g$.
         Recall now that
         \Cref{thm:QMSAhard} says that for any constant $\epsilon'>0$, the QMSA instance $\Pi'=(V',g,g')$ we are reducing from is QCMA-hard to approximate within $g'/g\in O((N')^{1-\epsilon'})$, for $N'$ the encoding size of $\Pi'$.
         Observe that
         \begin{align}\label{eqn:N'}
            N'\geq 2L'\log(\n'),
         \end{align}
         as $L'$ is the number of gates comprising $V'$, and each gate $V'_i$ takes $O(1)$ bits to specify (assuming a constant-size universal gate set) and $2\log \n'$ bits to indicate which pair of qubits $V'_i$ acts on.
         So it remains to compare $N'$ with the encoding size $N$ of our MIN-VQA instance $\Pi$.
         For this, observe that each Hamiltonian in $\SFGPQ$ may be specified using $O(\log(\absA+\absB+\absC+\absD))$ bits, since each $H_i$ requires $O(1)$ bits to specify the $4$-local matrix itself, and $4\log(\absA+\absB+\absC+\absD)$ bits to specify the (at most) $4$-tuple of qubits on which $H_i$ acts.
         Similarly, $M$ is specifiable using $O(\log(\absA+\absB+\absC+\absD))$ bits.
         Thus, $N\in O(\abs{\SFGPQ}\log(\absA+\absB+\absC+\absD))$, where we may bound
         \begin{align}
            \abs{\SFGPQ}&=2\absA+L+\absD-1\\
            &=2\n'+(L'+\n'+1)+(L'+\n'+1)^{1+\delta}-1\\
            &\leq2\n'+(L'+\n'+1)+(2L'+1)^{1+\delta}-1\\
            &\in O(L'^{1+\delta}),
         \end{align}
         where we have used that $\n'\in O(L')$ (otherwise, $V'$ does not have enough time to read all $\n'$ of its input bits). Recall now that to obtain the hardness ratio of our claim, we must show the following: For any desired constant $\epsilon>0$, there exist $\epsilon'>0$ and $\delta'>0$ such that
         \begin{align}\label{eqn:goal}
            \frac{g'}{g}\geq (N')^{1-\epsilon'}\geq N^{1-\epsilon}.
         \end{align}
         We know that for some $c\in \Theta(1)$,
         \begin{align}\label{eqn:bounds}
            (N')^{1-\epsilon'}\geq (2L'\log(\n'))^{1-\epsilon'}\text{ and }N\leq c(L')^{1+\delta}\log(\absA+\absB+\absC+\absD).
         \end{align}
         Equation (\ref{eqn:goal}) thus holds if
         \begin{align}\label{eqn:goal2}
            \frac{1-\epsilon'}{1+\delta}\geq \frac{(1-\epsilon)(\log c+\log\log(\absA+\absB+\absC+\absD))-(1-\epsilon')\log\log(\n^2)}{(1+\delta)\log(L')}+(1-\epsilon).
         \end{align}
         We conclude that for large enough $L'$, for any desired $\epsilon>0$, one can choose $\epsilon'>0$ and $\delta>0$ so that Equation (\ref{eqn:goal}) holds, as desired.

\end{proof}

\newcommand{\av}[1]{\left<#1\right>}


\section{Extension of the hardness results to QAOAs}
\label{scn:QAOA}


In this section, we prove \Cref{thm:QCMA_QAOA}, which is restated for convenience shortly. First, we define the optimization problem \QAOA\ covered by the theorem.

\begin{figure}
	\centering
	\includegraphics[width=120mm]{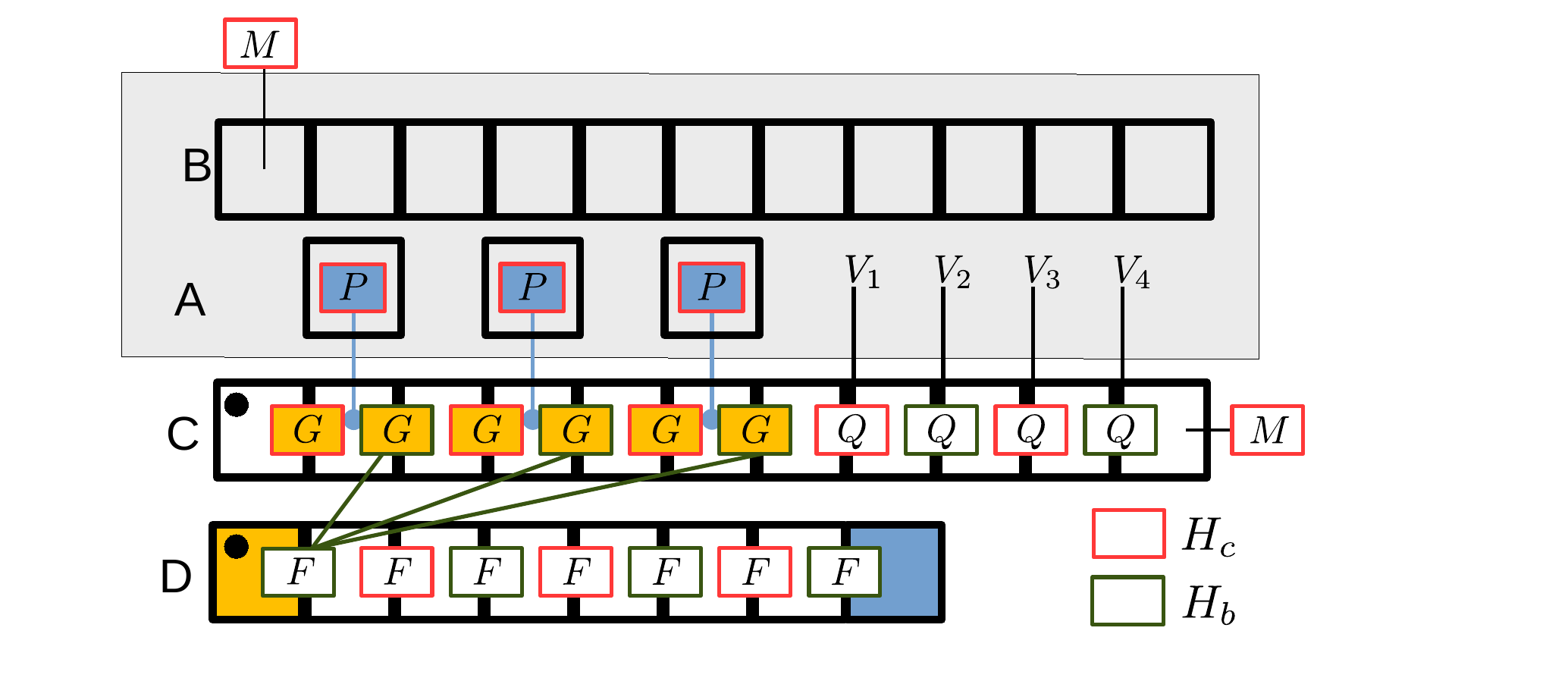}
	\caption{Figure describing the \ac{QAOA} instance (see \Cref{fig:sketch} for further details). The border color of each gate indicates if the generator belongs to $H_b$ or $H_c$. Compared to the previous VQA instance, the $P$ now only act at even time steps in $C$ and the even-indexed $G_j$ and the $F_1$ generator are combined into one generator, denoted by the red and dark green edges.
	}
	\label{fig:sketchQAOA}
\end{figure}

A $k$-local Hamiltonian is a sum of strictly $k$-local terms, i.e.\ Hermitian operators each of which acts non-trivially on at most $k$ qubtis. As mentioned previously, our definition of \QAOA\ is more general than that of~\cite{FarGolGut14}, and closer to that of~\cite{HWORVB19}.

\begin{problem}[QAOA minimization (\QAOA($k$))]\label{def:QAOA}
	For an $n$-qubit system:
	\begin{itemize}
		\item Input:
		\begin{enumerate}
			\item A set $H=\set{H_b,H_c}$ of $k$-local Hamiltonians. 
			\item A $\textup{poly}(n)$-size quantum circuit $U_b$ preparing the ground state of $H_b$, denoted $\gsb$.
			\item Integers $0\leq m\leq m'$ representing thresholds for depth.
		\end{enumerate}
		\item Output:
		\begin{enumerate}
			\item YES if there exists a sequence of angles\footnote{Throughout \Cref{def:QAOA}, for clarity we assume all angles are specified to $\poly(n)$ bits.} $(\theta_i)_{i=1}^{m}\in \RR^{m}$, such that
			\begin{align}\label{eqn:psi}
			\ket{\psi}\coloneqq e^{i\theta_{m} H_b} e^{i\theta_{m-1} H_c}\cdots e^{i\theta_2 H_b} e^{i\theta_1 H_c}\gsb
			\end{align}
			satisfies $\bra{\psi}H_c\ket{\psi}\leq \frac{1}{3}$.

			\item NO if for all sequences of angles $(\theta_i)_{i=1}^{m'}\in \RR^{m'}$
			\begin{align}
			\ket{\psi}\coloneqq e^{i\theta_{m'} H_b} e^{i\theta_{m'-1} H_c}\cdots e^{i\theta_2 H_b} e^{i\theta_1 H_c}\gsb,
			\end{align}
			satisfies $\bra{\psi}H_c\ket{\psi}\geq \frac{2}{3}$.
		\end{enumerate}
	\end{itemize}
\end{problem}
\noindent Just as for \VQA, by ``optimal depth'' of a QAOA, we mean the minimum number of Hamiltonian evolutions $m$ required above. The expectation value thresholds $\frac{1}{3}$ and $\frac{2}{3}$ are arbitrary and can be changed by rescaling and shifting $H_c$.

\thmqaoa*

\begin{proof}
    Containment in \QCMA\ is again straightforward and thus omitted. For \QCMA-hardness of approximation, we again use a reduction from an instance $\Pi=(V,g,g')$ of QMSA with $V=V_{L}\cdots V_1$ being a sequence of $L$ two-qubit gates taking in $\n$ input bits and $\m$ ancilla qubits. All those terms
    are defined as in the proof of \Cref{thm:QCMA}.

\paragraph{Proof organization.}
The proof is organized as follows.
In \cref{sec:QAOAinstance} we explain the modifications of the VAQ instances to obtain the QAOA instances of our construction. \Cref{sscn:QAOAprel} and \cref{sscn:QAOACF} explain how we recover the desired initial state and cost function.
\Cref{sec:QAOApreliminaries} provides notation preliminary technical results needed for the QCMA-completeness proof.
Then, completeness is shown in \cref{sec:QAOAcompleteness} and soundness in \cref{sec:QAOAsoundness}.
Finally, in \cref{sec:QAOAratio}, we analyze the hardness ratio achieved by the reduction.
\subsection{QCMA completeness for QAOAs}
\label{sec:QAOAinstance}
    To specify our \ac{QAOA} instance, we
    modify the set $\SFGPQ$ from the proof of \Cref{thm:QCMA} to suit our reduction here as follows. The structural changes are illustrated in Figure~\ref{fig:sketchQAOA}.
    Briefly recapping the proof techniques outline in \cref{scn:techniques}, we:
\begin{compactenum}[(i)]
    \item implement the reduction with only two generators by alternating even and odd steps of the honest prover's actions, so that $H_b$ simulates the even steps, and $H_c$ the odd ones,
    \item introduce terms $G_j$ from \cref{eqn:Ge} with $3$-cyclic behavior, i.e. allowing three logical actions,
    \item add new constraints to $H_b$ to ensure its unique ground state encodes the correct start state (see \Cref{eqn:psi} of \Cref{def:QAOA}), and
    \item add the observable $O$ to $H_c$ (scaled larger than other terms in $H_c$) to obtain the correct cost function.
\end{compactenum}
    An undesired side effect of this is that evolution by $H_c$ allows one to leave the desired logical computation space, $S$.
    We will show via \cref{lem:gate_decay} that the states obtained are still close to the set, which suffices for our soundess analysis.

    To begin, we use registers composed of $\absA=\n$, $\absB=\m$, $\absC=L+2\n+1$, and $\absD=\lceil L^{1+\delta}\rceil$ qubits, respectively, where $0<\delta<1$ is fixed by specified later.
    Without loss of generality, we assume $\absD$ and $L$ to be even.
	Additionally to the changes we outline, we also add diagonal terms additional diagonal terms. This will be relevant for defining the initial state later on.
\begin{itemize}
	\item ($F$) We remove $F_1$,
	\begin{align}
	F_j&\coloneqq \ketbra{01}{10}_{D_{j,j+1}}+\ketbra{10}{01}_{D_{j,j+1}}-2\ketbra{00}{00}_{D_{j,j+1}}  \text{ for all }j\in \set{2,\ldots, \absD-1}.\label{eqn:F2}
	\end{align}
	\item ($G$) We double the number of qubits $G$ acts on,
	\begin{align}
	G_j&\coloneqq \left(\ketbra{01}{10}_{C_{j,j+1}}+\ketbra{10}{01}_{C_{j,j+1}}\right)\otimes \ketbra{1}{1}_{D_1} -2\ketbra{001}{001}_{C_{j,j+1},D_1}
    \nonumber
    \\
	&
    \hspace{7cm}
    \text{ for all }j\in \set{1,3,\ldots, 2\absA-1},\label{eqn:Go}\\
	G_j&\coloneqq \frac{i}{\sqrt{3}}\bigl(\ketbra{0110}{1010}+\ketbra{1001}{0110}+\ketbra{1010}{1001}
    \nonumber \\ \nonumber
	&\phantom{=.}
    -\ketbra{1010}{0110}-\ketbra{0110}{1001}-\ketbra{1001}{1010}
    \bigr)_{C_{j,j+1},D_{1,2}} -2\ketbra{0010}{0010}_{C_{j,j+1},D_{1,2}} \\
	&
    \hspace{7cm}
    \text{ for all }j\in \set{2,4,\ldots, 2\absA}.\label{eqn:Ge}
	\end{align}
	While odd numbered gates can only change the clock,
    even numbered ones can
    still
    increment $C$, but also have the option of moving $\ket{\wt 1}_D\rightarrow\ket{\wt 2}_D$, which is the operation performed by $F^{(o)}_1$ in the proof of \cref{thm:QCMA} on \VQA. The superscript $(o)$ refers to the gates of the previous VQA proof.
    The following relations hold:
	\begin{align}
	e^{i\frac{\pi}{3}G_i}\ket{\wt i,\wt 1}_{C,D}&=e^{i\frac{\pi}{2}G_i^{(o)}}\ket{\wt i,\wt 1}_{C,D}\propto\ket{\wt {i+1},\wt 1}_{C,D} ,
    \\
	e^{i\frac{2\pi}{3}G_i}\ket{\wt i,\wt 1}_{C,D}&=e^{i\frac{\pi}{2}F_1^{(o)}}\ket{\wt i,\wt 1}_{C,D}\propto \ket{\wt i,\wt 2}_{C,D} ,
	\end{align}
	where, in this case, ``$\propto$'' means equality up to a phase.
	\item ($P$) For each qubit $j\in\set{1,\ldots, \absA}$ of $A$, we define the $X$-operators, but now they only act on even values in the clock register,
	\begin{align}
	P_j&\coloneqq  X_{A_j}\otimes \ketbra{1}{1}_{C_{2j}}\otimes \ketbra{1}{1}_{D_{\absD}} -2\ketbra{00}{00}_{C_{2j},D_{\absD}}\text{ for all }j\in \set{1,\ldots, \absA}.
	\end{align}
	\item ($Q$) We shift the $C$-indices of the $Q$-gates because reading in the proof takes longer time now,
	\begin{align}
	Q_{k} &\coloneqq  (V_k)_{R_k}\otimes\ketbra{01}{10}_{C_{2\absA+k,2\absA+ k+1}}+ (V_k^\dagger)_{R_k}\otimes\ketbra{10}{01}_{C_{2\absA +k,2\absA+k+1}}\\&-2\ketbra{00}{00}_{C_{2\absA+k,2\absA+k+1}} \text{ for all }k\in \set{1,\ldots, L}
	\end{align}
	\item ($M$), ($H_0$) We add the two operators
	\begin{align}
	H_0&=-\left(\sum_{i \in [\absA]}\ketbra{0}{0}_{A_i}+\sum_{i \in [\absB]}\ketbra{0}{0}_{B_i}\right) \otimes \ketbra{1}{1}_{C_1},
    \\
	M&=I-\ketbra{1}{1}_{B_1}\otimes \ketbra{1}{1}_{C_{\absC}}
	\end{align}
	to the
    set of generators.
\end{itemize}

\noindent
To construct our desired \ac{QAOA} instance, we define a partition of all gates into two groups:
\begin{align}
\mathcal{G}_1&=\{G_{i}\}_{i\in \{2,4,\dots, 2\absA\}}\cup \{F_{i}\}_{i \in \{3,5,\dots, \absD-1\}}\cup \{Q_i\}_{i\in \{2,4,\dots, L\}},
\\
\mathcal{G}_2&=\{G_i\}_{i\in \{1,3,\dots, 2\absA-1\}}\cup \{F_{i}\}_{i \in \{2,4,\dots, \absD-2\}} \cup\{Q_i\}_{i\in \{1,3,\dots, L-1\}}\cup \{P_i\}_{i\in [\absA]}.
\end{align}
Intuitively, $\mathcal{G}_1$ (respectively, $\mathcal{G}_2$) will be part of our Hamiltonian $H_b$ (respectively, $H_c$). Note also that all operators in $\mathcal{G}_1$ (respectively, $\mathcal{G}_2$) pairwise commute, a fact we will use in our analysis. Finally, in addition to \Cref{ass:1} and \Cref{ass:2} from the VQA section, we shall also use the following.
\begin{ass}\label{ass:expprec}
	The acceptance probability of $V$ in the YES (respectively, NO) case is at least $1-\epsilon_Q$ (respectively, at most $\epsilon_Q$), where $\epsilon_Q=O(N^{-1})$.
    This is achieved via standard parallel $k$ times repetition of the circuit $V$, followed by a majority vote.
    This increases the encoding size of $V$ --- for $k$ repetitions, the new gate sequence length scales with $L'=k(L+O(1))$, and yields $\epsilon_Q'=\epsilon_Q^{O(k)}$. For the precision we require, it suffices to set $k=O(\log(N))$.
\end{ass}

Due to this assumption, our encoding size increases by a multiplicative log factor, which does not affect our final approximation ratio calculation.

\subsubsection{The Min-QAOA instance}\label{sscn:QAOAinstance}
	The \ac{QAOA} instance we use to prove \Cref{thm:QCMA_QAOA} takes the generators
	\begin{align}
	H_b&=\sum_{\Gamma\in \mathcal{G}_1}\Gamma+H_0\, ,
    \\
	H_c&=\kappa\sum_{\Gamma\in \mathcal{G}_2}\Gamma+ M
	\end{align}
	with $m=g(2\absD-2)+\absC-1$ and $m'=g'(2\absD-4)+\absC-1$.
Crucially, the generators/operators comprising $H_b$ (respectively, $H_c$) pairwise commute.	
The $Q$ gates are taken from a QMSA circuit where using \Cref{ass:expprec}, we set the acceptance threshold of the circuit to
$\sqrt{\epsilon_{Q}}=\frac{1}{48m'}$.
Also, we set $\kappa=\frac{1}{24|\mathcal{G}|}$.

We proceed by first characterizing the initial state and cost function as defined in \cref{def:QAOA}.
\subsubsection{Initial state}\label{sscn:QAOAprel}
Recall that in \Cref{def:QAOA} the initial state has to be a ground state of $H_b$ (given as input via a preparation circuit $U_b$).
We want this initial state to be
\begin{align}\label{eq:gs}
\gsb=\ket{0,0,\wt 1,\wt 1}_{ABCD},
\end{align}
 which can trivially be prepared by a constant-sized circuit $U_b$.
 To see that we indeed obtain this ground state, note below that for all generators except $G_1$, $M$, which are not included in $H_b$, $\gsb$ is a ground state of the generator.
 Moreover, the groundstate turns out to be unique because
 for each qubit, the state is uniquely determined by one of the generators, which implies that the entire state is unique.
Specifically, we have
\begin{align*}
    F_i\ket{0,0,\wt 1,\wt 1}_{ABCD}&=-2\ket{0,0,\wt 1,\wt 1}_{ABCD} \quad \forall i\in\{3,5,\dots,\absD-1\},&\quad \norm{F_i}_{\infty}&=2,
    \\
    G_i\ket{0,0,\wt 1,\wt 1}_{ABCD}&=-2\ket{0,0,\wt 1,\wt 1}_{ABCD} \quad \forall i\in\{2,4\dots,2\absA\}, &\quad \norm{G_i}_{\infty}&=2,
    \\
    Q_i\ket{0,0,\wt 1,\wt 1}_{ABCD}&=-2\ket{0,0,\wt 1,\wt 1}_{ABCD} \quad \forall i\in\{2,4\dots,L\}, &\quad \norm{Q_i}_{\infty}&=2,
    \\
    H_0\ket{0,0,\wt 1,\wt 1}_{ABCD}&=-(\absA+\absB)\ket{0,0,\wt 1,\wt 1}_{ABCD}, &\quad \norm{H}_{\infty}&=\absA+\absB.
\end{align*}
Indeed, since the state \eqref{eq:gs} is a ground state of all the generators of $H_b$ and the terms of $H_b$ mutually commute, it is also a ground state of $H_b$.
Moreover, since every qubit is non-trivially supported by at least one generator of $H_b$, it is also the unique ground state for the entire Hilbert space, i.e.,
$\gsb$ represents the unique one-dimensional subspaces where each gate acts non-trivially.

\subsubsection{Cost function} \label{sscn:QAOACF}
In the \ac{QAOA} setup, the measured observable is $H_c$.
For our construction we wish to use the observable $M$.
Fortunately, we can find an upper bound for the difference of these operators.
Namely, for every $\ket{\Psi}\in \mathcal H$
\begin{align}\label{eqn:close}
|\braket{\Psi}{(H_c-M)|\Psi}|&=\kappa|\braket{\Psi}{\sum_{\Gamma\in \mathcal{G}_2}\Gamma|\Psi}|\leq 2\kappa |\mathcal{G}_2|\leq \frac{1}{12}
\end{align}
where we used (1) that $\|g\|_{\infty}\leq2$ for all $\Gamma\in\mathcal{G}_2$, and (2) the definition of $\kappa$.

\subsubsection{Preliminaries for the completeness proof}
\label{sec:QAOApreliminaries}
We first define the set of states comprising our logical computation space,
\begin{align}
S&\coloneqq \{ V_{t-2\absA-1}\cdots V_{1}\ket{y}_A\ket{0\cdots 0}_B\ket{\wt t}_C\ket{\wt s}_D |\,\forall (y,t,s)\in I_S\}
\end{align}
with
\begin{align}
I_S=\left\{(y,t,s)\bigg| y\in\set{0,1}^{\absA}, t \in\{1,\dots,\absC \},s\in
\begin{cases}
\{1,\dots,\absD\}& \text{if }t\in \{2,4,\cdots 2\absA\}\\
\{1\}& \text{otherwise}
\end{cases}\right\}
\end{align}
being the allowed index set.
Here, the notation means that $V_1$ is only applied if $t> 2\absA+1$.
Below, we often write a state $\ket{\Psi_S}\in \mathrm{span}(S)$ as
\begin{align}
\ket{\Psi_S}&=\sum_{(y,t,s)\in I_S}a_{y,t,s}V_{t-2\absA-1}\cdots V_{1}\ket{y}_A\ket{0\cdots 0}_B\ket{\wt t}_C\ket{\wt s}_D\\&\eqqcolon
\sum_{(y,t,s)\in I_S}a_{y,t,s}\psiyts\, .
\end{align}
We also define the function $W$, which is intended to capture a lower bound on the number of Hamiltonian evolutions required to prepare a given logical state $\psiyts$:
\begin{align}
W(y,t,s)&\coloneqq (2\absD-4)\mathrm{HW}(y)+t+(-1)^{\delta_{y_{\lceil t/2\rceil},1}}(s+\delta_{s,1}-2),
\end{align}
where $\mathrm{HW}(y)$ denotes the Hamming weight of $y$.

We next show a helpful lemma regarding the action of each Hamiltonian on our logical computation space, $S$.
\begin{lemma}\label{lem:nontrivial}
	The following two statements hold:
	\begin{itemize}
		\item	For every $\psiyts\in S$ and $H_i\in\{H_b,H_c\}$
	\begin{align}
	e^{\i H_i\theta}\psiyts=e^{\i\alpha^{(i)}_{y,t,s}\theta} e^{\i \Gamma^{(i)}_{y,t,s}\theta}\psiyts
	\end{align}
	for some phase $\alpha\in\reals$.
In words, applying $H_i$ simulates application of a single gate $\Gamma^{(b)}_{y,t,s}\in \mathcal G_1$, $\Gamma^{(c)}_{y,t,s}\in \mathcal \kappa \mathcal G_2\cup\{ M\}$ up to global phase $\alpha_{y,t,s}$,
where $\kappa \mathcal G_2 = \{\kappa \Gamma\mid \Gamma\in \mathcal G_2\}$.
	\item For every $\psiyts\in S$ and every gate $\Gamma\in \mathcal G_1 \cup \mathcal G_2\cup \set{H_0}$, there exist amplitudes $\set{a_{y,t,s}}$ such that
	\begin{align}
	e^{\i \Gamma\theta}\psiyts=\sum_{\substack{(y',t',s')\in I_S\\
			W(y',t',s')\leq W(y,t,s)+1}
	}a_{y,t,s}\ket{\Psi_{y',t',s'}}
	\end{align}
    In words, the application of $\Gamma$ can only increase value of the $W$-function by at most $1$.
	\end{itemize}
\end{lemma}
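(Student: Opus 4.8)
The plan is to establish both parts by a finite case analysis over the five families of generators ($F$, $G$, $P$, $Q$ and the added terms $H_0$, $M$), exploiting the fact that on a logical state $\psiyts$ the clock registers $C$ and $D$ are in the ``one-hot'' states $\ket{\wt{t}}_C$, $\ket{\wt{s}}_D$, so that the clock-controls of distinct generators are almost never simultaneously triggered.

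For the first part, recall (from the construction) that the operators comprising $H_b$, and those comprising $H_c$, pairwise commute, so $e^{\i H_b\theta}=e^{\i\Gamma\theta}\prod(\text{remaining factors})$ for any chosen factor (and similarly for $H_c$). Fixing $(y,t,s)\in I_S$, the goal is to show that all but at most one of the factors $\{e^{\i\Gamma\theta}:\Gamma\in\mathcal{G}_1\}\cup\{e^{\i H_0\theta}\}$ (resp. $\{e^{\i\kappa\Gamma\theta}:\Gamma\in\mathcal{G}_2\}\cup\{e^{\i M\theta}\}$) act on $\psiyts$ as a global phase. For each generator this is read off its closed form (obtained from $\Gamma^2$ as in \Cref{obs:closedform}): an even $G_i$ is non-scalar on $\psiyts$ only if $(t,s)\in\{(i,1),(i+1,1),(i,2)\}$; an odd $G_i$ only if $t\in\{i,i+1\}$ and $s=1$; an $F_i$ only if $s\in\{i,i+1\}$; a $Q_i$ only if $t\in\{2\absA+i,2\absA+i+1\}$; a $P_j$ only if $(t,s)=(2j,\absD)$; $M$ only if $t=\absC$; and $H_0$, being diagonal, is always a phase on one-hot clock states. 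Since the intervals $\{i,i+1\}$ for distinct same-parity indices are disjoint; the $D$-controls requiring $s\le2$, requiring $s\ge3$, and requiring $s=\absD$ are mutually exclusive; the $C$-controls living in the ranges $t\le2\absA+1$, $t\le2\absA+L$, and at $t=\absC$ are mutually exclusive; and — crucially — the constraint $(y,t,s)\in I_S$ forces $s=1$ whenever $t\notin\{2,4,\dots,2\absA\}$, one concludes that at most one generator in $\mathcal{G}_1$ (resp. $\mathcal{G}_2\cup\{M\}$) is non-scalar on $\psiyts$. Calling it $\Gamma^{(b)}_{y,t,s}$ (resp. $\Gamma^{(c)}_{y,t,s}$, taking any fixed generator of $\mathcal{G}_1$ resp. $\kappa\mathcal{G}_2\cup\{M\}$ if all are scalars), pulling its exponential to the front, applying the remaining scalar factors to $\psiyts$, and collecting the phases into $e^{\i\alpha^{(i)}_{y,t,s}\theta}$ yields the identity.

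For the second part, observe that $M$ is not among the generators considered and that the coefficient of $\Gamma$ is immaterial (it only rescales $\theta$), so it suffices to treat the bare $\Gamma\in\mathcal{G}_1\cup\mathcal{G}_2\cup\{H_0\}$. By the same closed-form computation, $e^{\i\Gamma\theta}$ sends each $\psiyts$ either to a scalar multiple of itself, or to a superposition over the small ``orbit'' of logical states that $\Gamma$ connects to $\psiyts$: for the clock-swap generators $F_i$, odd $G_i$, $Q_i$ the orbit is $\psiyts$ together with the single state whose $s$ or $t$ is shifted by $\pm1$ (with a gate $V_i^{\pm1}$ inserted for $Q_i$, which by \Cref{ass:1} leaves $y$ unchanged); for an even $G_i$ the orbit is $\{\ket{\Psi_{y,i,1}},\ket{\Psi_{y,i+1,1}},\ket{\Psi_{y,i,2}}\}$. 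One checks these orbits lie in $S$ — here it is used that the $G$-generators act only in the proof-reading range $t\le2\absA+1$, where no $V$-gate has been applied, so no $V$-factor mismatch arises — and then evaluates $W$. For a clock-swap the shifted coordinate enters $W$ with coefficient $\pm1$, so $W$ changes by at most $+1$. For the even-$G_i$ orbit the three $W$-values equal $(2\absD-4)\mathrm{HW}(y)+i$, $(2\absD-4)\mathrm{HW}(y)+i+1$, and $(2\absD-4)\mathrm{HW}(y)+i$ (the last one because $s+\delta_{s,1}-2=0$ at both $s=1$ and $s=2$), so from any of the three the others are within $+1$. Finally, for $P_j$ (which flips $y_j$, changing $\mathrm{HW}(y)$ by $\pm1$, at $t=2j$, $s=\absD$) the contribution $\pm(2\absD-4)$ from the Hamming-weight term is exactly cancelled by the simultaneous flip of the sign $(-1)^{\delta_{y_{\lceil t/2\rceil},1}}$ multiplying $(\absD-2)=s+\delta_{s,1}-2$, so $W$ is unchanged; the $H_0$ case is trivial since $H_0$ is diagonal. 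Hence $e^{\i\Gamma\theta}\psiyts$ is supported on logical states of $W$-value at most $W(y,t,s)+1$.

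The delicate step — which I expect to be the main obstacle — is the $W$-bookkeeping, above all the $P_j$ case, where the coefficient $2\absD-4$ of $\mathrm{HW}(y)$ in the definition of $W$ is chosen exactly so as to cancel the sign-flipped $(\absD-2)$ term, and where one must track the $\delta_{s,1}$ correction carefully at the endpoints $s\in\{1,2,\absD\}$ and the index $\lceil t/2\rceil$ near $t=2j$ so that the clock steps and the even-$G_i$ three-cycle really change $W$ by at most $1$. A secondary obstacle is the exhaustive (but elementary) verification in the first part that the clock-controls of all generator families are genuinely pairwise incompatible on every $\psiyts\in S$, which truly relies on the constraint $(y,t,s)\in I_S$: without it one could have $s\ge3$ together with $t$ in the range of some $Q$-gate, and then two generators would act non-trivially at once.
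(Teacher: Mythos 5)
Your proposal is correct and takes essentially the same route as the paper: for the first claim, commutativity of the generators within $H_b$ (resp.\ $H_c$) plus a finite check that at most one generator is non-scalar on each $\psiyts$, and for the second claim an exhaustive tabulation of the $W$-values of the states each generator connects, including the key cancellation of $\pm(2\absD-4)$ against the sign-flipped $(\absD-2)$ term in the $P_j$ case. One small caution: some of your ``mutual exclusivity'' summaries are literally false across the two groups (e.g.\ $F_{\absD-1}$ and $P_j$ can both trigger at $s=\absD$, and $G_{2\absA}$ and $Q_1$ both at $t=2\absA+1$) --- the actual separation comes from the parity split into $\mathcal{G}_1$ versus $\mathcal{G}_2$, which your case-by-case verification would of course uncover.
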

\begin{proof}
    The first claim will follow if
    every $\psiyts$
    is an eigenvector of all but (at most one) generator $\Gamma^{(i)}_{y,t,s}$ comprising $H_i$.
    To see why, define $H^{(i)}_{y,t,s}\coloneqq H_i-\Gamma^{(i)}_{y,t,s}$.
    Then
	\begin{align}
	e^{i\theta H_i }\psiyts&=e^{\i\theta\left(\Gamma^{(i)}_{y,t,s}+H^{(i)}_{y,t,s}\right) }\psiyts
	=e^{\i\theta \Gamma^{(i)}_{y,t,s}}e^{\i\theta H^{(i)}_{y,t,s}} \psiyts
	=e^{\i\theta \Gamma^{(i)}_{y,t,s}}e^{\i \theta \alpha^{(i)}_{y,t,s}}\psiyts,
	\end{align}
	where $ \alpha^{(i)}_{y,t,s}$ is the corresponding eigenvalue of $H^{(i)}_{y,t,s}\coloneqq H_i-\Gamma^{(i)}_{y,t,s}$. The second step uses that all generators in $H_i$ commute with each other restricted to states where $C$ and $D$ are in valid logical time states, which one can verify by direct calculation.

 The second claim of the lemma follows if every generator maps only between states $\psiyts$ and $\ket{\Psi_{y',t',s'}}$ with $|W(y,t,s)-W(y',t',s')|\leq1$.
To obtain these claims, we first list all non-trivial generator transitions of $H_b$, where one can transition between various states $\psiyts$ listed below. For example, in \Cref{eqn:84} and \Cref{eqn:85}, $F_i$ can map $\ket{y}_A\ket{0}_B\ket{\wt{2j}}_C\ket{\wt i}_D$ to $\ket{y}_A\ket{0}_B\ket{\wt{2j}}_C\ket{\wt{i+1}}_D$ and vice versa. To the right of each of these states, we list the value of the $W$-function for that state (which, recall, is only a function of $(y,t,s)$ in $\ket{y}_A\ket{0}_B\ket{\wt t}_C\ket{\wt s}_D$):
	\begin{itemize}
		\item ($F_i$, $ i\in\{3,5\dots,\absD-1\}$), $\forall j\in [\absA]$, $y \in \{0,1 \}^{\absA}$
		\begin{align}
		\ket{y}_A\ket{0}_B\ket{\wt{2j}}_C\ket{\wt i}_D:\quad& W=(2\absD-4)\mathrm{HW}(y)+2j+(-1)^{\delta_{y_{\lceil t/2\rceil},1}}(i-2)\label{eqn:84}\\
		\ket{y}_A\ket{0}_B\ket{\wt{2j}}_C\ket{\wt{i+1}}_D:\quad& W=(2\absD-4)\mathrm{HW}(y)+2j+(-1)^{\delta_{y_{\lceil t/2\rceil},1}}(i-1)\label{eqn:85}
		\end{align}
		\item ($G_i$, $i\in\{2,4\dots,2\absA\}$), $\forall y \in \{0,1 \}^{\absA}$
		\begin{align}
		\ket{y}_A\ket{0}_B\ket{\wt i}_C\ket{\wt 1}_D:\quad& W=(2\absD-4)\mathrm{HW}(y)+i\\
		\ket{y}_A\ket{0}_B\ket{\wt{i+1}}_C\ket{\wt 1}_D:\quad& W=(2\absD-4)\mathrm{HW}(y)+i+1\\
		\ket{y}_A\ket{0}_B\ket{\wt i}_C\ket{\wt 2}_D:\quad& W=(2\absD-4)\mathrm{HW}(y)+i
		\end{align}
		\item ($Q_i$, $i\in\{2,4,\dots,L\}$), $\forall y \in \{0,1 \}^{\absA}$
		\begin{align}
		V_{i-1}\cdots V_1\ket{y}_A\ket{0}_B\ket{\wt{2\absA+1+i}}_C\ket{\wt 1}_D:\quad& W=(2\absD-4)\mathrm{HW}(y)+2\absA+1+i\\
		V_{i}\cdots V_1\ket{y}_A\ket{0}_B\ket{\wt{2\absA+2+i}}_C\ket{\wt 1}_D:\quad& W=(2\absD-4)\mathrm{HW}(y)+2\absA+2+i
		\end{align}
		We note that, by \Cref{ass:1}, since $V_i$ can only be controlled via register $A$ (as opposed to acting on $A$ as a target register), it cannot change the string $y$ in $A$. Thus, $W$ is only affected via the change on the $C$ register.
		\item ($H_0$), $\forall y \in \{0,1 \}^{\absA}$
		\begin{align}
		\ket{y}_A\ket{0}_B\ket{\wt 1}_C\ket{\wt 1}_D:\quad& W=(2\absD-4)\mathrm{HW}(y)+1
		\end{align}
	\end{itemize}

	In all cases above, the change in $W$ is at most $1$, every logical state $\psiyts$ in $S$ is appears in precisely one set, and $H_0$ always acts trivially, proving the lemma for $H_b$. Repeating this approach for $H_c$ yields a similar conclusion:
	\begin{itemize}
		\item ($F_i$, $i\in\{2,4\dots,\absD-2\}$), $\forall j\in [\absA]$, $y \in \{0,1 \}^{\absA}$
		\begin{align}
		\ket{y}_A\ket{0}_B\ket{\wt{2j}}_C\ket{\wt i}_D:\quad& W=(2\absD-4)\mathrm{HW}(y)+2j+(-1)^{\delta_{y_{\lceil t/2\rceil},1}}(i-2)\\
		\ket{y}_A\ket{0}_B\ket{\wt{2j}}_C\ket{\wt{i+1}}_D:\quad& W=(2\absD-4)\mathrm{HW}(y)+2j+(-1)^{\delta_{y_{\lceil t/2\rceil},1}}(i-1)
		\end{align}
		\item ($G_i$, $i\in\{1,3\dots,2\absA-1\}$), $\forall y \in \{0,1 \}^{\absA}$
		\begin{align}
		\ket{y}_A\ket{0}_B\ket{\wt i}_C\ket{\wt 1}_D:\quad& W=(2\absD-4)\mathrm{HW}(y)+i\\
		\ket{y}_A\ket{0}_B\ket{\wt{i+1}}_C\ket{\wt 1}_D:\quad& W=(2\absD-4)\mathrm{HW}(y)+i+1
		\end{align}
		\item ($P_i$, $i\in\{1,\dots,\absA\}$), $\forall y \in \{\{0,1 \}^{\absA}|y_i=0\}$
		\begin{align}
		\ket{y}_A\ket{0}_B\ket{\wt{2i}}_C\ket{\wt \absD}_D:\quad& W=(2\absD-4)\mathrm{HW}(y)+2i+(\absD-2)\\
		\ket{y\oplus e_i}_A\ket{0}_B\ket{\wt{2i}}_C\ket{\wt \absD}_D:\quad& W=(2\absD-4)(\mathrm{HW}(y)+1)+2i-(\absD-2)
		\end{align}
		\item ($Q_i$, $i\in\{1,3,\dots,L-1\}$), $\forall y \in \{0,1 \}^{\absA}$
		\begin{align}
		V_{i-1}\cdots V_1\ket{y}_A\ket{0}_B\ket{\wt{2\absA+1+i}}_C\ket{\wt 1}_D:\quad& W=(2\absD-4)\mathrm{HW}(y)+2\absA+1+i\\
		V_{i}\cdots V_1\ket{y}_A\ket{0}_B\ket{\wt{2\absA+2+i}}_C\ket{\wt 1}_D:\quad& W=(2\absD-4)\mathrm{HW}(y)+2\absA+2+i
		\end{align}
		\item ($M$), $\forall y \in \{0,1 \}^{\absA}$
		\begin{align}
		V_{L}\cdots V_1\ket{y}_A\ket{0}_B\ket{\wt \absC}_C\ket{\wt 1}_D:\quad& W=(2\absD-4)\mathrm{HW}(y)+\absC\\
		Z_{B_1}V_{L}\cdots V_1\ket{y}_A\ket{0}_B\ket{\wt \absC}_C\ket{\wt 1}_D:\quad& W=(2\absD-4)\mathrm{HW}(y)+\absC
		\end{align}
	\end{itemize}
Note that the first claim of the lemma indeed holds for $M$, but since $Z_{B_1}V_{L}\cdots V_1\ket{y}_A\ket{0}_B\ket{\wt \absC}_C\ket{\wt 1}_D\not\in \mathrm{span}(S)$, the second claim does not (and thus why we write $\Gamma\in \mathcal G_1 \cup \mathcal G_2\cup \set{H_0}$ in the second claim).
\end{proof}


\subsubsection{Completeness}
\label{sec:QAOAcompleteness}
In the YES case, there exists a sequence of gates with proof $y\in\set{0,1}^{\absA}$ of Hamming weight at most $g$ accepted with probability at least $1-\epsilon_{Q}$ by the verifier circuit $V$.
We use shorthand notation $(y)_{j}=(y_1,\dots y_{j-1},0,\dots,0)$ to indicate the partially written proof string.
Also, $\exp(\i\theta H_i)\sim \exp(\i\theta \Gamma)$ indicates which generator $\Gamma$ in $H_i$ performs the non-trivial operation (as per \Cref{lem:nontrivial}, claim 1). The honest prover proceeds as follows:
\begin{itemize}
	\item (Prepare classical proof) Prepare state (up to global phase)
    $\ket{\psi_0}\coloneqq  \ket{y}_A\ket{0}_B\ket{\wt{2\absA+1}}_C\ket{\wto}_D$ as follows. Starting with $\gsb=\ket{(y)_0,0,\wt 1,\wt 1}_{ABCD}$:
	\begin{enumerate}
		\item Set $j=1$.
		\item  Apply $\exp(\i\frac{\pi}{2\kappa}H_c)\sim \exp(\i\frac{\pi}{2}G_{2j-1})$ to map $\ket{\wt {2j-1}}_C\rightarrow \ket{\wt {2j}}_C$. This maps
\begin{align}
    \ket{(y)_{j-1},0,\wt{2j-1},\wt{1}}_{ABCD} \quad \mapsto \quad \ket{(y)_{j-1},0,\wt{2j},\wt{1}}_{ABCD}.
\end{align}
		\item If $y_j=1$ then
		\begin{itemize}
			\item Apply $\exp(\i\frac{2\pi}{3}H_b)\sim \exp(\i\frac{2\pi}{3}G_{2j})$, to map $\ket{\wt {1}}_D\rightarrow \ket{\wt 2}_D$, i.e.
\begin{align}
    \ket{(y)_{j-1},0,\wt{2j},\wt{1}}_{ABCD} \quad \mapsto \quad \ket{(y)_{j-1},0,\wt{2j},\wt{2}}_{ABCD}.
\end{align}
			\item Apply, in order, $\exp(\i\frac{\pi}{2\kappa}H_c)\sim \exp(\i\frac{\pi}{2}F_{2})$, $\exp(\i\frac{\pi}{2}H_b)\sim \exp(\i\frac{\pi}{2}F_{3})$, \ldots, $\exp(\i\frac{\pi}{2}H_b)\sim \exp(\i\frac{\pi}{2}F_{\absD-1})$, in total $\absD-2$ operations . This maps $\ket{\wt {2}}_D\rightarrow \ket{\wt {\absD}}_D$, i.e.
\begin{align}
    \ket{(y)_{j-1},0,\wt{2j},\wt{2}}_{ABCD} \quad \mapsto \quad \ket{(y)_{j-1},0,\wt{2j},\wt{\absD}}_{ABCD}.
\end{align}
			\item Apply $\exp(\i\frac{\pi}{2\kappa}H_c)\sim \exp(\i\frac{\pi}{2}P_j)$, to map $\ket{0}_{A_{j}}\rightarrow \ket{1}_{A_{j}}$, i.e.
\begin{align}
    \ket{(y)_{j-1},0,\wt{2j},\wt{\absD}}_{ABCD} \quad \rightarrow \quad \ket{(y)_{j},0,\wt{2j},\wt{\absD}}_{ABCD}.
\end{align}
			\item 			Apply, in order, $\exp(\i\frac{\pi}{2}H_b)\sim \exp(\i\frac{\pi}{2}F_{\absD-1})$,$\exp(\i\frac{\pi}{2\kappa}H_c)\sim \exp(\i\frac{\pi}{2}F_{\absD-2})$ \ldots, $\exp(\i\frac{\pi}{2\kappa}H_c)\sim \exp(\i\frac{\pi}{2}F_{2})$, in total $\absD-2$ operations. This maps $\ket{\wt {\absD}}_D\rightarrow \ket{\wt {2}}_D$, i.e.
\begin{align}
    \ket{(y)_{j},0,\wt{2j},\wt{\absD}}_{ABCD} \quad \mapsto \quad \ket{(y)_{j},0,\wt{2j},\wt{2}}_{ABCD}.
\end{align}
			\item 			 Apply $\exp(\i\frac{2\pi}{3}H_b)\sim \exp(\i\frac{2\pi}{3}G_{2j})$, to map $\ket{\wt {2}}_D\rightarrow \ket{\wt 1}_D$  and $\ket{\wt {2j}}_C\rightarrow \ket{\wt {2j+1}}_C$, i.e.
\begin{align}
    \ket{(y)_{j},0,\wt{2j},\wt{2}}_{ABCD} \quad \mapsto \quad \ket{(y)_{j},0,\wt{2j+1},\wt{1}}_{ABCD}
\end{align}
		\end{itemize}
		\item else
		\begin{itemize}
			\item 			Apply $\exp(\i\frac{\pi}{3}H_b)\sim \exp(\i\frac{\pi}{3}G_{2j})$, to map $\ket{\wt {2j}}_C\mapsto \ket{\wt {2j+1}}_C$
, i.e.
\begin{align}
    \ket{(y)_{j-1},0,\wt{2j},\wt{1}}_{ABCD} \quad \mapsto \quad \ket{(y)_{j},0,\wt{2j+1},\wt{1}}_{ABCD}.
\end{align}
		\end{itemize}
		\item Set $j=j+1$.
		\item If $j<\abs{A}$, return to line 2 above.
	\end{enumerate}
	
	This process applies $2g(\absD-1)+ 2\absA$ gates.
	
	\item (Simulate verifier) Apply in order, $\exp(\i\frac{\pi}{2\kappa}H_c)\sim\exp(\i\frac{\pi}{2}Q_{1})$ , $\exp(\i\frac{\pi}{2}H_b)\sim\exp(\i\frac{\pi}{2}Q_{2})$, \ldots,
    $\exp(\i\frac{\pi}{2}H_b)\sim\exp(\i\frac{\pi}{2}Q_{L})$ for a total $L$ gates.
    This implements the verification circuit, i.e.\
\begin{align}
    \ket{y,0,\wt{2\absA+1},\wt{1}}_{ABCD} \quad \rightarrow \quad\ket{\Psi_L}\coloneqq V_L\cdots V_1\ket{y,0,\wt{\absC},\wt{1}}_{ABCD}.
\end{align}
\end{itemize}
Since $V$ accepts proof $y$ of the QMSA instance with probability at least $1-\epsilon_Q$, we conclude using \Cref{eqn:close} that
\begin{align}
\bra{\Psi_L}H_c\ket{\psi_L}\leq \bra{\Psi_L}M\ket{\Psi_L}+\frac{1}{12} \leq 1-(1-\epsilon_Q)+\frac{1}{12}\leq \frac{1}{3}
\end{align}
as desired.
The number of Hamiltonians applied in this case is $m=g(2\absD-2)+2\absA+L=g(2\absD-2)+\absC-1$, also as desired.

\subsubsection{Soundness}
\label{sec:QAOAsoundness}
In the proof of \Cref{thm:QCMA} for \VQA, we showed that all time evolutions with the constructed generators keep us in our desired logical computation space $S$.
In contrast, for our \QAOA\ construction, the $M$ operator (embedded in $H_c$) does \emph{not} necessarily preserve the space $\textup{span}(S)$ (see Claim 2 of \Cref{lem:nontrivial}). We thus first require the following lemma, which allows us to ``round'' our intermediate state back to one in $S$ for our analysis and also establishes $W(y,t,s)$ as a proper lower bound for the number of gate applications required to reach the states in $S$.

 \begin{lemma}[Rounding lemma]\label{lem:gate_decay}
    We fix $\epsilon_Q$ as in \cref{ass:expprec}.
	In the NO case, after $\zeta\geq 1$ applications of $H_c$ and $H_b$, the state
	\begin{align}
	\ket{\Psi_\zeta}\in\Gamma_\zeta\coloneqq \left\{\prod_{i=1}^\zeta e^{i H_i \theta_i}\gsb \mid H_i\in\{H_b,H_c\}, \theta\in \RR^\zeta\right\}
	\end{align}
	 is $\epsilon\leq 4\zeta \sqrt{\epsilon_Q}$ close to the span of $S$, i.e.,
	\begin{align}\label{eqn:norm}
	\forall \ket{\Psi_\zeta}\in \Gamma_\zeta,~\exists \ket{\Psi'_\zeta}\in \mathrm{span}(S):
	\norm{\ketbra{\Psi_\zeta}{\Psi_\zeta}-\ketbra{\Psi'_\zeta}{\Psi'_\zeta}}_{\mathrm{tr}}\leq 4 \zeta \sqrt{\epsilon_Q}
	\end{align}
	and it additionally holds that
	\begin{align}
	\ket{\Psi'_{\zeta}}=
	\sum_{\substack{(y,t,s)\in I_S\\
			W(y,t,s)\leq \zeta+1}
	}a_{y,t,s}\psiyts.
	\end{align}
\end{lemma}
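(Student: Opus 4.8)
The plan is to prove the lemma by induction on $\zeta$, establishing the slightly stronger \emph{Euclidean} statement: up to a global phase there is a \emph{unit} vector $\ket{\Psi'_\zeta}\in\mathrm{span}(S)$ supported on states $\ket{\Psi_{y,t,s}}$ with $W(y,t,s)\le\zeta+1$ and with $\norm{\ket{\Psi_\zeta}-\ket{\Psi'_\zeta}}\le 2\zeta\sqrt{\epsilon_Q}$; the trace-norm bound of the lemma then follows from the elementary inequality $\norm{\ketbra uu-\ketbra vv}_{\mathrm{tr}}\le 2\norm{\ket u-\ket v}$ for unit vectors (after choosing the phases so that $\braket uv\ge 0$). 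The base case $\zeta=0$ is immediate, since $\gsb=\ket{0,0,\wt 1,\wt 1}=\ket{\Psi_{0,1,1}}\in S$ with $W(0,1,1)=1$; the induction will be run in the regime $\zeta\le m'$ that is relevant to the soundness analysis of \Cref{thm:QCMA_QAOA}.

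For the inductive step, write $\ket{\Psi_{\zeta+1}}=e^{iH\theta_{\zeta+1}}\ket{\Psi_\zeta}$ with $H\in\{H_b,H_c\}$ and let $\ket{\Psi'_\zeta}$ be the unit vector from the inductive hypothesis. Since $e^{iH\theta}$ is unitary it preserves Euclidean distance, so it suffices to produce a unit vector $\ket{\Psi'_{\zeta+1}}\in\mathrm{span}(S)$ supported on $W(y,t,s)\le\zeta+2$ with $\norm{e^{iH\theta}\ket{\Psi'_\zeta}-\ket{\Psi'_{\zeta+1}}}\le 2\sqrt{\epsilon_Q}$. If $H=H_b$ this is free: every generator of $H_b$ lies in $\mathcal G_1\cup\set{H_0}$, on each logical state $\ket{\Psi_{y,t,s}}$ exactly one of them acts non-trivially (Claim~1 of \Cref{lem:nontrivial}), and by Claim~2 that action stays in $\mathrm{span}(S)$ and raises $W$ by at most $1$; so one takes $\ket{\Psi'_{\zeta+1}}\coloneqq e^{iH_b\theta}\ket{\Psi'_\zeta}$, at zero rounding cost. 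The difficulty is entirely in $H=H_c$, since the term $M$ embedded in $H_c$ can carry a logical state $\ket{\Psi_{y,\absC,1}}$ out of $\mathrm{span}(S)$ (Claim~2 of \Cref{lem:nontrivial}).

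For $H=H_c$, use that the terms of $H_c$ commute, so $e^{iH_c\theta}=e^{iM\theta}\,e^{i\kappa\theta\sum_{\Gamma\in\mathcal G_2}\Gamma}$, and \emph{define} $\ket{\Psi'_{\zeta+1}}\coloneqq e^{i\theta}\,e^{i\kappa\theta\sum_{\Gamma\in\mathcal G_2}\Gamma}\ket{\Psi'_\zeta}$, i.e.\ $e^{iH_c\theta}$ with the factor $e^{iM\theta}$ replaced by the scalar $e^{i\theta}$. This is manifestly unitary, so $\ket{\Psi'_{\zeta+1}}$ is a unit vector; and exactly as in the $H_b$ case (on each logical state at most one $\mathcal G_2$-generator acts, by Claim~1), $e^{i\kappa\theta\sum_{\Gamma\in\mathcal G_2}\Gamma}$ keeps $\ket{\Psi'_\zeta}$ in $\mathrm{span}(S)$ with $W$ raised by at most $1$, so $\ket{\Psi'_{\zeta+1}}$ has the required $W$-support. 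Writing $P\coloneqq\ketbra{1}{1}_{B_1}\otimes\ketbra{1}{1}_{C_{\absC}}$, so that $M=I-P$ and $e^{iM\theta}-e^{i\theta}I=(1-e^{i\theta})P$, the rounding error is
\[
\norm{e^{iH_c\theta}\ket{\Psi'_\zeta}-\ket{\Psi'_{\zeta+1}}}=\abs{1-e^{i\theta}}\cdot\norm{P\ket{\widetilde\Psi}}\le 2\norm{P\ket{\widetilde\Psi}},\qquad \ket{\widetilde\Psi}\coloneqq e^{i\kappa\theta\sum_{\Gamma\in\mathcal G_2}\Gamma}\ket{\Psi'_\zeta}\in\mathrm{span}(S).
\]

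The crux — and the one place where one must argue with $\ell_2$/operator norms rather than term by term — is bounding $\norm{P\ket{\widetilde\Psi}}$. The projector $P$ annihilates every logical state except those at time $\absC$ (where, since $\absC>2\absA$, the only admissible second-clock index is $s=1$), namely $\ket{\Psi_{y,\absC,1}}=V_L\cdots V_1\ket{y}_A\ket{0}_B\ket{\wt\absC}_C\ket{\wt 1}_D$; as $\ket{\widetilde\Psi}$ is supported on $W\le\zeta+2$, the contributing $y$ satisfy $W(y,\absC,1)=(2\absD-4)\mathrm{HW}(y)+\absC\le\zeta+2$, and with $\zeta\le m'=g'(2\absD-4)+\absC-1$ and $2\absD-4\ge2$ this forces $\mathrm{HW}(y)\le g'$ by integrality. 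By \Cref{ass:1} the verifier never writes to register $A$, so the states $V_L\cdots V_1\ket{y}_A\ket{0}_B$ are pairwise orthogonal across $y$, hence so are the $\ketbra{1}{1}_{B_1}\ket{\Psi_{y,\absC,1}}$, with $\norm{\ketbra{1}{1}_{B_1}\ket{\Psi_{y,\absC,1}}}^2=\Pr[V\text{ accepts }y]$; writing $\ket{\widetilde\Psi}=\sum_{y,t,s}\widetilde a_{y,t,s}\ket{\Psi_{y,t,s}}$ with $\sum\abs{\widetilde a_{y,t,s}}^2=1$, the NO case together with \Cref{ass:expprec} then gives
\[
\norm{P\ket{\widetilde\Psi}}^2=\sum_{y:\,\mathrm{HW}(y)\le g'}\abs{\widetilde a_{y,\absC,1}}^2\,\Pr[V\text{ accepts }y]\le\epsilon_Q.
\]
Hence the rounding error is at most $2\sqrt{\epsilon_Q}$, the induction closes with $\norm{\ket{\Psi_\zeta}-\ket{\Psi'_\zeta}}\le 2\zeta\sqrt{\epsilon_Q}$ (the $W$-support being propagated at each step), and the conversion to trace norm yields the stated $4\zeta\sqrt{\epsilon_Q}$. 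I expect the main obstacle to be precisely this last estimate — the orthogonality-plus-Hamming-weight argument that keeps $\norm{P\ket{\widetilde\Psi}}^2\le\epsilon_Q$ uniformly over the (possibly exponentially many) admissible $y$, together with the observation that replacing $e^{iM\theta}$ by the scalar $e^{i\theta}$ is \emph{literally} a unitary so that no renormalization is incurred — whereas the $H_b$ case and all the bookkeeping follow routinely from \Cref{lem:nontrivial}.
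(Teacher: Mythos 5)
Your proof is correct and follows essentially the same route as the paper's: an induction in which the rounded state is obtained by deleting the $M$ term from $H_c$, with the per-step error controlled by the norm of the projection onto accepting time-$\absC$ components and the NO-case acceptance probability. The only substantive differences are that you do the bookkeeping in Euclidean norm and convert to trace norm at the end, and that you make explicit a step the paper leaves implicit — namely that the $W$-support constraint together with $\zeta\le m'$ forces $\mathrm{HW}(y)\le g'$ for every $y$ contributing at time $\absC$, which is exactly what licenses the bound $\bra{\Psi_{y,\absC,1}}\ketbra{1}{1}_{B_1}\ket{\Psi_{y,\absC,1}}\le\epsilon_Q$ (the paper's displayed sum over all $y\in\set{0,1}^{\absA}$ silently relies on this restriction, since strings of Hamming weight above $g'$ may be accepted in a NO instance).
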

This lemma is needed because the time evolution of the observable $M$ (in $H_c$) may leave the sub-space $\Span(S)$.
The rounding step is possible, because in the NO case, the state in the $B_1$ register, after applying the circuit $V$ ($\tilde s=|D|$), is always close to $\ket{0}_{B_1}$ (using \cref{ass:expprec}). This means that in NO instances, the evolution of $M$ only adds to a global phase. 
\begin{proof}
	For our construction we use
	\begin{align}
	\ket{\Psi'_{\zeta+1}}=e^{\i \theta_{\zeta+1} (H_{\zeta+1}-M\delta_{H_{\zeta+1},H_c})}\ket{\Psi'_{\zeta}},
	\end{align}
	i.e.\ a similar ansatz as for $\ket{\Psi_\zeta}$ but without the $M$ generator in $H_c$. 
		We prove the lemma by induction.
        The lemma statement holds trivially for the base case $\zeta=0$, since
		\begin{align}
			\ket{\Psi_{\zeta=0}}&=\ket{\Psi'_{\zeta=0}}=\ket{0,0,\wt 1,\wt 1}_{ABCD}
			=\sum_{\substack{(y,t,s)\in I_S\\
					W(y,t,s)\leq 1}
			}a_{y,t,s}\psiyts
		\end{align}
		with $W(0,1,1)=1$.
		
		Induction step: For the norm inequality \eqref{eqn:norm}, only $M$ maps states outside of $S$, meaning we only have to consider the action of $H_c$. Then,
		\begin{align*}
		\norm{\ketbra{\Psi_{\zeta+1}}{\Psi_{\zeta+1}}-\ketbra{\Psi'_{\zeta+1}}{\Psi_{\zeta+1}'}}_\mathrm{tr}
		&= \norm{e^{\i H_c\theta}\ketbra{\Psi_{\zeta}}{\Psi_{\zeta}}e^{-\i H_c\theta}-e^{\i(H_c-M)\theta}\ketbra{\Psi'_{\zeta}}{\Psi_{\zeta}'}e^{-\i(H_c-M)\theta}}_\mathrm{tr}\\
		&= \norm{\ketbra{\Psi_{\zeta}}{\Psi_{\zeta}}-e^{-\i M\theta}\ketbra{\Psi'_{\zeta}}{\Psi_{\zeta}'}e^{\i M\theta}}_\mathrm{tr}\\
		&\leq \norm{\ketbra{\Psi'_{\zeta}}{\Psi'_{\zeta}}-e^{-\i M\theta}\ketbra{\Psi'_{\zeta}}{\Psi_{\zeta}'}e^{\i M\theta}}_\mathrm{tr} +\norm{\ketbra{\Psi'_{\zeta}}{\Psi_{\zeta}'}-\ketbra{\Psi_{\zeta}}{\Psi_{\zeta}}}_\mathrm{tr}\\
		&\leq \norm{\ketbra{\Psi'_{\zeta}}{\Psi'_{\zeta}}-e^{-\i M\theta}\ketbra{\Psi'_{\zeta}}{\Psi_{\zeta}'}e^{\i M\theta}}_\mathrm{tr}+4\zeta \epsilon_{Q},
		\end{align*}
		where the second statement holds since $[H_c,M]=0$ and by the unitary invariance of the trace norm, the third by the triangle inequality, and the fourth by the induction hypothesis. Now, for $M$ we have the following non-trivial action:
		\begin{align}
		e^{-i\theta}e^{iM\theta}\ket{\Psi'_\zeta}-\ket{\Psi'_\zeta}&=\sum_{(y,t,s)\in I_S}a_{y,t,s}(e^{\i (M-1)\theta }-1)\psiyts\\
		&=
		\sum_{y\in\{0,1\}^{\absA}}a_{y,\absC,1}(e^{-\i\theta\ketbra{1}{1}_{B_1}}-1)\ket{\Psi_{y,\absC,1}}\\
		&=(e^{-i\theta}-1)\sum_{y\in\{0,1\}^{\absA}}a_{y,\absC,1}\ketbra{1}{1}_{B_1}\ket{\Psi_{y,\absC,1}},
		\end{align}
		where we used that $M$ only acts non-trivially for $t=\absC$.
		This means
		\begin{align}
		\norm{\ketbra{\Psi'_{\zeta}}{\Psi'_{\zeta}}-e^{-\i M\theta}\ketbra{\Psi'_{\zeta}}{\Psi_{\zeta}'}e^{\i M\theta}}_\mathrm{tr}&\leq 2
		\norm{e^{-\i\theta}e^{iM\theta}\ket{\Psi'_\zeta}-\ket{\Psi'_\zeta}}_2\\
		&\leq 4\sqrt{ \sum_{y\in\{0,1\}^{\absA}}|a_{y,\absC,1}|^2\bra{\Psi_{y,\absC,1}}\ketbra{1}{1}_{B_1}\ket{\Psi_{y,\absC,1}}}\\
		&\leq 4 \sqrt{\sum_{y\in\{0,1\}^{\absA}}|a_{y,\absC,1}|^2 \epsilon_Q}\leq 4\sqrt{\epsilon_Q},
		\end{align}
		where the first line is a known norm inequality\footnote{%
        due to
        $\norm{\ketbra\psi\psi - \ketbra\phi\phi}_{\tr} = 2\sqrt{1-\abs{\braket\psi\phi}^2}$
        and
        $\norm{\ket\psi-\ket\phi}_2 = \sqrt{2-2\Re(\braket{\psi}{\phi})}$ for all $\ket\psi,\ket\phi \in \H$
        and $\sqrt{1-x^2} \leq \sqrt{2-2 x} \quad \forall x\in [0,1]$.
        }
        and we used that $\bra{\Psi_{y,\absC,1}}\ketbra{1}{1}_{B_1}\ket{\Psi_{y,\absC,1}}\leq\epsilon_Q$ is the acceptance probability of a QMSA NO instance. This shows the first claim of the lemma. For the second claim, a similar induction setup, coupled with \Cref{lem:nontrivial}, yields

		\begin{align}
		\ket{\Psi'_{\zeta+1}}&=e^{\i \theta_{\zeta+1} (H_{\zeta+1}-M\delta_{H_{\zeta+1},H_c})}\ket{\Psi'_{\zeta}}\\
		&=\sum_{\substack{(y,t,s)\in I_S\\
				W(y,t,s)\leq \zeta+1}
		}a_{y,t,s} e^{\i \theta_{\zeta+1}(\Gamma^{({\zeta+1})}_{y,t,s}+\alpha^{({\zeta+1})}_{y,t,s})}\psiyts\\
		&=\sum_{\substack{(y,t,s)\in I_S\\
				W(y,t,s)\leq \zeta+1}
		}a_{y,t,s}\sum_{\substack{(y',t',s')\in I_S\\
				W(y',t',s')\leq W(y,t,s)+1}
		}b_{y',t',s'}^{(y,t,s)}\ket{\Psi_{y',t',s'}}\\
		&=\sum_{\substack{(y,t,s)\in I_S\\
				W(y,t,s)\leq \zeta+2}}
		a'_{y,t,s}\psiyts,
		\end{align}
		where the second and third statements follow from the first and second claims of \Cref{lem:nontrivial}, respectively, and the last statement just recombines the $a$ and $b$ indices into new indices $a'$.
\end{proof}
We are finally ready to prove soundness. For this, we need to show that in the NO case, all sequences of $\zeta\leq m'=g'(2\absD-4)+\absC-1$ gates produce  cost function value $\braket{\Psi_\zeta}{H_c|\Psi_\zeta}\geq\frac{2}{3}$.
This follows since for all $\zeta\leq m'$,
\begin{align}
	\bra{\Psi_{\zeta}}H_c\ket{\Psi_\zeta}
	&\geq \bra{\Psi_{\zeta}}M\ket{\Psi_{\zeta}}-\frac{1}{12}\\
	&\geq \bra{\Psi'_{\zeta}}M\ket{\Psi'_{\zeta}} -|\mathrm{Tr}[M(\ketbra{\Psi_\zeta}{\Psi_\zeta}-\ketbra{\Psi'_\zeta}{\Psi'_\zeta})]|-\frac{1}{12}\\
	&\geq\bra{\Psi'_{\zeta}}M\ket{\Psi'_{\zeta}} -\norm{M}_{\infty}\norm{\ketbra{\Psi_\zeta}{\Psi_\zeta}-\ketbra{\Psi'_\zeta}{\Psi'_\zeta}}_{\mathrm{tr}}-\frac{1}{12}\\
	&\geq \bra{\Psi'_{\zeta}}M\ket{\Psi'_{\zeta}} -4m'\sqrt{\epsilon_{Q}}-\frac{1}{12}\\
	&\geq \bra{\Psi'_{\zeta}}M\ket{\Psi'_{\zeta}}-\frac{1}{6}
\end{align}
where the first statement follows from \Cref{eqn:close}, the third by H\"{o}lder's inequality, the fourth by \Cref{lem:gate_decay}, and the last since $\sqrt{\epsilon_Q}\leq \frac{1}{48 m'}$.
By \Cref{lem:nontrivial}, we can expand $\ket{\Psi'_{\zeta}}$ in the basis
\begin{align}
\ket{\Psi'_{\zeta}}=
\sum_{\substack{(y,t,s)\in I_S\\
W(y,t,s)\leq m'+1}
}a_{y,t,s}\psiyts
\end{align}
 which gives
\begin{align}
\bra{\Psi'_{\zeta}}M\ket{\Psi'_{\zeta}}
=1- \sum_{\substack{y\in\{0,1\}^{\absA} |\mathrm{HW}(y)\leq g'\\
}
}|a_{y,\absC,1}|^2\bra{\Psi_{y,\absC,1}}\ketbra{1}{1}_{B_1}\ket{\Psi_{y,\absC,1}}\geq 1-\epsilon_Q
\end{align}
as $M$ only acts non-trivial on $t=\absC$ and $W(y,\absC,1)\leq m'+1$ reduces to $\mathrm{HW}(y)\leq g'$, and in the NO case QMSA accepts such a $y$ with at most $\epsilon_Q$ probability.
Combining the two results we get
\begin{align}
\bra{\Psi_{\zeta}}H_c\ket{\Psi_\zeta}\geq 1-\epsilon_Q-\frac16> \frac{2}{3}
\end{align}
 which shows soundness for all gates-sequences of length $\zeta\leq m'$.

\subsubsection{Hardness ratio}
\label{sec:QAOAratio}
The analysis is essentially identical to that for \VQA, so we sketch it briefly. Since we set $\absD=\lceil L^{1+\delta}\rceil$, we have that in ratio

\begin{align}
\frac{m'}{m}=\frac{g'(2\absD-4)+\absC-1}{g(2\absD-2)+\absC-1},
\end{align}
the dominant term is again $\absD$. Thus, $m'/m\approx g'/g\geq O((N')^{1-\epsilon'})$, for $N'$ the encoding size of the QMSA instance, and for any desired $\epsilon'>0$. Since the encoding size of $H_b$ and $H_c$ can also be seen to scale as $O((L')^{1+\delta})$ (recall $L'$ the number of gates in the original QMSA circuit $V'$), we can apply \Cref{eqn:bounds} and the surrounding approximation ratio analysis from \VQA\ to argue again that $N\approx O((N')^{1+\epsilon'})$, for $N$ the encoding size of our \QAOA\ instance with logarithmic overhead to satisfy \cref{ass:expprec} and only $O(L)$ overhead for the changes performed to the gate set.
Thus, for any desired $\epsilon>0$, we may choose $\epsilon'>0$ and $\delta>0$ so that $g'/g\geq (N')^{1-\epsilon'}\geq N^{1-\epsilon}$, as desired.

\end{proof}

\section*{Acknowledgements}
We thank Ashley Montanaro for helpful discussions. 
LB and MK are supported by the German Federal Ministry of Education and Research (BMBF) within
the funding program ``Quantum Technologies -- from Basic Research to Market'' via the joint project MANIQU
(grant number 13N15578) and the Deutsche Forschungsgemeinschaft (DFG, German Research Foundation) under the grant number 441423094 within the Emmy Noether Program.
SG was supported by the DFG under grant numbers 450041824 and 432788384, the BMBF within the funding program ``Quantum Technologies -- from Basic Research to Market'' via project PhoQuant (grant number 13N16103), and the project ``PhoQC'' from the programme ``Profilbildung 2020'', an initiative of the Ministry of Culture and Science of the State of Northrhine Westphalia. The sole responsibility for the content of this publication lies with the authors.

\begin{acronym}[POVM]
\acro{ACES}{averaged circuit eigenvalue sampling}
\acro{AGF}{average gate fidelity}
\acro{AP}{Arbeitspaket}

\acro{BOG}{binned outcome generation}

\acro{CNF}{conjunctive normal form}
\acro{CP}{completely positive}
\acro{CPT}{completely positive and trace preserving}
\acro{cs}{computer science}
\acro{CS}{compressed sensing} 
\acro{ctrl-VQE}{ctrl-VQE}

\acro{DAQC}{digital-analog quantum computing}
\acro{DD}{dynamical decoupling}
\acro{DFE}{direct fidelity estimation} 
\acro{DFT}{discrete Fourier transform}
\acro{DM}{dark matter}

\acro{FFT}{fast Fourier transform}

\acro{GST}{gate set tomography}
\acro{GTM}{gate-independent, time-stationary, Markovian}
\acro{GUE}{Gaussian unitary ensemble}

\acro{HOG}{heavy outcome generation}

\acro{irrep}{irreducible representation}

\acro{LDPC}{low density partity check}
\acro{LP}{linear program}

\acro{MAGIC}{magnetic gradient induced coupling}
\acro{MBL}{many-body localization}
\acro{MIP}{mixed integer program}
\acro{ML}{machine learning}
\acro{MLE}{maximum likelihood estimation}
\acro{MPO}{matrix product operator}
\acro{MPS}{matrix product state}
\acro{MS}{M{\o}lmer-S{\o}rensen}
\acro{MSE}{mean squared error}
\acro{MUBs}{mutually unbiased bases} 
\acro{mw}{micro wave}

\acro{NISQ}{noisy and intermediate scale quantum}

\acro{ONB}{orthonormal basis}
\acroplural{ONB}[ONBs]{orthonormal bases}

\acro{POVM}{positive operator valued measure}
\acro{PQC}{parametrized quantum circuit}
\acro{PSD}{positive-semidefinite}
\acro{PSR}{parameter shift rule}
\acro{PVM}{projector-valued measure}

\acro{QAOA}{Quantum Approximate Optimization Algorithm}
\acro{QC}{quantum computation}
\acro{QEC}{quantum error correction}
\acro{QFT}{quantum Fourier transform}
\acro{QM}{quantum mechanics}
\acro{QML}{quantum machine learning}
\acro{QMT}{measurement tomography}
\acro{QPT}{quantum process tomography}
\acro{QPU}{quantum processing unit}
\acro{QUBO}{quadratic binary optimization}

\acro{RB}{randomized benchmarking}
\acro{RBM}{restricted Boltzmann machine}
\acro{RDM}{reduced density matrix}
\acro{rf}{radio frequency}
\acro{RIC}{restricted isometry constant}
\acro{RIP}{restricted isometry property}
\acro{RMSE}{root mean squared error}

\acro{SDP}{semidefinite program}
\acro{SFE}{shadow fidelity estimation}
\acro{SIC}{symmetric, informationally complete}
\acro{SPAM}{state preparation and measurement}
\acro{SPSA}{simultaneous perturbation stochastic approximation}

\acro{TT}{tensor train}
\acro{TM}{Turing machine}
\acro{TV}{total variation}

\acro{VQA}{variational quantum algorithm}

\acro{VQE}{variational quantum eigensolver}

\acro{XEB}{cross-entropy benchmarking}

\end{acronym}

\addcontentsline{toc}{section}{References}
\printbibliography

\appendix

\section{Additional proofs}\label{app:A}

\begin{proof}[Proof of \Cref{cor:consequence}]
    Suppose there exists an algorithm $A$ for computing estimate $m_{\textup{est}} \in [m_{\opt},\allowbreak N^{1-\epsilon}m_{\opt}]$.
    We show how to use $A$ to decide \VQA, yielding QCMA-hardness.
    Specifically, given an instance $\Pi$ of \VQA, run $A$.
    If $A$'s output is less than or equal to $m'$, accept.
    Otherwise, reject.

    To see that this is correct, observe that in the YES case, $m_{\opt}\leq m$.
    Since $m'/m\geq N^{1-\epsilon}$, $A$ outputs estimate $m_{\textup{est}}\leq m'$, from which we conclude $\Pi$ is cannot be a NO instance, and thus must be a YES instance (due to the promise that $\Pi$ is either a YES or NO instance).
    Conversely, in the NO case, $m_{\textup{est}}\geq m_{\opt}>m'$, from which we conclude $\Pi$ is a NO instance.
\end{proof}
\end{document}